\newtheorem{theorem}{Theorem}
\newtheorem{lemma}[theorem]{Lemma}
\newtheorem{claim}[theorem]{Claim}
\newtheorem{definition}[theorem]{Definition}
\newcommand{\ket}[1]{|#1\rangle}
\newcommand{\bra}[1]{\langle#1|}
\newcommand{\Tr}{\mbox{\rm Tr}}
\newcommand{\Es}[1]{\underset{#1}{\textsc{ E}}}
\newcommand{\C}{\ensuremath{\mathbb{C}}}
\newcommand{\beq}{\begin{eqnarray}}
\newcommand{\eeq}{\end{eqnarray}}
\newcommand{\Id}{\ensuremath{\mathop{\rm Id}\nolimits}}
\newcommand{\val}{\textsc{val}}
\newcommand{\valq}{\textsc{val}^{\!*}}
\newcommand{\valplus}{\textsc{val}_{\!+}}
\newcommand{\valqplus}{\textsc{val}_{\!+}^{\!*}}
\newcommand{\eps}{\varepsilon}
\newcommand{\poly}{\textrm{poly}}
\newcommand{\lin}{\mathcal{L}}
\newcommand{\card}[1]{\lvert#1\rvert}
\newcommand{\norm}[1]{\lVert#1\rVert}
\newcommand{\bignorm}[1]{\big\lVert#1\big\rVert}
\newcommand{\normi}[1]{\norm{#1}_\infty}
\newcommand{\raisemath}[1]{\mathpalette{\raisem@th{#1}}}
\newcommand{\raisem@th}[3]{\raisebox{#1}{$#2#3$}}
\def\sqr{\!\raisemath{-1.0pt}{\Box}}
\def\sqrstr{\sqr\hspace*{-.35em}{\!\raisemath{-0.3pt}{*}}}
\newcommand{\qnorm}[1]{\norm{#1}_{\sqrstr}}
\newcommand{\sqrnorm}[1]{\norm{#1}_{\sqr}}
\newcommand{\plusnorm}[1]{\norm{#1}_{\!+}}
\newcommand{\bigqnorm}[1]{\bignorm{#1}_{\sqrstr}}
\newcommand{\rval}{\valqplus}
\newcommand{\iprod}[1]{\langle#1\rangle}
\definecolor{mygrey}{gray}{0.50}
\def\ot{{\otimes}}
\def\L{{\cal L}}
\def\mV{{\mathcal V}}
\def\mB{{\mathcal B}}
\def\mU{{\mathcal U}}
\def\mA{{\mathcal A}}
\begin{document}

\title{A parallel repetition theorem for entangled projection games}
\author{Irit Dinur\thanks{Department of Computer Science and Applied Math, The Weizmann Institute, Israel. Research supported by ERC grant number 239985. Part of the work was done while the author was visiting MIT supported by NSF Contract CCF-1018064, and by Simons Investigator Award of Shafi Goldwasser.} \and David Steurer\thanks{Departement of Computer Science, Cornell University. Part of this work was done at Microsoft Research New England.} \and Thomas Vidick\thanks{Newton Institute, Cambridge UK and Centre for Quantum Technologies, NUS Singapore. Partially supported by the Ministry of Education, Singapore under the Tier 3 grant MOE2012-T3-1-009. Part of this work was completed while the author was at MIT, supported by the National Science Foundation under Grant No. 0844626. }}
\date{}
\maketitle

\begin{abstract}
We study the behavior of the entangled value of two-player one-round projection games under parallel repetition. We show that for any projection game $G$ of entangled value $1-\eps < 1$, the value of the $k$-fold repetition of $G$ goes to zero as $O((1-\eps^c)^k)$, for some universal constant $c\geq 1$. If furthermore the constraint graph of $G$ is expanding we obtain the optimal $c=1$. Previously exponential decay of the entangled value under parallel repetition was only known for the case of XOR and unique games.
To prove the theorem we extend an analytical framework introduced by Dinur and Steurer for the study of the classical value of projection games under parallel repetition. Our proof, as theirs, relies on the introduction of a simple relaxation of the entangled value that is perfectly multiplicative. The main technical component of the proof consists in showing that the relaxed value remains tightly connected to the entangled value, thereby establishing the parallel repetition theorem. More generally, we obtain results on the behavior of the entangled value under products of arbitrary (not necessarily identical) projection games.

Relating our relaxed value to the entangled value is done by giving an algorithm for converting a relaxed variant of quantum strategies that we call ``vector quantum strategy'' to a quantum strategy. The algorithm is considerably simpler in case the bipartite distribution of questions in the game has good expansion properties. When this is not the case, the algorithm relies on a quantum analogue of Holenstein's correlated sampling lemma which may be of independent interest. Our ``quantum correlated sampling lemma'' generalizes results of van Dam and Hayden on universal embezzlement to the following approximate scenario: two non-communicating parties, given classical descriptions of bipartite states $\ket{\psi}$, $\ket{\varphi}$ respectively such that $\ket{\psi}\approx \ket{\varphi}$, are able to locally generate a joint entangled state $\ket{\Psi}\approx\ket{\psi}\approx\ket{\varphi}$ using an initial entangled state that is independent of their inputs.

\end{abstract}
\newpage

\section{Introduction}\label{sec:intro}

Two-player one-round games arise naturally in many areas of theoretical computer science. They are prominent in complexity theory, where they are a powerful tool for the study of constraint satisfaction problems, and in cryptography, where they give a polyvalent abstraction used to establish the security of many two-party primitives. They have also recently proven a very convenient framework for the study of some of the deepest issues in quantum mechanics, giving a novel viewpoint on the decades-old study of \emph{Bell inequalities}~\cite{Brunner14survey}, which are linear inequalities that must be satisfied by any family of distributions that can be generated locally according to the laws of classical mechanics, but can be violated if the distributions are allowed to be generated using quantum entanglement.

A game $G$ is specified by finite sets $\mathcal{U},\mathcal{V}$ of questions, $\mathcal{A}$, $\mathcal{B}$ of answers, a probability distribution $\mu$ on pairs of questions $(u,v)\in\mathcal{U}\times\mathcal{V}$, and an acceptance criterion $V\subseteq \mathcal{A}\times\mathcal{B}\times\mathcal{U}\times\mathcal{V}$ which states, for every possible pair of questions $(u,v)$, which pairs of answers $(a,b)\in\mathcal{A}\times\mathcal{B}$ are valid. The most basic quantity associated to a game is its \emph{value}. This can be defined operationally as the maximum success probability of two cooperating, but spatially isolated, players in the following game: a trusted party (the ``referee'') selects a pair of questions $(u,v)$ according to $\pi$, and sends $u$ to the first player (``Alice'') and $v$ to the second (``Bob''). Each player replies with an answer $a,b$, and the players win the game if and only if $V(a,b,u,v)=1$.

Remarkably, the precise definition of the value depends on the physical theory used to model the a priori vague assumption that the players be ``spatially isolated''. Under classical theory, isolated players are fully described by the (possibly randomized) functions they each apply to their respective question in order to determine their answer, and this interpretation leads to the \emph{classical value} $\val$ of the game. In contrast, in quantum theory isolated players are allowed any set of strategies that can be implemented by performing local measurements on a shared entangled state. The resulting value is called the \emph{entangled value} and denoted $\valq(G)$. Clearly for every game it holds that $\val\leq \valq$, and it is the discovery of Einstein, Podolsky and Rosen~\cite{epr} (formalized by Bell~\cite{Bell:64a}, simplified by Clauser et al.~\cite{Clauser:69a} and experimentally verified by Aspect et al.~\cite{Aspect81}) that there exist games for which the inequality is strict; indeed there are families of games $(G_n)$ for which $\val(G_n)\to 0$ but $\val^*(G_n)=1$~\cite{Raz98,Arvind:02}. One can go even further and consider the \emph{non-signaling value} $\val^{ns}$, which corresponds to players allowed to reproduce any bipartite correlations that do not imply signaling. Here again $\valq\leq\val^{ns}$, and there are games, such as the CHSH game~\cite{Clauser:69a}, for which the inequality is strict.

\medskip

One of the most fundamental questions one may ask about two-player games is that of the behavior of the value under \emph{product}. Given games $G$ and $H$, their product $G\otimes H$ is defined as follows: the question and answer sets are the cartesian product of those from $G$ and $H$; the distribution on questions is the product of the distributions, and the acceptance criterion the \textsc{and} of those of $G$ and of $H$. How does the value of $G\otimes H$ relate to that of $G$ and $H$? While it is clear that each of the three values defined above satisfies $\val(G\otimes H) \geq \val(G)\val(H)$, the reverse inequality, although intuitive, does \emph{not} hold in general. In particular, simple constructions of games $G$ are known such that $0<\val(G\otimes G)=\val(G)<1$~\cite{FeiLov92STOC}; similar constructions exist for $\valq$~\cite{CleveSUU08xor} and $\val^{ns}$~\cite{KempeR10noparallel}.

In spite of these examples, one may still ask for the behavior of $\val(G^{\otimes k})$, for ``large'' values of $k$. This is known as the \emph{parallel repetition} question: given a game $G$ such that $\val(G)<1$, does there exist a $\psi:[0,1]\to [0,1]$ such that $\psi(x)<1$ whenever $x>0$ and $\val(G^{\otimes k}) \leq (\psi(1-\val(G))^k$? If so, what form does $\psi$ take? Can it be approximately linear in the vicinity of $x=0$? Answering this question is of importance for many of the applications of two-player games. In cryptography, parallel repetition is a basic primitive using which one may attempt to amplify the security guarantees of a given protocol; in the study of Bell inequalities it can be used e.g. to amplify gaps between the quantum and non-signaling values; in complexity theory it is an important tool for hardness amplification.

For the case of the classical value, a sequence of works~\cite{Verbitsky,Fei91SCT,Feige2000} over the course of a decade led to the breakthrough by Raz~\cite{Raz98}, who was the first to provide a positive answer for general games: Raz showed that one can always take $\psi(x)=(1-x^c)^{d/\log{|\mathcal{A}\times\mathcal{B}|}}$, where $c$, $d$ are universal constants. Subsequent work focused on obtaining the best possible value for $c$ (the best known for general games is $c=3$~\cite{Hol09}) and on removing the dependence on the size of the answer alphabet for specific classes of games~\cite{Rao08parallel,BarakRRRS09parallel,RazR12projection}. For the case of the no-signaling value, Holenstein showed one can always take $\psi(x)=1-Cx^2$ for some constant $C>0$~\cite{Hol09}.

In contrast, for the case of the entangled value in spite of its importance the question is very poorly understood. Strong results are known for some very special classes of games such as XOR games~\cite{CleveSUU08xor}, for which repetition is exact (one can take $\psi(x)=1-x$) and unique games~\cite{KempeR10noparallel} (for $\psi(x)=1-Cx^2$, where $C>0$ is a universal constant). However, both these results, as well as related results motivated by cryptographic applications~\cite{HaenggiR09}, rely on the formulation of the entangled value as a semidefinite program, a characterization that is not believed to extend to more general games. Additional results are known but they only apply to specific games often originating from cryptography~\cite{MPA11,TomamichelFKW13monogamy}. Prior to this work the most general results known came from~\cite{KV11parallel}, where it is shown that a specific type of repetition inspired by work of Feige and Kilian~\cite{Feige2000}, in which the original game is mixed with ``consistency'' and ``free'' games, reduces the entangled value at a polynomial rate: provided $\val^*(G)<1$, the value $\val^*(G^{FK-\otimes k})$ of $k$ ``Feige-Kilian'' repetitions of $G$ behaves as $((1-\valq(G))k)^{-c}$ for some small $c>0$. (See ``related work'' below for additional discussion of more recent results that appeared after the initial completion of this work.)

\medskip

A recent work of Dinur and Steurer~\cite{DinurS13parallel} introduces a new approach to the parallel repetition question, focused on the case of \emph{projection games}. A projection game is one in which the referee's acceptance criterion has a special form: for any pair of questions $(u,v)$, any answer $b$ from the second player determines at most one valid answer $a = \pi_{uv}(b)$ for the first player.
Projection games are among the most interesting and widely-studied type of games. In particular, any local constraint satisfaction problem can be made into a projection game as follows: one player is asked for an assignment to all variables appearing in a constraint chosen at random, and the other is asked for an assignment to one of its variables. This simple transformation easily generalizes to convert any two-player game $G$ into a projection game $G'$, while essentially preserving the value: $1-\val(G')=\Theta(1-\val(G))$ (see Claim~\ref{claim:projection}). In particular, if one is only interested in ``amplifying the gap'' between $\val(G)=1$ and $\val(G)<1$ one can first map $G$ to $G'$ and then consider the parallel repetition of $G'$ itself, and this justifies the predominant role played by projection games in classical complexity theory. This transformation, however, may \emph{decrease} the entangled value arbitrarily whenever the optimal strategy for the players requires the use of entanglement (though we show that it can never \emph{increase} the value by too much; see Claim~\ref{claim:projection} for precise bounds). Nevertheless, many of the games studied in quantum information, such as the CHSH game~\cite{Clauser:69a} or the Magic Square game~\cite{Arvind:02} are projection games.

The approach of~\cite{DinurS13parallel} is based on the introduction of a relaxation of the game value, denoted $\valplus$. This relaxation can be defined for any game (we give the definition in Section~\ref{sec:intro-framework} below), and it is perfectly multiplicative. Moreover, for the case of projection games $\val_+$ turns out to remain closely related to $\val$, thus leading to a parallel repetition theorem. Although such a theorem already follows from Raz's general result~\cite{Raz98}, this arguably simpler approach matches the best parameters currently known~\cite{Rao08parallel}, which are known to be optimal~\cite{Raz08}. In addition, it yields new results for repetitions of games with small value and the case of few repetitions, which has implications for the approximability of the \textsc{label cover} and \textsc{set cover} problems.

\subsection{Our results}

We extend the analytical framework introduced in~\cite{DinurS13parallel} to the case of the entangled value $\valq$. As a consequence we obtain the following main theorem on the parallel repetition of the entangled value of projection games.

\begin{theorem}\label{thm:main}
There exists constants $c,C>0$ such that the following holds. For any projection game $G$,
$$\valq(G^{\otimes k}) \,\leq\, \big(1-C(1-\valq(G))^c\big)^{k/2}.$$
\end{theorem}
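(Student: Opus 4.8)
The plan follows the analytic framework outlined in the introduction, transposing the Dinur--Steurer argument to the entangled setting. First I would introduce a perfectly multiplicative relaxation $\valqplus$ of the entangled value, defined via ``vector quantum strategies.'' The key properties I need are twofold: (i) multiplicativity, $\valqplus(G^{\otimes k}) = \valqplus(G)^k$, which should follow by construction since the relaxation is designed so that tensoring optimal vector strategies for each copy yields an optimal vector strategy for the product (the upper bound being the content of the argument and the lower bound being immediate from taking products); and (ii) a sandwiching relation tying $\valqplus$ back to $\valq$ for projection games. Property (i) reduces the whole theorem to a single-game statement, so the heart of the matter is (ii).

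For the sandwiching, the easy direction is $\valq(G) \leq \valqplus(G)$, since any genuine quantum strategy is in particular a vector quantum strategy, so relaxing only increases the value. The hard direction is a rounding statement: given a vector quantum strategy achieving value close to $1$, I must produce a genuine quantum strategy achieving value $1-O((1-\valqplus(G))^{1/c})$, i.e.\ I must show
\begin{equation*}
1 - \valq(G) \;\leq\; C\,\bigl(1 - \valqplus(G)\bigr)^{1/c}
\end{equation*}
for some universal constants. Combining (i) and (ii) then gives
\begin{equation*}
\valq(G^{\otimes k}) \;\leq\; \valqplus(G^{\otimes k}) \;=\; \valqplus(G)^k \;\leq\; \bigl(1 - C'(1-\valq(G))^c\bigr)^{k},
\end{equation*}
and after adjusting constants and absorbing the loss from the projection structure one lands on the stated bound with the exponent $k/2$. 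The factor of $2$ in the exponent is the slack I would expect to lose when passing between the players' two marginal views of the shared state, so I would not fight to remove it.

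The main obstacle is the rounding step, and specifically the algorithm for converting a vector quantum strategy into an actual shared-entangled-state strategy. In a vector quantum strategy the two players' ``states'' are only formally related; to play the real game they must hold a genuine bipartite entangled state on which they measure. The projection structure is what makes this feasible: because each of Bob's answers pins down at most one consistent answer for Alice, correlations in the relaxed solution can be leveraged to build consistent local measurements. I anticipate two regimes. When the question distribution $\mu$ has good expansion, the players' conditional states vary slowly as a function of the questions, so a single shared state suffices and the rounding is comparatively direct; this should yield the optimal constant $c=1$ as promised in the abstract. When expansion fails, the conditional states can differ substantially across questions, and the two players, who see only their own question, must nonetheless agree on which entangled state to use. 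This is exactly a correlated-sampling problem in the quantum setting: each player locally holds a classical description of a bipartite state ($\ket{\psi}$ for Alice, $\ket{\varphi}$ for Bob) with $\ket{\psi}\approx\ket{\varphi}$, and they must jointly and without communication produce a shared state $\ket{\Psi}$ close to both. I would establish this via a ``quantum correlated sampling lemma,'' generalizing van Dam--Hayden embezzlement, and I expect its analysis---controlling how the error in the sampled state degrades as a function of the fidelity $\|\ket{\psi}-\ket{\varphi}\|$---to be the most delicate part of the entire argument and the source of the polynomial (rather than linear) loss $c>1$ in the general case.
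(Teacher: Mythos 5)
Your plan follows the paper's architecture almost exactly---the multiplicative relaxation $\valqplus$ via vector quantum strategies, the rounding step as the crux, the expanding/non-expanding dichotomy, and the quantum correlated sampling lemma---so in spirit this is the same proof. There is, however, one concrete step that fails as written: the ``easy direction'' of your sandwich. The paper does not (and cannot, for its relaxation) establish $\valq(G)\le\valqplus(G)$. What it establishes is $\valq(G)^2\le\qnorm{G}^2\le\valqplus(G)$, where $\qnorm{G}$ is the value of the squared game $G^\dagger G$ and the first inequality comes from a matrix Cauchy--Schwarz argument (equivalently, from the pretty-good measurement). Since all quantities lie in $[0,1]$, the true bound $\valqplus(G)\ge\valq(G)^2$ is strictly weaker than your claimed $\valqplus(G)\ge\valq(G)$, so the first inequality in your displayed chain $\valq(G^{\otimes k})\le\valqplus(G^{\otimes k})=\valqplus(G)^k\le(\cdots)^k$ is unjustified---and indeed, if it held, you would have proved a bound with exponent $k$, which is stronger than the theorem.

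The repair is to reintroduce the intermediate square-game norm, which you elided: the correct chain is $\valq(G^{\otimes k})^2\le\qnorm{G^{\otimes k}}^2\le\valqplus(G)\cdot\qnorm{G^{\otimes k-1}}^2\le\cdots\le\valqplus(G)^k$, followed by the rounding bound $\valqplus(G)\le 1-C(1-\qnorm{G}^2)^c\le 1-C(1-\valq(G))^c$. Taking square roots is exactly where the exponent $k/2$ comes from; it is the very first inequality of the argument, not a loss to be ``absorbed'' at the end from the projection structure or from the two players' marginal views. Note also that the telescoping uses the mixed inequality $\qnorm{G\otimes H}^2\le\valqplus(G)\cdot\qnorm{H}^2$ rather than perfect multiplicativity of $\valqplus$ (which does hold, but is not what drives the induction). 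Everything else---the rounding of vector strategies to genuine strategies, with $c=1$ in the expanding case and the embezzlement-based correlated sampling otherwise---matches the paper.
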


Although we do not attempt to fully optimize the constant $c$, the value that come out of our proof is $c\leq 12$. For the case of expanding games (see definition in Section~\ref{sec:prelim-games}) we obtain the optimal $c=1$. 

Parallel repetition results for the classical value were originally motivated by the study of multi-prover interactive proofs~\cite{FortnowRS88onthe}, and our result is likewise applicable to the study of classes of multi-prover interactive proofs with entangled provers. Letting $\textrm{MIP}^{\textrm{er}}_{1,s}(k)$ denote the class of languages having $k$-prover $1$-round interactive proofs in which completeness $c=1$ holds with unentangled provers, but soundness $s$ holds even against provers allowed to share entanglement, Theorem~\ref{thm:main} implies that $\textrm{MIP}^{\textrm{er}}_{1,s}(2) = \textrm{MIP}^{\textrm{er}}_{1,2^{-\poly}}(2)$ for any $s<1-\poly^{-1}$. This is because any protocol in $\textrm{MIP}^{\textrm{er}}_{1,s}(2)$ can be put into a form where the verifier's test is a projection constraint by following the reduction already discussed above, and described in Claim~\ref{claim:projection}; this will preserve both perfect completeness (for classical strategies) and soundness bounded away from $1$ (for quantum strategies). Prior to our work it was not known how to amplify soundness to exponentially small without increasing the number of rounds of interaction. It follows from~\cite{IV12,Vidick13xor} that $\textrm{MIP}^{\textrm{er}}_{1,1-\poly^{-1}}(3) = \textrm{NEXP}$, but very little is known about the $2$-prover class $\textrm{MIP}^{\textrm{er}}_{1,s}(2)$.

We believe that our results should find applications to a much wider range of problems. Going beyond the application to the parallel repetition question, our main contribution is the development of a precise framework in which general questions about the behavior of the value under product can be studied. This framework constitutes a comprehensive extension of the one introduced in~\cite{DinurS13parallel} for the study of the classical value: as in~\cite{DinurS13parallel}, we introduce a relaxation $\rval$ of the entangled value, prove that it is perfectly multiplicative, and show that it remains closely related to $\valq$. We find it remarkable that the framework from~\cite{DinurS13parallel}, introduced in a purely classical context, would find such a direct extension to the case of the entangled value. We hope that the tools developed in this extension will find further applications to the proof of product theorems in areas ranging from cryptography to communication complexity. Even though at a technical level the setting can appear quite different, some of the ideas put forth here could also prove useful to further removed areas such as the additivity conjecture for the minimum output entropy of quantum channels~\cite{AmosovHW00additivity,HaydenW08additivity,Hastings09superadditivity}.

We turn to a more detailed explanation of our framework, hoping to highlight precisely those tools and ideas that may find further application.

\subsection{Proof sketch}\label{sec:intro-framework}

In order to explain our approach it is useful to first review the framework introduced in~\cite{DinurS13parallel} for the study of the classical value.

\paragraph{Classical strategies.}

The starting point in~\cite{DinurS13parallel} consists in viewing games as \emph{operators} acting on the space of \emph{strategies}. In this language a strategy is simply a vector $\ket{f}$ of non-negative reals indexed by pairs $(u,a)$ of possible questions and answers: $f(u,a)$ is the probability that the strategy provides answer $a$ to question $u$. To any game one can associate a matrix $G$ such that, formally, the success probability of strategies $(\ket{f},\ket{g})$ for the players equals the vector-matrix-vector product $\bra{f}G\ket{g}$. The value of the game is then the norm of $G$ when viewed as an operator on the appropriately normed spaces of strategies.

The first crucial step taken in~\cite{DinurS13parallel} consists in relaxing the value of a game $G$ to the value of a symmetrized version of the game, which we call the \emph{square} $G^\dagger G$ of the game (this notation will be made precise in Section~\ref{sec:prelim-games}); we will denote the latter value by $\sqrnorm G$. In the square of a game $G$, the referee first samples a question $u$ for the first player as in $G$. He then independently samples two questions $v$ and $v'$ for the second player according to the conditional distribution. The players in $G^\dagger G$ are sent $v$ and $v'$ respectively. They have to provide answers $b$ and $b'$ such that there exists an $a$ such that both $(a,b)$ is a valid answer to $(u,v)$ in $G$, and $(a,b')$ is a valid answer to $(u,v')$. Note that now $G^\dagger G$ is in general not a projection game, even if $G$ was. In particular, $G^\dagger G$ treats both players symmetrically, and it turns out that we may always assume that they both apply the same strategy.
For the special case of projection games it is not hard to show that the value of the game and that of its square are quadratically related:
\beq\label{eq:intro-square}
\val(G)^2 \leq \sqrnorm G \leq \val(G).
\eeq
Indeed, using the algebraic language introduced above, the first inequality follows from the Cauchy-Schwarz inequality and the second is an easy observation.

The second step consists in observing that the application of the operator corresponding to the product $G\ot H$, where $G$ and $H$ are arbitrary projection games, can be decomposed as a product $(G\ot I)\cdot (I\ot H)$. Starting with a strategy $\ket{f}$ for $G\otimes H$, the result of applying $(I\ot H)$ to $\ket{f}$ is a new vector which no longer satisfies the strict normalization requirements of strategies. Understanding the new normalization leads to a further relaxation of $\sqrnorm G$, denoted $\val_+(G)$, in which the optimization is performed over the appropriate notion of ``vector strategies'', which intuitively are vectors that can be obtained by applying game operators to strategies. With the correct definition, it is easy to show that
\beq\label{eq:intro-valplus}
 \sqrnorm{G\ot H}^2 \le \val_+(G)\cdot \sqrnorm{H}^2 .
\eeq
The third and last step, which constitutes most of the technical work in~\cite{DinurS13parallel}, consists in showing that $\val_+(G)$ is a good approximation to $\sqrnorm G$.  This is done using a \emph{rounding procedure}, by which a vector strategy associated with a large $\val_+$ is mapped back to an actual strategy for the square game that also has a high value, thus serving as a witness for the value $\sqrnorm{G}$ being large as well. Altogether we get a bound on the value of $G\ot H$ as a product of a bound on the value of $G$ and a bound on the value of $H$. Repeated application of~\eqref{eq:intro-valplus} then leads to the following chain of inequalities (where the last approximate equality hides a polynomial dependence)
\beq\label{eq:intro-seq}
 \val(G^{\otimes k})^2 \leq \sqrnorm {G^{\otimes k}}^2 \leq \val_+(G)\cdot \sqrnorm{G^{\ot k-1}}^2 \le \cdots \le \val_+(G)^{k}\approx \val(G)^{k},
\eeq
proving the parallel repetition theorem.

\paragraph{Quantum strategies.}

Our goal now is to extend the above sketch to the case of the entangled value $\valq$. There is good reason for optimism. In contrast to most classical proofs that appear in the study of classical two-player games (such as those that go into Dinur's proof of the PCP theorem~\cite{Dinur07pcp}, or earlier approaches to parallel repetition~\cite{Verbitsky,Feige2000,Raz98}), which are often information-theoretic or combinatorial in nature, the analytic (one could say linear-algebraic) framework introduced in~\cite{DinurS13parallel} seems much better suited a priori to an extension to the quantum domain. %
Indeed, quantum strategies themselves are objects that live in $d$-dimensional complex vector space: instead of a vector $\ket{f}$ of non-negative reals (giving the probability of answering $a$ to question $u$, for every possible $u$ and $a$), a strategy is now a vector $\ket{A}$ of $d$-dimensional positive semidefinite matrices $A_u^a$ that describe the measurement to be performed upon receiving any question $u$. The normalization condition is $\sum_a A_u^a=\Id$ for every $u$, a constraint dictated by the formalism of measurements in quantum mechanics. Note that taking $d=1$ we recover classical strategies; quantum mechanics allows $d$ to be arbitrarily large.

At an abstract level, going from the classical to the entangled value thus solely requires us to think of the game $G$ as an operator acting on a bigger space of strategies, ``enlarging'' the non-negative reals to the space of $d$-dimensional positive semidefinite matrices. This operation is easily realized by ``tensoring with identity'', $G\to G \otimes \Id_{\C^d}$. 

It remains to show how to extend each of the steps outlined above.
The first step consists in obtaining an analogue of~\eqref{eq:intro-square}. As in the classical case the second inequality is easy, and follows by observing that, if $\ket{A}$ is a quantum strategy in $G^\dagger G$ then $(G \otimes \Id)\ket{A}$ is a valid strategy for the first player in $G$ (this notation will be made precise in Section~\ref{sec:prelim-games}.) The first inequality in~\eqref{eq:intro-square} is slightly more subtle. Although it can be shown directly by applying a suitable matrix version of the Cauchy-Schwarz inequality, we note that it can also be proven using known properties of a widely used  construction in quantum information theory, the \emph{pretty-good measurement} (PGM)~\cite{HausladenW94pgm,hjsww:capacity}. As it turns out, the relaxation $\valq(G)\to\qnorm{G}$ precisely corresponds to replacing the first player's optimal choice of strategy in $G$ by a near-optimal choice obtained from the pretty-good-measurement derived from the post-measurement states, on the first player's space, that arise from the second player's measurements. As a consequence,~\eqref{eq:intro-square} extends verbatim:
\begin{equation}
\valq(G)^2 \leq \qnorm{G}^2 \leq \valq(G).\tag{1*}
\end{equation}
Next we need to find an appropriate notion of vector strategy and corresponding relaxed value $\valqplus$. Here we are helped by the ``operational'' interpretation of a vector strategy as the result of the application of a game operator to a strategy meant for the product of several games. With the suitable generalization of the definition of classical vector strategies (see Definition~\ref{def:fractional-strategy}) we also obtain an analogue of~\eqref{eq:intro-valplus} for $\valqplus$:
\begin{equation}
 \qnorm{G\ot H}^2 \le \valqplus(G)\cdot \qnorm{H}^2.\tag{2*}
\end{equation}
Even though this is not directly needed for our purposes, we note that $\valqplus$ itself is perfectly multiplicative (see Lemma~\ref{lem:mult} for the easy proof).

Finally, and most arduous, is to relate the relaxation $\val_+^*$ back to the value of the square game, $\qnorm{G}^2$. In the classical case this involves rounding vector to actual strategies. In the quantum case rounding has to be performed synchronously by the players, and will necessarily involve the use of an entangled state. Intuitively, upon receiving their respective questions in $G$ the players need to initialize themselves in an entangled state that corresponds to the post-measurement state that they would be in, conditioned on having given a particular pair of answers to a given pair of questions in the game $H$ from which the vector strategy is derived (recall that, informally, vector strategies are the result of applying a game operator to a strategy meant for the product of two or more distinct games).

In case the bipartite distribution of questions in the game $G$ has good expansion properties we can show that this conditioned state is roughly the same regardless of the respective questions received by each player in $G$, so there is a way for players to renormalize their measurements and proceed. For the non-expanding case the states can differ significantly from question to question. Nevertheless, we can show that based on their respective questions the players are
able to agree on classical descriptions of two close states $\ket{\psi}\approx \ket{\varphi}$ that they respectively wish to be in.

Since the questions are not known to the players a priori, they need to generate the appropriate entangled states ``on the spot'', from an initial shared entangled state that is independent from $\ket{\psi}$ and $\ket{\varphi}$. Our new ``quantum correlated sampling'' lemma allows the players to do just this: given classical descriptions of $\ket{\psi}\approx\ket{\varphi}$ respectively, they are able to generate a joint entangled state $\ket{\Psi}\approx \ket{\psi}\approx \ket{\varphi}$ from an initial shared universal ``embezzlement state''~\cite{DamH03embezzlement} independent of $\ket{\varphi}$ or $\ket{\psi}$, without any communication. The lemma can be seen as a quantum variant of Holenstein's correlated sampling lemma~\cite{Hol09}, as well as a ``robust'' extension of the results of van Dam and Hayden on universal embezzlement states~\cite{DamH03embezzlement}. We discuss this lemma and related works in more detail in Section~\ref{sec:correlated}.

All steps having been extended, we obtain a direct generalization of the chain of inequalities~\eqref{eq:intro-seq} to the case of entangled strategies:\footnote{We note however that the approximate equality $\val_+^*(G) \approx \valq(G)$ that we obtain in the quantum case, although it suffices for our application to parallel repetition, is weaker than the one from~\cite{DinurS13parallel}. In particular, it is probably not tight.}
\begin{equation}
 \valq(G^{\otimes k})^2 \leq  \qnorm{G^{\otimes k}}^2 \leq \valqplus(G)\cdot \qnorm{G^{\otimes k-1}}^2 \le \cdots \le \valqplus(G)^k  \approx (\valq(G))^{k}.\tag{3*}
\end{equation}

\subsection{Additional related work}

Although few general results are known, the question of the behavior of the entangled value of a two-player game or protocol under parallel repetition arises frequently. It plays an important role in recent results on device-independent quantum key distribution~\cite{HaenggiR09,MPA11} and related cryptographic primitives~\cite{TomamichelFKW13monogamy}. The latter work considers parallel repetition of a game with quantum messages, a setting which is also the focus of~\cite{cooney}. The approach of~\cite{cooney} builds upon~\cite{JungePPVW10}, who relate the (classical) value of a two-player one-round game to the norm of the game when viewed as a tensor on the space $\ell_\infty(\ell_1)\otimes \ell_\infty(\ell_1)$. This is similar to our starting point of viewing games as operators acting on strategies, except that it considers the game as a bilinear form rather than an operator; the two points of view are equivalent. This perspective enables the authors to leverage known results on the study of tensor norms in Banach space (resp. operator space) theory to derive results on the classical (resp. entangled) value. To the best of our knowledge this connection has not led to an alternative approach to proving parallel repetition for general classes of games, although partial results were obtained in~\cite{cooney} for the special case of the entangled value of rank-one quantum games.

After the completion of this work two new results established an exponential parallel repetition theorem for two-player one-round games with entangled players in which the distribution on questions  is a product distribution. In~\cite{ChaillouxS13parallel} it is shown that the entangled value of games in which the distribution on questions is uniform decreases as
$$\valq(G^{\otimes k}) \leq (1-(1-\valq(G))^2)^{\Omega(k/\log|\mathcal{U}||\mathcal{V}||\mathcal{A}||\mathcal{B}|)}.$$
Very recently Jain et al.~\cite{JainPY13parallel} extended the result to arbitrary product distributions on the questions, while also removing the dependence on the number of questions: they obtained the bound
$$\valq(G^{\otimes k}) \leq (1-(1-\valq(G))^3)^{\Omega(k/\log|\mathcal{A}||\mathcal{B}|)}.$$
Both results are based on the use of information-theoretic techniques. They are incomparable to ours, as they apply to games in which the acceptance predicate is general but the input distribution is required to be product. In addition, both bounds above have a dependence on the number of answers in the game; while for the case of the classical value such a dependence is necessary~\cite{FeigeV02parallel}, for the entangled value it is not yet known whether it can be avoided.

\subsection{Open questions}

We briefly mention several interesting open questions. There still does not exist any parallel repetition result that applies to the entangled value of general, non-projection two-player one-round games, and it would be interesting to investigate whether our techniques could lead to (even relatively weak) results in the general setting. The case of three players is also of interest, and no non-trivial parallel repetition results are known either in the classical or quantum setting. In fact, the closely related question of XOR repetition of three-player games is known to fail dramatically even for the classical value~\cite{BBLV12}.

\paragraph{Organization of the paper.} We start with some important preliminaries in Section~\ref{sec:prelim}. There we introduce the representation of games and strategies that is used throughout the remainder of the paper. In Section~\ref{sec:relaxations} we introduce the two relaxations of the entangled value sketched in the introduction and give a more detailed overview of our proof. In Section~\ref{sec:approx} we prove the main technical component of our work, the relation between $\valqplus$ and $\qnorm{\cdot}^2$. Finally, in Section~\ref{sec:correlated} we state and prove the quantum correlated sampling lemma.

\paragraph{Acknowledgments.} We thank Attila Pereszl\'enyi for comments on an earlier version of this manuscript.

\section{Preliminaries}\label{sec:prelim}

\subsection{Notation}

We identify $\lin(\C^{d'},\C^{d})$, the set of linear operators from $\C^{d'}$ to $\C^{d}$, with the set of $d\times d'$ matrices with complex entries: if $X\in \lin(\C^{d'},\C^{d})$ then its matrix has entries $X_{a,b} = \bra{a}X\ket{b}$, where $\ket{a}$, $\ket{b}$ range over the canonical bases for $\C^d$, $\C^{d'}$ respectively, and we use the bra-ket notation to denote column vectors $\ket{b}$ and row vectors $\bra{a} = (\ket{a})^\dagger$, where $\dagger$ denotes the conjugate-transpose. We also write $\lin(\C^d)$ for $\lin(\C^d,\C^{d})$. The space $\lin(\C^{d'},\C^{d})$ is a Hilbert space for the inner product $\langle A,B\rangle := \Tr(A^\dagger B)$. We let $\|X\|_\infty$ be the operator norm of $X$, its largest singular value. A state $\ket{\Psi}\in \C^d$ is a vector with norm $1$.

The following simple calculation, sometimes known as Ando's identity, will be useful.

\begin{claim}\label{claim:ando} Let $X\in\lin(\C^d)$, $Y\in\lin(\C^{d'})$ be two operators and $\ket{\Psi}\in\C^d\otimes\C^{d'}$ a bipartite state with Schmidt decomposition $\ket{\Psi} = \sum_i \lambda_i \ket{u_i}\ket{v_i}$, where the $\lambda_i$ are non-negative reals. Then
\begin{equation}\label{eq:ando}
\bra{\Psi} X\otimes Y \ket{\Psi} \,=\, \Tr\big(XKY^TK^\dagger\big),
\end{equation}
where $K = \sum_i \lambda_i \ket{u_i}\bra{v_i}$ and the transpose is taken in the bases specified by the $\ket{u_i}$ and $\ket{v_j}$. In particular, if $\ket{u_i}=\ket{v_i}$ for every $i$, $K$ is positive semidefinite and~\eqref{eq:ando} evaluates to $\Tr(XKY^TK)$.
\end{claim}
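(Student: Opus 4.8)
The plan is to prove Ando's identity by direct computation, expanding both sides in the Schmidt basis and matching terms. This is a routine but instructive calculation, and the main subtlety is simply keeping careful track of the transpose convention.

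First I would expand the left-hand side using the Schmidt decomposition $\ket{\Psi} = \sum_i \lambda_i \ket{u_i}\ket{v_i}$. Substituting into $\bra{\Psi}X\otimes Y\ket{\Psi}$ gives
\begin{equation*}
\sum_{i,j}\lambda_i\lambda_j\,\bra{u_i}X\ket{u_j}\,\bra{v_i}Y\ket{v_j}.
\end{equation*}
The goal is to recognize this double sum as the trace on the right-hand side. Writing $X_{ij}=\bra{u_i}X\ket{u_j}$ and $Y_{ij}=\bra{v_i}Y\ket{v_j}$ in the Schmidt bases, the sum becomes $\sum_{i,j}\lambda_i\lambda_j\,X_{ij}\,Y_{ij}$.

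Next I would compute the right-hand side $\Tr(XKY^TK^\dagger)$ with $K=\sum_k\lambda_k\ket{u_k}\bra{v_k}$, and show it collapses to the same double sum. The key bookkeeping step is that $Y^T$ is the transpose taken in the bases specified by the $\ket{u_i}$ and $\ket{v_j}$, so its $(k,l)$ matrix element in the appropriate basis equals $\bra{v_l}Y\ket{v_k}=Y_{lk}$; likewise $K^\dagger=\sum_m\lambda_m\ket{v_m}\bra{u_m}$. Multiplying out $XKY^TK^\dagger$ and taking the trace, the orthonormality of the Schmidt vectors $\iprod{u_i|u_j}=\iprod{v_i|v_j}=\delta_{ij}$ forces the intermediate indices to collapse, leaving exactly $\sum_{i,j}\lambda_i\lambda_j X_{ij}Y_{ij}$, which matches the left-hand side. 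The only place to be careful is that $Y$ appears transposed on the right but untransposed on the left; this is precisely what the transpose in the definition of $Y^T$ is there to undo, so the two $Y$-index patterns agree.

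The main obstacle, such as it is, is purely notational: ensuring that the transpose and conjugate-transpose are interpreted consistently in the Schmidt bases rather than the canonical bases, since $X$, $Y$ are a priori given as operators on $\C^d$, $\C^{d'}$. Once the convention is fixed the computation is mechanical. For the final ``in particular'' claim, when $\ket{u_i}=\ket{v_i}$ for all $i$ the operator $K=\sum_i\lambda_i\ket{u_i}\bra{u_i}$ is manifestly Hermitian with non-negative eigenvalues $\lambda_i\geq 0$, hence positive semidefinite, and in that case $K^\dagger=K$, so~\eqref{eq:ando} simplifies to $\Tr(XKY^TK)$ as stated.
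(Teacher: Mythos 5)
Your proposal is correct and follows exactly the paper's (one-line) proof: a direct calculation expanding $\bra{\Psi}X\otimes Y\ket{\Psi}$ in the Schmidt basis and matching it term by term against $\Tr(XKY^TK^\dagger)$, with the transpose convention handled correctly. No issues.
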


\begin{proof}
The proof follows by direct calculation, expanding the left-hand side of~\eqref{eq:ando} using the Schmidt decomposition of $\ket{\Psi}$ and the right-hand side using the definition of $K$.
\end{proof}

We state a matrix analogue of the Cauchy-Schwarz inequality; we include a proof for completeness (see also~\cite[p.123]{pisierbook}).

\begin{claim}\label{claim:haagerup}
For any $d$ and operators $A_i\in\lin(\C^d)$, $B_i\in\lin(\C^{d'})$,
$$ \Big\|\sum_i \overline{A_i} \otimes B_i \Big\|_\infty^2 \leq \Big\|\sum_i \overline{A_i} \otimes A_i \Big\|_\infty\Big\|\sum_i \overline{B_i} \otimes B_i \Big\|_\infty.
$$
\end{claim}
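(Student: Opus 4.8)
The plan is to recognize the inequality as the Cauchy--Schwarz inequality for Pisier's operator Hilbert space $OH$, and to prove it by complex interpolation between the elementary ``row'' and ``column'' estimates. Two preliminary observations guide the setup. First, the statement is homogeneous: replacing $A_i\mapsto tA_i$ and $B_i\mapsto t^{-1}B_i$ leaves $\sum_i\overline{A_i}\otimes B_i$ unchanged while scaling the two right-hand factors by $t^2$ and $t^{-2}$, so by AM--GM the claimed geometric-mean bound is equivalent to the arithmetic-mean bound $\bignormi{\sum_i\overline{A_i}\otimes B_i}\le\tfrac12\bignormi{\sum_i\overline{A_i}\otimes A_i}+\tfrac12\bignormi{\sum_i\overline{B_i}\otimes B_i}$. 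Second, I would record what is \emph{not} enough: writing $\normi{\sum_i\overline{A_i}\otimes B_i}=\sup\,|\bra\eta(\sum_i\overline{A_i}\otimes B_i)\ket\xi|$ over unit $\ket\xi,\ket\eta\in\C^d\otimes\C^{d'}$, splitting $\overline{A_i}\otimes B_i=(\overline{A_i}\otimes\Id)(\Id\otimes B_i)$, and applying a single Cauchy--Schwarz over $i$ yields only the ``row--column'' bound $\normi{\sum_i\overline{A_i}A_i^T}^{1/2}\normi{\sum_i B_i^\dagger B_i}^{1/2}$. This is strictly weaker than the claim --- already for $A_1=\ket0\bra0,\ A_2=\ket0\bra1$ one has $\normi{\sum_i\overline{A_i}\otimes A_i}=\sqrt2<2=\normi{\sum_i\overline{A_i}A_i^T}$ --- the loss coming from the fact that squaring contracts the two copies of each $A_i$ against one another, collapsing the tensor $\overline{A_i}\otimes A_i$ into the product $\overline{A_i}A_i^T$.

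To reach the sharp bound I would use that $\normi{\sum_i\overline{C_i}\otimes C_i}^{1/2}$ is precisely the norm of the tuple $(C_i)_i$ regarded in $\lin(\C^d)\otimes OH_n$, together with the fact that the canonical pairing $((a_i),(b_i))\mapsto\sum_i a_i\otimes\overline{b_i}$ is contractive by self-duality of $OH$; taking $a_i=\overline{A_i}$, $b_i=B_i$ and using $\normi{\sum_i A_i\otimes\overline{A_i}}=\normi{\sum_i\overline{A_i}\otimes A_i}$ (the two being conjugate by the swap) reproduces the claim. To keep the argument self-contained I would unfold the definition $OH_n=[R_n,C_n]_{1/2}$ and run Hadamard's three-lines theorem on an analytic family $F(z)$, $0\le\Re z\le1$, whose value at the balanced point $z=\tfrac12$ is $\sum_i\overline{A_i}\otimes B_i$, whose $\Re z=0$ boundary is controlled by the row estimate and whose $\Re z=1$ boundary by the column estimate; three-lines then returns exactly their geometric mean. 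The family $F$ must be arranged so that complex powers are taken only of the positive operators $\sum_iA_iA_i^\dagger,\ \sum_iA_i^\dagger A_i$ and their $B$-analogues.

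The main obstacle, and the reason no one-line positivity argument succeeds, is that $\sum_i\overline{A_i}\otimes A_i$ is \emph{not} positive semidefinite in general: up to complex conjugation it is a partial transpose (reshuffling) of the manifestly positive ``Choi'' matrix $\sum_i\ket{\alpha_i}\bra{\alpha_i}$ with $\ket{\alpha_i}=\sum_{a,b}(A_i)_{ab}\ket a\ket b$, and partial transposition preserves neither positivity nor the operator norm. Consequently the operator-valued form $\sum_i z_i\otimes\overline{z_i}$ is not positive, so the usual $2\times2$-block / completely-positive proof of Cauchy--Schwarz is unavailable, which is exactly why every elementary splitting collapses to the weaker row--column estimate. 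Interpolation is what circumvents this --- the content being that $OH$ is nonetheless an honest Hilbert space --- and I expect the only delicate points to be the correct choice of the analytic family $F(z)$ and the verification of its two boundary bounds.
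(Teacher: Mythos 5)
Your diagnosis of the obstruction is accurate---the single split $\overline{A_i}\otimes B_i=(\overline{A_i}\otimes\Id)(\Id\otimes B_i)$ does collapse to a weaker row--column bound, and $\sum_i\overline{A_i}\otimes A_i$ is indeed not positive semidefinite in general---and the $OH$/interpolation route you sketch is a known valid strategy (it is essentially Pisier's proof of the self-duality of $OH$, which is what the paper's citation to \cite{pisierbook} points at). But as written your proposal has a genuine gap: in that approach the \emph{entire content} of the proof is the construction of the analytic family $F(z)$ and the verification of its two boundary estimates, and you explicitly defer both. Specifying that $F(1/2)$ should equal $\sum_i\overline{A_i}\otimes B_i$ and that complex powers should only hit the positive operators $\sum_iA_iA_i^\dagger$, $\sum_iA_i^\dagger A_i$ is a description of the answer, not a derivation: without exhibiting $F$ and checking that $\normi{F(it)}$ and $\normi{F(1+it)}$ are controlled uniformly in $t$ by the row and column norms, nothing has been proved. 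So the proposal is a plan for a proof rather than a proof.

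The paper avoids all of this machinery with an elementary two-step argument that your ``what is not enough'' analysis overlooks: instead of factoring the \emph{operator}, Schmidt-decompose the two test \emph{vectors}, $\ket{\Psi}=\sum_i\lambda_i\ket{u_i}\ket{v_i}$ and $\ket{\Phi}=\sum_j\mu_j\ket{t_j}\ket{w_j}$, write $\bra{\Psi}\overline{A}\otimes B\ket{\Phi}=\sum_{i,j}\lambda_i\mu_j\bra{u_i}\overline{A}\ket{t_j}\bra{v_i}B\ket{w_j}$, and apply scalar Cauchy--Schwarz over the index pair $(i,j)$ with weights $\lambda_i\mu_j$. The squared quantities that appear are then $\sum_{i,j}\lambda_i\mu_j|\bra{u_i}\overline{A}\ket{t_j}|^2=\bra{\Psi_L}\overline{A}\otimes A\ket{\Phi_L}$ for the auxiliary unit vectors $\ket{\Psi_L}=\sum_i\lambda_i\ket{u_i}\overline{\ket{u_i}}$, $\ket{\Phi_L}=\sum_j\mu_j\ket{t_j}\overline{\ket{t_j}}$---genuine tensor squares, not the products $\overline{A}A^T$---so the collapse you identify never occurs. (Non-positivity of $\sum_i\overline{A_i}\otimes A_i$ is harmless here because its matrix elements between vectors of this conjugate-diagonal form are automatically non-negative.) A second Cauchy--Schwarz over $i$ then combines the per-term bounds into $|\bra{\Psi_L}(\sum_i\overline{A_i}\otimes A_i)\ket{\Phi_L}|^{1/2}\,|\bra{\Psi_R}(\sum_i\overline{B_i}\otimes B_i)\ket{\Phi_R}|^{1/2}$, which is at most the product of operator norms since all four auxiliary vectors are unit vectors. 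If you wish to salvage your route you must actually write down $F(z)$ (e.g.\ via polar decompositions and the powers $|A_i|^{2z}$, $|A_i^\dagger|^{2(1-z)}$) and prove the boundary bounds; otherwise the elementary argument is both complete and considerably shorter.
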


\begin{proof}
Let $\ket{\Psi},\ket{\Phi}\in \C^d\otimes \C^d$ be unit vectors with Schmidt decomposition $\ket{\Psi} = \sum_i \lambda_i \ket{u_i}\ket{v_i}$ and $\ket{\Phi} = \sum_i \mu_i \ket{t_i}\ket{w_i}$. For any $A\in\lin(\C^d)$ and $B\in\lin(\C^{d'})$,
\begin{align*}
\bra{\Psi} \overline{A} \otimes B\ket{\Phi}&=\sum_{i,j} \lambda_i \mu_j \bra{u_i} \overline{A} \ket{t_j} \bra{v_i} B \ket{w_j}\\
&\leq \Big|\sum_{i,j}  \lambda_i \mu_j |\bra{u_i} \overline{A} \ket{t_j}|^2\Big|^{1/2}\Big|\sum_{i,j}  \lambda_i \mu_j |\bra{v_i} B \ket{w_j}|^2\Big|^{1/2}\\
&= \big|  \bra{\Psi_L} \overline{A}\otimes A \ket{\Phi_L} \big|^{1/2} \big|  \bra{\Psi_R} \overline{B}\otimes B \ket{\Phi_R} \big|^{1/2},
\end{align*}
where $\ket{\Psi_L} = \sum_i \lambda_i \ket{u_i}\overline{\ket{u_i}}$, $\ket{\Phi_L} = \sum_j \mu_j \ket{t_j}\overline{\ket{t_j}}$, $\ket{\Psi_R} = \sum_i \lambda_i \overline{\ket{v_i}}\ket{v_i}$ and $\ket{\Phi_R} = \sum_j \mu_j \overline{\ket{w_j}}\ket{w_j}$.
Applying the Cauchy-Schwarz inequality once more,
\begin{align}
\Big|\bra{\Psi} \Big(\sum_i \overline{A_i} \otimes B_i\Big) \ket{\Phi}\Big|
&\leq \Big| \bra{\Psi_L} \Big(\sum_i \overline{A_i} \otimes A_i \Big)\ket{\Phi_L} \Big|^{1/2}\Big| \bra{\Psi_R} \Big( \sum_i \overline{B_i}\otimes B_i\Big) \ket{\Phi_R}\Big|^{1/2}.\label{eq:haagerup-2}
\end{align}
Since~\eqref{eq:haagerup-2} holds for any $\ket{\Psi}$ and $\ket{\Phi}$, the claim is proved.
 \end{proof}

\subsection{Games and strategies}\label{sec:prelim-games}

\paragraph{Definitions.}

A two-player game is specified by finite question sets $\mathcal{U}$ and $\mathcal{V}$, finite answer sets $\mathcal{A}$ and $\mathcal{B}$, a distribution $\mu$ on $\mathcal{U}\times\mathcal{V}$, and an acceptance criterion $V \subseteq \mathcal{A}\times\mathcal{B}\times \mathcal{U}\times\mathcal{V}$. We also write $V(a,b,u,v)=1$ for $(a,b,u,v)\in V$. The game may also be thought of as a bipartite constraint graph, with vertex sets $\mathcal{U}$ and $\mathcal{V}$, edge weights $\mu(u,v)$, and constraints $V(a,b,u,v)=1$ on each edge $(u,v)$. We will write $\mu_L$ for the marginal distribution of $\mu$ on $\mathcal{U}$, and $\mu_R$ its marginal on $\mathcal{V}$. (We omit the subscripts $L$ and $R$ when they are clear from context.) We also often write $v\sim u$ to mean that $v$ is distributed according to the conditional distribution $\mu(v|u) = \mu(u,v)/\mu_L(u)$. The size of $G$ is defined as $|\mathcal{U}||\mV||\mathcal{A}||\mB|$.

In this paper we focus on projection games, which are games for which the acceptance criterion $V$ is such that for every $(u,v,b)\in \mathcal{U}\times\mathcal{V}\times\mathcal{B}$ there is at most one $a\in\mathcal{A}$ such that $V(a,b,u,v)=1$. Equivalently, for every edge $(u,v)$ the associated constraint is a \emph{projection} constraint $\pi_{u,v}:\mathcal{B}\to \mathcal{A}$ such that $\pi_{u,v}(b)$ is the unique $a$ such that $V(a,b,u,v)=1$ if it exists, and a special ``fail'' symbol $\perp$ otherwise. When the edge $(u,v)$ is clear from context we will write $b\to a$ to mean that $\pi_{uv}(b)=a$. We also write $b\leftrightarrow b'$ to mean that there exists an $a$ such that $b\to a$ and $b'\to a$.

Given a projection game $G$, let $H$ be the weighted adjacency matrix associated with the square of $G$: $H$ is the $|\mathcal{V}|\times |\mathcal{V}|$ matrix whose $(v,v')$-th entry equals $\mu(v,v') := \sum_u \mu(u)\mu(v|u)\mu(v'|u)$. Let $D$ be the diagonal matrix with the degrees $\mu_R(v)$ on the diagonal, and $L := \Id - D^{-1/2}HD^{-1/2}$ the normalized Laplacian associated with the square of $G$. We say that a family of games $(G_n)$, where $G_n$ has size $n$, is expanding if the second smallest eigenvalue of $L_n=L(G_n)$ is at least a positive constant independent of $n$.

\paragraph{Projection games as operators.}

Let $G$ be a two-player projection game.
We will think of $G$ as a linear operator $G:\C^{|\mathcal{V}|} \otimes \C^{|\mathcal{B}|}\to \C^{|\mathcal{U}|} \otimes \C^{|\mathcal{A}|}$ defined as follows:
$$ G \,:=\, \sum_{u,v} \,\mu(v|u) \sum_{a,\,b\to a}\, \ket{u}\bra{v}\otimes \ket{a}\bra{b} \in \mathcal{L}(\C^{|\mathcal{V}|}\otimes \C^{|\mathcal{B}|},\C^{|\mathcal{U}|}\otimes \C^{|\mathcal{A}|}).$$
In other words, for $\ket{B}\in \C^{|\mathcal{V}|} \otimes \C^{|\mathcal{B}|}$, let $B_v^b = \bra{v,b}B\rangle$ denote the value of $\ket{B}$ at the coordinates indicated by basis vectors $\ket{v}\in\C^{\mathcal{V}}$ and $\ket{b}\in\C^{\mathcal{B}}$. Then
$$(GB)_u^a \,:=\, \bra{u,a} G \ket{B} \,=\, \sum_v \mu(v|u) \sum_{b\to a} B_v^b.$$
Note that here we adopted the convention that questions $u\in\mathcal{U}$ are summed over, whereas questions $v\in\mathcal{V}$ are weighted by the corresponding conditional probability $\mu(v|u)$.

\paragraph{Classical strategies.}

The actions of players in a game $G$ give rise to a ``probabilistic assignment'', a collection of probability distributions $\{p(a,b|u,v)\}$ such that, for any pair of questions $(u,v)$, $p(\cdot,\cdot|u,v)$ is a probability distribution on pairs of answers to those questions. We may also represent $p$ as the rectangular $ |\mathcal{V}||\mathcal{B}|\times |\mathcal{U}||\mathcal{A}|$ matrix whose $((v,b),(u,a))$-th entry is $p(a,b|u,v)$.
The \emph{value} achieved by $p$ in the game is defined as
$$\val(G,p) \,:=\,  \sum_u \,\mu(u)  \,\sum_v \,\mu(v|u) \sum_a\,\sum_{b\to a} \,p(a,b|u,v)\,=\,\Tr_\mu( Gp ) ,$$
where we introduced a trace $\Tr_\mu$ on the set of all $X\in \mathcal{L}(\C^{|\mathcal{U}|} \otimes \C^{|\mathcal{A}|})$ by defining
$$ \Tr_\mu(X) \,:=\, \sum_u \mu(u) \sum_a X_{(u,a),(u,a)}.$$
In cases of interest the family of distributions $\{p(a,b|u,v)\}$ is not arbitrary, but has a bipartite structure which reflects the bipartite nature of the game. \emph{Classical} deterministic\footnote{Randomized strategies are convex combinations of deterministic strategies, thus a randomized strategy can always be replaced by a deterministic one achieving at least as high a value.} strategies correspond to the case when $p(a,b|u,v)=f(a|u)g(b|v)$ for functions $f(\cdot|u):\mathcal{A}\to\{0,1\}$ and $g(\cdot|v):\mathcal{B}\to\{0,1\}$ taking the value $1$ exactly once. The functions $f$ and $g$ may be represented as vectors
$$\ket{f} = \sum_{u,a} f(a|u)\ket{u}\ket{a} \in \C^{|\mathcal{U}|} \otimes \C^{|\mathcal{A}|}\qquad \text{and} \qquad \ket{g} = \sum_{v,b} g(b|v)\ket{v}\ket{b} \in \C^{|\mathcal{V}|} \otimes \C^{|\mathcal{B}|}$$
respectively. $p$ is then the rank-one matrix $p=\ket{g}\bra{f}$, and we may express the value as
$$ \val(G,p) =  \Tr_\mu( G p) =  \langle f,Gg \rangle_{\mu_L} =   \sum_u \mu_L(u)\sum_v \mu(v|u) \sum_a \sum_{b\to a} f(a|u)g(b|v),$$
where the inner product $\langle \cdot,\cdot \rangle_{\mu_L}$ is defined on $(\C^{\mathcal{U}}\otimes \C^{\mathcal{A}})\times(\C^{\mathcal{U}}\otimes \C^{\mathcal{A}})$ by
$$ \langle f,g\rangle_{\mu_L} := \sum_u \mu_L(u) \sum_a f(a|u)g(a|u).$$
We may similarly define an inner product $\langle \cdot,\cdot \rangle_{\mu_R}$ on $(\C^{\mathcal{V}}\otimes \C^{\mathcal{B}})\times (\C^{\mathcal{V}}\otimes \C^{\mathcal{B}})$, and we will omit the subscripts $L,R$ when they are clear from context. Given a game matrix $G$, we define its adjoint $G^\dagger$ as the unique matrix such that $\langle f,Gg\rangle_{\mu_L} = \langle G^\dagger f,g\rangle_{\mu_R}$ for all $f\in \C^{\mathcal{U}\times\mathcal{A}}$ and $g\in\C^{\mathcal{V}\times\mathcal{B}}$. Formally, if $G =\sum_{u,v} \mu(v|u) \sum_{a,b\to a} \ket{u}\bra{v}\otimes \ket{a}\bra{b}$ then $G^\dagger = \sum_{u,v} \mu(u|v) \sum_{a,b\to a} \ket{v}\bra{u}\otimes \ket{b}\bra{a}$.

\paragraph{Quantum strategies.}

Next we consider quantum strategies. A quantum strategy is specified by measurements $\{A_u^a\}_a$ for every $u$ and $\{B_v^b\}_b$ for every $v$, where in general a measurement is any collection of positive semidefinite operators, of arbitrary finite dimension $d$, that sum to identity. For any state $\ket{\Psi}$ representing the entanglement between the players,\footnote{In the literature the state $\ket{\Psi}$ is usually considered to be an integral part of the strategy. However it will be more convenient for us to not fix it a priori. Given measurement operators for both players in a game, it is always clear what is the optimal choice of entangled state; it is obtained as the largest eigenvector of a given operator depending on the game and the measurements (see below).} this strategy gives rise to the family of distributions
$$p_{\ket{\Psi}}(a,b|u,v)\,:=\,\bra{\Psi} \overline{A_u^a} \otimes B_v^b \ket{\Psi}.\footnote{The complex conjugate on $A$ is not necessary, but for our purposes it is natural to include it in light of the proof of Lemma~\ref{lem:best-response}.}$$
This formula, dictated by the laws of quantum mechanics, corresponds to the probability that the players obtain outcomes $a$, $b$ when performing the measurements $\{\overline{A_u^a}\}$, $\{B_v^b\}$ on their respective share of $\ket{\Psi}$. One can check that positive semidefiniteness of the measurement operators together with the ``sum to identity'' condition imply that $p_{\ket{\Psi}}(\cdot,\cdot|u,v)$ is a well-defined probability distribution on $\mathcal{A}\times \mB$.
To a quantum strategy we associate vectors
$$\ket{A} = \sum_{u,a} \ket{u}\ket{a}\otimes A_u^a \in \C^{|\mathcal{U}|}\otimes \C^{|\mathcal{A}|} \otimes \mathcal{L}(\C^d)\qquad\text{and}\qquad \ket{B} = \sum_{v,b} \ket{v}\ket{b}\otimes B_v^b\in  \C^{|\mathcal{V}|}\otimes \C^{|\mathcal{B}|} \otimes \mathcal{L}(\C^d).$$
(Note that these definitions reduce to classical strategies whenever $d=1$.)
 To express the success probability of this strategy in a game $G$ we extend the definition of the inner product $\langle \cdot,\cdot\rangle_\mu$ as follows.
\begin{definition}[Extended Inner Product]\label{def:innerprod}
The extended inner product
$$\langle \cdot,\cdot\rangle_{\mu_L} :\,\C^{|\mathcal{U}|} \otimes \C^{|\mathcal{A}|}\otimes \mathcal{L}(\C^d)\times \C^{|\mathcal{U}|} \otimes \C^{|\mathcal{A}|}\otimes \mathcal{L}(\C^d)\to \lin(\C^d)\otimes \lin(\C^d)$$
is defined, for $\ket{A}= \sum_{u,a} \ket{u}\ket{a}\otimes A_u^a$ and $\ket{B}= \sum_{u,a} \ket{u}\ket{a}\otimes B_u^a$,\footnote{Note the definition depends on a fixed choice of basis for the spaces $\C^{|\mathcal{U}|}$ and $\C^{|\mathcal{A}|}$.} by
$$ \langle A,B \rangle_{\mu_L} := \sum_u \mu_L(u) \sum_a \overline{A_u^a}\otimes  B_u^a.$$
\end{definition}
%
With this definition the success probability of the strategy $(\ket{A},\ket{B})$ in $G$ can be expressed as
\begin{align*}
\valq(G,\ket{A},\ket{B})&:= \big\|\langle A,(G\otimes \Id) B\rangle_\mu\big\|_\infty \\
&= \Big\|\sum_{u,a} \,\mu(u) \overline{A_u^a} \otimes \Big( \sum_{v} \,\mu(v|u)\sum_{b\to a} B_v^b \Big)\Big\|_\infty\\
&= \Big\|\sum_{u,v} \,\mu(u,v) \sum_{a,b\to a}  \overline{A_u^a} \otimes B_v^b \Big\|_\infty\\
&= \max_{\ket{\Psi}\in\C^d\otimes \C^d,\|\ket{\Psi}\|=1}\, \sum_{u,v} \,\mu(u,v) \sum_{a,\,b\to a}  \bra{\Psi}\overline{A_u^a} \otimes B_v^b \ket{\Psi}.
\end{align*}
We also define the entangled value of the game, $\valq(G)$, to be the highest value achievable by any quantum strategy:
\begin{align}
\valq(G) &= \sup_{\ket{A},\ket{B}} \valq(G,\ket{A},\ket{B})\notag\\
& = \sup_{\ket{A},\ket{B}} \big\|\langle A,(G\otimes \Id) B \rangle_\mu\big\|_\infty \notag\\
&= \sup_{\{A_u^a\},\{B_v^b\},\ket{\Psi}}\, \sum_{u,v}\mu(u,v) \, \sum_{a,b\to a}\, \bra{\Psi} \overline{A_u^a} \otimes B_v^b \ket{\Psi}\notag\\
&=\sup_{\{A_u^a\},\{B_v^b\},\ket{\Psi}} \,\sum_u\,\mu(u) \sum_a\, \bra{\Psi}\overline{A_u^a}\otimes B_u^a\ket{\Psi},\label{eq:val-1}
\end{align}
where here we slightly abuse notation and denote
\beq\label{eq:game-op}
B_{u}^a \,:=\, (\bra{u}\bra{a}\otimes \Id)(G\otimes \Id)\ket{B}\,=\,\sum_v \mu(v|u) \sum_{b\to a} B_v^b.
\eeq
We note that in the above the supremum may in general not be attained as optimal strategies may require infinite dimensions. In this paper we always restrict ourselves to finite dimensional strategies.\footnote{Thus when we say that $(\ket{A},\ket{B})$ achieve the value of $G$ we really mean that $(\ket{A},\ket{B})$ are finite-dimensional strategies whose value in $G$ can be made arbitrarily close to the optimum; for clarity we ignore this simple technicality in the whole paper.}

It is well-known that any two-player game can be made into a projection game while essentially preserving its classical value. The following claim gives a partial extension of this fact to the case of the entangled value. 

\begin{claim}\label{claim:projection}
There exists a polynomial-time computable transformation mapping  any two-player one-round game $G$ to a projection game $G'$ such that the following hold:
$$1-\val(G') \,\leq\,  1-\val(G) \,\leq\, 2(1-\val(G')).$$
In particular, $\val(G')=1$ if and only if $\val(G)=1$, and $1-\val(G') = \Theta(1-\val(G))$.
Moreover, for the entangled value we have the weaker bound
$$\valq(G') \,\leq\, \sqrt{\frac{1+\valq(G)}{2}},$$
 which implies $1-\valq(G') = \Omega(1-\valq(G))$.
\end{claim}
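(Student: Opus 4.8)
The claim has two separate parts, and I plan to treat them independently.

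For the classical value bound, the plan is to describe explicitly the standard projection-game transformation and track the value on both sides. First I would define $G'$: the referee samples $(u,v)\sim\mu$ as in $G$, then with probability $1/2$ plays the ``original'' game but in projection form (sending Alice the pair $(u,v)$ and asking her for $(a,b)$, while Bob receives $v$ alone and must answer $b$, with the projection constraint checking that the $b$-coordinate of Alice's answer agrees with Bob's answer and that $(a,b)$ satisfies $V$), and with the remaining probability plays a ``consistency'' check that forces the two players to be consistent. The point is that Alice's answer $(a,b)$ determines Bob's answer $b$, so the constraint is genuinely a projection. Given a classical strategy for $G$ achieving value $\val(G)$, it lifts to a strategy for $G'$ with value $1-\tfrac12(1-\val(G))$, giving $1-\val(G')\le 1-\val(G)$; conversely, from any strategy for $G'$ one extracts a strategy for $G$ losing at most a factor of $2$, yielding $1-\val(G)\le 2(1-\val(G'))$. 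These are routine probability bookkeeping once the reduction is fixed, so I would state the construction carefully and leave the arithmetic terse.

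For the entangled value, the inequality to prove is $\valq(G')\le\sqrt{(1+\valq(G))/2}$. The natural approach is to start from an optimal quantum strategy for $G'$ and use it to construct a strategy for $G$, controlling the loss. Because $G'$ mixes the original game with a consistency test, a quantum strategy for $G'$ that does well must both win the embedded copy of $G$ and pass the consistency check with high probability; I would argue that passing consistency forces Alice's measurement outcomes on the relevant register to be close (in the appropriate operational sense measured against the shared state $\ket{\Psi}$) to Bob's, so that Alice's marginal measurement $\{A_u^a\}$ together with Bob's $\{B_v^b\}$ constitutes a good strategy for $G$. Quantitatively, if $\valq(G')=1-\delta'$ then both sub-games are won with probability at least $1-2\delta'$, and combining the consistency guarantee with the success on the embedded $G$ gives $\valq(G)\ge 1-O(\delta')$, which after rearranging yields the stated square-root bound. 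I expect the constant to come out exactly as $\sqrt{(1+\valq(G))/2}$ by being careful about the $1/2$ weighting of the two sub-games rather than using crude union bounds.

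The main obstacle is the entangled-value direction, specifically turning ``passes the consistency check'' into a usable closeness statement between the two players' measurements without any communication or rounding. In the classical case consistency is trivial because both players apply deterministic functions; quantumly the outcome statistics $\bra{\Psi}\overline{A}\otimes B\ket{\Psi}$ couple the two measurements through the state, and I would need a clean operator inequality—most likely an application of Claim~\ref{claim:haagerup} (the matrix Cauchy--Schwarz) or Claim~\ref{claim:ando}—to convert the consistency success probability into a bound showing that replacing Bob's answer by Alice's embedded $b$-value (or vice versa) costs little. The delicate point is that this substitution must be done while keeping the operators genuine measurements summing to identity, so that the resulting pair is an admissible quantum strategy for $G$; getting the square-root loss exactly, rather than a weaker polynomial bound, is where the care lies.
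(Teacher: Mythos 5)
Your construction of $G'$ and your treatment of the classical value match the paper's (up to relabeling which player receives the question pair; note that for $G'$ to be a projection game the pair-receiving player must sit on the $\mathcal{B}$-side of the projection, since it is his answer $(a,b)$ that determines the other player's answer and not the reverse). The gap is in the entangled-value bound, in two places. First, the strategy for $G$ you propose to extract --- ``Alice's marginal measurement $\{A_u^a\}$ together with Bob's $\{B_v^b\}$'' --- is not an admissible strategy for $G$: every operator of the pair-receiving player depends on both $u$ and $v$, so no marginal of them indexed by $u$ alone is available to a player of $G$ who sees only $u$. The paper's strategy for $G$ discards the pair-player's operators entirely and uses the \emph{single-question} player's two measurement families, $\{A_u^a\}_{u\in\mU}$ and $\{A_v^b\}_{v\in\mV}$, as the strategies of the two players of $G$.

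Second, and more importantly, the route ``consistency forces closeness, then substitute one family of operators for the other'' will not give the stated bound. Each substitution of an operator inside $\bra{\Psi}\overline{X}\otimes Y\ket{\Psi}$ controlled by a consistency probability $1-\delta'$ costs $O(\sqrt{\delta'})$ (the union-bound reasoning that makes this free classically is unavailable because the two measurement families of the single-question player need not commute), so this route yields only $\valq(G)\ge 1-O(\sqrt{1-\valq(G')})$, i.e.\ $1-\valq(G')=\Omega\big((1-\valq(G))^2\big)$, which is weaker than the claimed $\Omega(1-\valq(G))$ and does not give $\valq(G')\le\sqrt{(1+\valq(G))/2}$. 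You correctly identified Claim~\ref{claim:haagerup} as the likely tool and the square-root loss as the delicate point, but the missing idea is \emph{where} to apply it: the paper writes
$$\valq(G') \,=\, \Big\|\Es{u\sim v}\,\tfrac12\sum_{(a,b):V(a,b,u,v)=1}\overline{(A_u^a+A_v^b)}\otimes B_{u,v}^{a,b}\Big\|_\infty$$
and applies the matrix Cauchy--Schwarz \emph{once}, with $\tfrac12(A_u^a+A_v^b)$ on one side and $B_{u,v}^{a,b}$ on the other; the $B$-factor is bounded by $1$ using $\sum_{a,b}B_{u,v}^{a,b}\le\Id$, and expanding $\overline{(A_u^a+A_v^b)}\otimes(A_u^a+A_v^b)$ makes the value of $G$ appear in the cross terms, giving $\valq(G')^2\le\tfrac12+\tfrac12\valq(G)$ in one step, with no separate consistency argument and no square-root loss.
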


\begin{proof}
Let $G$ be a game with (without loss of generality disjoint) question sets $\mU$, $\mV$, answer sets $\mA$, $\mB$, distribution on questions $\mu$ and acceptance predicate $V$. Let $G'$ be the projection game corresponding to the following scenario. The referee selects a pair of questions $(u,v)$ at random from $\mu$, which it sends to the second player, and then sends either $u$ or $v$ to the first player, each with probability $1/2$. Formally, $G'$ is defined by question sets $\mU'=\mU\cup\mV$, $\mV'=\mU\times \mV$, answer sets $\mA'=\mA\cup\mB$, $\mB'=\mA\times \mB$, and a distribution $\mu'$ given by $\mu'(u,(u,v))=\mu'(u,v)/2$, $\mu'(v,(u,v))=\mu'(u,v)/2$, and $0$ otherwise. For any $(u,v)$ and $(a,b)$ let $\pi_{u,(u,v)}$ be such that $\pi_{u,(u,v)}(a,b)=a$ and $\pi_{v,(u,v)}(a,b)=b$ if $V(a,b,u,v)=1$, and there is no valid answer for the first player if the second player's answers are such that $V(a,b,u,v)=0$.

Then clearly $G'$ is a projection game. Let $\ket{f},\ket{g}$ be classical deterministic strategies for the players such that $\val(G,\ket{f},\ket{g})=\val(G)$. Consider the strategy $(\ket{f'},\ket{g'})$ for $G'$ in which $\ket{f'}$ answers as $\ket{f}$ to questions $u\in\mU$ and as $\ket{g}$ to questions $v\in\mV$, and $\ket{g'}$ answers as $(\ket{f},\ket{g})$. Then whenever the strategy $(\ket{f},\ket{g})$ provides answers to a pair of questions $(u,v)$ that satisfy the predicate $V$ the strategy $(\ket{f'},\ket{g'})$ gives answers to both $(u,(u,v))$ and $(v,(u,v))$ that are accepted in $G'$, hence
$$\val(G')\geq \val(G',\ket{f'},\ket{g'}) \geq \val(G,\ket{f},\ket{g}) = \val(G). $$
Conversely, let $(\ket{f'},\ket{g'})$ be a strategy for $G'$ such that $\val(G')=\val(G',\ket{f'},\ket{g'})$. Decompose $\ket{f'}$ into a pair of strategies $\ket{f},\ket{g}$ in $G$, depending on whether the question is $u\in\mU$ or $v\in\mV$. The pair $(\ket{f},\ket{g})$ will give a rejected answer to a pair of questions $(u,v)$ only if $(\ket{f'},\ket{g'})$ gave a rejected answer to at least one of the questions $(u,(u,v))$ and $(v,(u,v))$ in $G'$. In the worst case the $(1-\val(G',\ket{f},\ket{g}))$ probability that $(\ket{f'},\ket{g'})$ provides rejected answers in $G'$ is, say, fully concentrated on questions of the form $(u,(u,v))$. Hence
$$\val(G) \geq \val(G,\ket{f},\ket{g}) \geq 1-2(1-\val(G',\ket{f'},\ket{g'})) = 1-2(1-\val(G')).$$
Finally, let $(\ket{A},\ket{B})$ be a pair of quantum strategies such that $\valq(G')=\valq(G',\ket{A},\ket{B})$. To $\ket{A}$ we unambiguously associate measurement operators $\{A_u^a\}_a$ for every $u\in \mU$, and $\{A_v^b\}_b$ for $v\in\mV$. Hence
\begin{align*}
\valq(G') &= \Big\|\Es{u\sim v}\, \frac{1}{2} \sum_{(a,b):V(a,b,u,v)=1}  \overline{A_u^a} \otimes B_{u,v}^{a,b} + \overline{A_v^b} \otimes B_{u,v}^{a,b} \Big\|_\infty\\
&\leq \Big\|\Es{u\sim v}\,\frac{1}{4} \sum_{(a,b):V(a,b,u,v)=1} \overline{(A_u^a+ A_v^b)} \otimes (A_u^a+ A_v^b) \Big\|_\infty^{1/2} \Big\|\Es{u\sim v}\sum_{(a,b):V(a,b,u,v)=1} \overline{B_{u,v}^{a,b}}\otimes B_{u,v}^{a,b}\Big\|_\infty^{1/2}\\
&\leq \Big(\frac{1}{2} + \frac{1}{2}\Big\|\Es{u\sim v} \sum_{(a,b):V(a,b,u,v)=1} \overline{A_u^a} \otimes A_v^b) \Big\|_\infty\Big)^{1/2},
\end{align*}
where the first inequality uses Claim~\ref{claim:ando} and the last uses the triangle inequality for the operator norm and the fact that $\|\sum_i X_i \otimes Y_i \|_\infty = \|\sum_i Y_i \otimes X_i\|_\infty$ for any $X_i,Y_i$ to bound the first term, and uses $\sum_{a,b} B_{u,v}^{a,b}\leq \Id$ for every $u,v$, which implies
$$\Big\|\Es{u\sim v}\sum_{(a,b):V(a,b,u,v)=1} \overline{B_{u,v}^{a,b}}\otimes B_{u,v}^{a,b}\Big\|_\infty \le \Big\|\Es{u\sim v}\sum_{(a,b):V(a,b,u,v)=1} \Id \otimes B_{u,v}^{a,b}\Big\|_\infty \le \normi{\Id\otimes \Id}=1,$$
to bound the second. 
Hence the pair of strategies $(\ket{A_{|\mU}},\ket{A_{|\mV}})$ for $G$ achieves a value at least
$$\valq(G)\,\geq\,\valq(G, \ket{A_{|\mU}},\ket{A_{|\mV}}) \,\geq\, 2\,\valq(G')^2-1,$$
as claimed.
\end{proof}

\section{Relaxations of the game value}\label{sec:relaxations}

In this section we introduce two relaxations of the entangled value $\valq(G)$ of a projection game $G$. Both are quantum analogues of relaxations in \cite{DinurS13parallel}, and are used in the same way. The first relaxation, denoted $\qnorm{G}$, is related to playing a ``squared'' version of $G$ with two players Bob and Bob' treated symmetrically. It is defined in Section~\ref{sec:star}, and is easily seen to give a good approximation to $\valq$, as shown in the following lemma (see Section~\ref{sec:star} for the proof):

\begin{lemma}\label{lem:best-response}
For any projection game $G$,
\begin{equation}\label{eq:star-approx}
\valq(G)^2 \le \qnorm{G}^2 \le \valq(G).
\end{equation}
\end{lemma}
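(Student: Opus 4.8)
The plan is to establish the two inequalities in~\eqref{eq:star-approx} separately, and the key is to understand the definition of $\qnorm{G}$, which the excerpt tells us corresponds to playing a symmetrized ``square'' game $G^\dagger G$ with two symmetric copies of Bob. Concretely, I expect $\qnorm{G}^2$ to be defined as the operator norm $\sup_{\ket{B}} \norm{\langle B, (G^\dagger G \otimes \Id) B\rangle}_\infty$ optimized over quantum strategies $\ket{B}$ for Bob (with the induced optimal entangled state), so that $\qnorm{G}^2$ measures the success probability of the symmetric strategy in which both Bob-players use $\{B_v^b\}$ and must produce answers $b, b'$ consistent with a common $a$. The two bounds then correspond to (i) passing from an optimal Alice/Bob strategy in $G$ to a symmetric Bob/Bob strategy in the square (the lower bound $\valq(G)^2 \le \qnorm{G}^2$), and (ii) passing from a Bob/Bob strategy in the square back to an Alice/Bob strategy in $G$ (the upper bound $\qnorm{G}^2 \le \valq(G)$).

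For the \textbf{upper bound} $\qnorm{G}^2 \le \valq(G)$, I would take an optimal strategy $\ket{B}$ for the square game and construct an Alice strategy for $G$. As the excerpt hints, if $\ket{B}$ is a quantum strategy then $(G\otimes \Id)\ket{B}$ is, after the appropriate renormalization, a valid strategy for the first player in $G$; the operators $B_u^a = \sum_v \mu(v|u)\sum_{b\to a} B_v^b$ from~\eqref{eq:game-op} do not quite sum to identity, but projecting/normalizing them yields a genuine measurement $\{A_u^a\}$. Pairing this $\ket{A}$ with the original $\ket{B}$ in $G$ and expanding $\valq(G,\ket{A},\ket{B})$ using~\eqref{eq:val-1} should directly produce $\qnorm{G}^2$ on the inside, giving $\qnorm{G}^2 \le \valq(G,\ket{A},\ket{B}) \le \valq(G)$. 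This is the ``easy observation'' direction, analogous to the second inequality in~\eqref{eq:intro-square}.

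For the \textbf{lower bound} $\valq(G)^2 \le \qnorm{G}^2$, I would start from an optimal Alice/Bob strategy $(\{A_u^a\}, \{B_v^b\}, \ket{\Psi})$ for $G$ achieving $\valq(G)$, and show the symmetric Bob/Bob strategy $\{B_v^b\}$ does at least $\valq(G)^2$ in the square. The natural tool is the matrix Cauchy–Schwarz inequality, Claim~\ref{claim:haagerup}, which is precisely tailored to relate $\norm{\sum_i \overline{A_i}\otimes B_i}_\infty^2$ to the product $\norm{\sum_i \overline{A_i}\otimes A_i}_\infty \cdot \norm{\sum_i \overline{B_i}\otimes B_i}_\infty$. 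Writing $\valq(G) = \norm{\sum_{u,v}\mu(u,v)\sum_{a,b\to a}\overline{A_u^a}\otimes B_v^b}_\infty$ and applying Claim~\ref{claim:haagerup} with a suitable indexing should yield $\valq(G)^2 \le \norm{\cdots}_\infty \cdot \qnorm{G}^2$ where the first factor, coming from Alice's operators, is bounded by $1$ using the normalization $\sum_a A_u^a = \Id$ (exactly as in the proof of Claim~\ref{claim:projection}). The excerpt also notes this can be seen via the pretty-good measurement, which gives an alternative constructive route to the same bound.

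The main obstacle I anticipate is the lower bound: getting the indexing in Claim~\ref{claim:haagerup} exactly right so that the projection structure $b\to a$ is respected and the ``Alice factor'' genuinely collapses to something bounded by $1$ rather than an uncontrolled quantity. Unlike the unique-games case, in a projection game several $b$'s map to the same $a$, so the symmetrization on Bob's side must correctly encode the relation $b\leftrightarrow b'$ (the existence of a common $a$), and I would need to verify that the square game's acceptance predicate emerges naturally from the $\qnorm{G}^2$ term. The upper bound, by contrast, should be routine once the renormalization of $\{B_u^a\}$ into a valid measurement is handled carefully.
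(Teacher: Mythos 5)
Your proposal matches the paper's proof: the upper bound follows from observing that $(G\otimes\Id)\ket{B}$ is already a valid first-player strategy (the projection property gives $\sum_a B_u^a\le\Id$ directly, so no renormalization is needed beyond adding a dummy outcome), and the lower bound is exactly the application of Claim~\ref{claim:haagerup} to $\sum_{u}\mu(u)\sum_a\overline{A_u^a}\otimes B_u^a$ with $B_u^a$ as in~\eqref{eq:game-op}, where the Alice factor is bounded by $1$ via $\sum_a A_u^a\le\Id$ and the Bob factor is $\qnorm{G\otimes\Id\ket{B}}^2$ by definition. The paper in fact proves the slightly stronger per-strategy statement $\max_{\ket A}\valq(G,\ket A,\ket B)^2\le\qnorm{G\otimes\Id\ket{B}}^2\le\max_{\ket A}\valq(G,\ket A,\ket B)$ and then takes the supremum over $\ket B$, exactly as you outline.
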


The second relaxation, denoted $\rval(G)$, is defined in Section~\ref{sec:rval}. It will be proven to be a good approximation to $\qnorm G$ and thus to $\valq$, although this will require more work.

\begin{lemma}\label{lem:approx}
For any projection game $G$,
\begin{equation}\label{eq:approx}
 \qnorm{G}^2 \le \rval(G) \le 1-C(1-\qnorm{G}^2)^c,
\end{equation}
for some positive constants $C,c>0$.
\end{lemma}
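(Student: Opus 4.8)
The statement to prove is Lemma~\ref{lem:approx}, relating the relaxed value $\rval(G)$ back to $\qnorm{G}^2$. The lower bound $\qnorm{G}^2 \le \rval(G)$ should follow directly from the definition of $\rval$ as a relaxation: since vector quantum strategies form a superset of the strategies achieving $\qnorm{G}^2$, the optimum over the larger class can only be larger. I expect this inequality to be essentially by construction, once the notion of vector quantum strategy (Definition~\ref{def:fractional-strategy}) is set up so that every genuine strategy for the square game embeds as a vector strategy with the same value. The real content is the upper bound $\rval(G) \le 1 - C(1-\qnorm{G}^2)^c$, equivalently $1 - \rval(G) \ge C\,(1-\qnorm{G}^2)^c$, which says that a vector strategy with $\rval$ very close to $1$ can be \emph{rounded} to a genuine quantum strategy for the square game $G^\dagger G$ whose value $\qnorm{G}^2$ is also close to $1$.

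\textbf{Rounding strategy.}
First I would take a near-optimal vector quantum strategy, say achieving $\rval(G) = 1-\delta$, and unpack what this means operationally: the vector strategy arises as the image under a game operator of a strategy for a product game, so it consists of (non-normalized) positive-semidefinite operators together with an associated family of post-measurement states indexed by the players' questions. The goal is to produce, from this data, genuine measurement operators $\{B_v^b\}$ and a single shared entangled state that together achieve value at least $1 - O(\delta^{1/c})$ in the symmetric square game. The core difficulty is that the two players (both playing the ``Bob'' role in $G^\dagger G$) hold questions $v$ and $v'$ drawn from the conditional distribution given a common $u$, and the natural state each would want to share depends on their own question. I would show, using the structure guaranteed by $\rval(G)\ge 1-\delta$ together with Ando's identity (Claim~\ref{claim:ando}) and the matrix Cauchy--Schwarz inequality (Claim~\ref{claim:haagerup}), that these target states are \emph{close} on average over $(v,v')$: the inner products between the states the two players wish to hold are $1-O(\delta)$ in expectation. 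This reduces the problem to the players synchronously preparing a shared state $\ket{\Psi}\approx\ket{\psi}\approx\ket{\varphi}$ given only local classical descriptions of nearby targets $\ket{\psi},\ket{\varphi}$.

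\textbf{The hard part.}
The main obstacle, and where I expect almost all the technical work to go, is the synchronous preparation step. In the \emph{expanding} case I would argue that the conditioned post-measurement states are essentially independent of the question (the expansion of the constraint graph of $G^\dagger G$, i.e.\ the spectral gap of the Laplacian $L$ defined in Section~\ref{sec:prelim-games}, forces the family of states to concentrate around a single average state), so both players can simply renormalize their measurements against this common state with only $O(\delta)$ loss, yielding the optimal exponent $c=1$. In the general non-expanding case the target states genuinely vary with the question, and the players cannot communicate; here I would invoke the quantum correlated sampling lemma (to be proved in Section~\ref{sec:correlated}), which lets the two parties, starting from a fixed universal embezzlement state independent of their inputs, each locally generate a joint state $\ket{\Psi}$ close to both $\ket{\psi}$ and $\ket{\varphi}$ whenever $\ket{\psi}\approx\ket{\varphi}$. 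Applying this lemma converts the average closeness of target states into an actual shared state achieving high value in $G^\dagger G$, at the cost of a weaker (polynomial) exponent $c$ coming from the quantitative guarantees of the sampling lemma.

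\textbf{Assembling the bound.}
Finally I would track the loss at each stage --- the passage from $\rval$ to average state-closeness, the correlated-sampling fidelity loss, and the renormalization of the measurements --- and combine them to conclude that a vector strategy of value $1-\delta$ yields a genuine square-game strategy of value at least $1 - C'\delta^{1/c}$, which rearranges to exactly the claimed $\rval(G) \le 1 - C(1-\qnorm{G}^2)^c$. I expect the exponent $c$ to be dominated by the correlated sampling lemma rather than by the passage from $\rval$ to state-closeness, consistent with the paper's remark that the quantum approximate equality $\rval(G)\approx\valq(G)$ is weaker and likely not tight.
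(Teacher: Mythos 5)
Your plan matches the paper's proof essentially step for step: the lower bound is immediate from the definitions, and the upper bound is established by rounding a near-optimal vector strategy to a genuine strategy for $G^\dagger G$ via renormalized measurement operators and question-dependent post-measurement states, with the expanding case handled through the spectral gap of the Laplacian (giving $c=1$) and the general case through the quantum correlated sampling lemma (giving a weaker exponent). The only details you leave implicit — the specific choice of unitaries in the polar decomposition used to define the renormalized operators, and the role of the $\max$ over $v$ in the defining inequality of $\rval$ when selecting a good $\omega$ — are exactly the technical points the paper's Section~\ref{sec:approx} fills in.
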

The proof of Lemma~\ref{lem:approx} is given in Section~\ref{sec:approx}.
The definition of $\rval$ is motivated by the following multiplicative property.

\begin{lemma}\label{lem:mult}
For any two projection games $G$ and $H$,
\begin{equation}\label{eq:mult}
\qnorm{G\otimes H}^2 \,\le\, \rval(G) \cdot \qnorm{H}^2,
\end{equation}
and $\rval$ is perfectly multiplicative:
\begin{equation}\label{eq:rmult}
\rval(G\otimes H) \,=\, \rval(G) \cdot \rval(H).
\end{equation}
\end{lemma}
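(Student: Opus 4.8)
Lemma~\ref{lem:mult} has two parts, and the plan is to prove~\eqref{eq:rmult} first (perfect multiplicativity), then use the underlying structure to prove the mixed inequality~\eqref{eq:mult}. Since $\rval$ is defined as an optimization over ``vector quantum strategies'' (Definition~\ref{def:fractional-strategy}), the proof of~\eqref{eq:rmult} should reduce to showing that the set of valid vector strategies, together with the objective functional, tensorizes cleanly. The plan is to establish the two directions $\rval(G\otimes H)\ge\rval(G)\cdot\rval(H)$ and $\rval(G\otimes H)\le\rval(G)\cdot\rval(H)$ separately.

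For the $\ge$ direction (the easy one), I would take near-optimal vector strategies for $G$ and for $H$ and form their tensor product. The key point to verify is that the tensor product of two valid vector strategies is again a valid vector strategy for $G\otimes H$, and that the objective (presumably an operator norm of an extended inner product of the form $\|\langle A,(G\otimes\Id)B\rangle\|_\infty$, or whatever the precise definition of $\rval$ gives) multiplies. Because the operator norm is multiplicative under tensor products, $\|X\otimes Y\|_\infty=\|X\|_\infty\|Y\|_\infty$, and because the game operator for $G\otimes H$ factors as $(G\otimes I)\cdot(I\otimes H)$ (exactly the decomposition highlighted in the proof sketch), the objective should factor as a product. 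The normalization constraints defining vector strategies should likewise tensorize, since they are themselves expressed through tensor-compatible positive-semidefiniteness and ``sum to identity'' conditions.

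For the $\le$ direction I would take an optimal vector quantum strategy for the product game $G\otimes H$ and argue it can be ``split'' or restricted to yield vector strategies for $G$ and for $H$ whose product of values dominates. This is where I expect the main obstacle to lie: unlike the forward direction, a generic strategy for $G\otimes H$ need not be a tensor product, so one must show that the relaxed value is insensitive to this — essentially that the relaxation $\rval$ is defined loosely enough that the marginal/conditional structure of an arbitrary product-game strategy can be repackaged as a legitimate pair of vector strategies. Concretely, I would fix the optimal bipartite state achieving the operator norm, regard the $G$-part of the strategy as inducing a vector strategy for $G$ (with the $H$-registers absorbed into the ``$\C^d$'' ancilla space that vector strategies are allowed to carry), and similarly for $H$; the crucial check is that the resulting objects satisfy the vector-strategy normalization constraints and that the value does not increase upon this decomposition. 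The multiplicativity of the relaxation is precisely the feature engineered into the definition of $\rval$, so I expect the constraints to have been set up so that this splitting is automatic.

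Finally, for the mixed bound~\eqref{eq:mult}, I would directly invoke the operator-level decomposition $G\otimes H=(G\otimes I)(I\otimes H)$ already sketched for the classical case in~\eqref{eq:intro-valplus}. Starting from a strategy achieving $\qnorm{G\otimes H}^2$, applying the $(I\otimes H)$ operator produces a vector that is no longer a genuine strategy but \emph{is} a valid vector strategy for $G$ in the sense of Definition~\ref{def:fractional-strategy}; its renormalization contributes a factor of $\qnorm{H}^2$, while the residual $G$-part is bounded by $\rval(G)$ by definition of the relaxation. The technical content is verifying that the intermediate vector indeed meets the vector-strategy constraints and that the extended inner product bookkeeping yields exactly the claimed product $\rval(G)\cdot\qnorm{H}^2$ rather than something larger; here the matrix Cauchy–Schwarz inequality (Claim~\ref{claim:haagerup}) is the natural tool for controlling the cross terms when passing to operator norms, just as Cauchy–Schwarz was used in the classical derivation of~\eqref{eq:intro-valplus}.
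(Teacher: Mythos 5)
Your proposal follows essentially the same route as the paper: both inequalities come from the factorization $G\otimes H=(G\otimes I)(I\otimes H)$, observing that $(I\otimes H\otimes\Id)\ket{B}$ is a vector quantum strategy for $G$ over $\Omega=\mathcal{U}_H\times\mathcal{A}_H$ with $\plusnorm{\cdot}$-norm controlled by $\qnorm{H}$ (resp.\ by $\rval(H)$ for the multiplicativity upper bound), after which the definition of $\rval(G)$ gives the claim directly, and the $\geq$ direction of~\eqref{eq:rmult} is the easy tensoring of strategies you describe. The only cosmetic difference is that no application of Claim~\ref{claim:haagerup} is needed in the paper's argument --- the definitions of $\plusnorm{\cdot}$ and $\rval$ are engineered so that the bookkeeping is immediate.
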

The proof of Lemma~\ref{lem:mult} is given in Section~\ref{sec:rval}.

With these three inequalities in hand we easily derive the parallel repetition theorem, Theorem~\ref{thm:main}, as follows. By repeated applications of (\ref{eq:mult}), followed by~\eqref{eq:approx}, we get
\[ \qnorm{G^{\otimes k}}^2  = \qnorm{G\otimes G^{\otimes k-1}}^2 \le \rval(G)\cdot \qnorm{G^{\otimes k-1}}^2 \le \cdots \le (\rval(G))^k.
\]
Combining with~\eqref{eq:star-approx} and~\eqref{eq:approx} we get
\[ \valq(G^{\otimes k})^2  \le \qnorm{G^{\otimes k}}^2\le (\rval(G))^k \le (   1-C(1-\qnorm{G}^2)^c   )^k \le
(\,1-C(1-\valq(G))^c\,)^k,
\]

where the last step follows from (\ref{eq:star-approx}) and the monotonicity of $x\mapsto 1-C(1-x)^c$ on $[0,1]$.

\subsection{The square norm}\label{sec:star}
\begin{definition}\label{def:sqnorm} For a game $G$ and a quantum strategy $\ket B$ write $\qnorm{G\otimes \Id\ket{B}} :=  ( \normi{\iprod{ G\otimes \Id B,G\otimes \Id B}_\mu })^{1/2}$ and define
$$\qnorm{G} \,:=\,  \sup_{\ket B} \qnorm{G\otimes \Id\ket{B}},$$
where the supremum is taken over all $d$ and quantum strategies $\ket B \in \C^{|\mV|}\otimes \C^{|\mB|} \otimes \lin(\C^d)$.
\end{definition}
We note that $\qnorm{\cdot}$ is clearly homogeneous and non-negative. Although we will not use it, one can check that $\qnorm{\cdot}$ is also definite, and hence a norm, by setting $B_v^b = \Id$ for every $v$ and any $b$ such that $(G^\dagger G)_{(v,b),(v,b)}\neq 0$ (when it exists, and for an arbitrary $b$ otherwise).

Lemma~\ref{lem:best-response} claims that $\qnorm{G\otimes \Id\ket{B}}$ gives a good approximation to the maximum success probability in the game, when Bob uses the strategy specified by $\ket{B}$. We give a self-contained proof of the lemma below, but before proceeding readers familiar with quantum information theory may find it interesting to note that a direct proof of the first inequality can be derived using known properties of the pretty-good measurement (PGM)~\cite{HausladenW94pgm,hjsww:capacity}. We briefly indicate how. Suppose Bob's strategy in $G$ is fixed to $\ket{B}$. Upon receiving her question $u$, Alice has to decide on an answer $a$. She knows that Bob will receive a question $v$ distributed according to $\mu(\cdot|u)$ and apply his measurement, obtaining an outcome $b$ and resulting in the post-measurement state $\Tr_{2}( \Id\otimes \sqrt{B_v^b} \ket{\Psi}\bra{\Psi} \Id\otimes \sqrt{B_v^b})$ on her system. From her point of view, Alice needs to provide an answer $a$ such that $\pi_{uv}(b)=a$. Only knowing $u$, her task thus amounts to optimally distinguishing between the collection of post-measurement states
$$\rho_u^a = \Es{v\sim u}\, \sum_{b\to a}\, \Tr_{2}\big( \Id\otimes \sqrt{B_v^b} \ket{\Psi}\bra{\Psi} \Id\otimes \sqrt{B_v^b}\big).$$
If, instead of applying the optimal distinguishing measurement, Alice applied the pretty-good measurement (PGM) derived from this family of states then it follows from~\cite{BarnumK02pgm} that the players' success probability would be at most quadratically worse than what it would be was Alice to apply the optimal measurement. Using the explicit form of the PGM one can verify that the resulting value exactly corresponds to $\qnorm{G\otimes\Id \ket{B}}^2$, which proves the first inequality in~\eqref{eq:star-approx}.

\begin{proof}[Proof of Lemma~\ref{lem:best-response}]
We prove the following inequality, from which~\eqref{eq:star-approx} follows by taking the supremum over all $\ket{B}$:
\begin{equation}\label{eq:best-response}
 \max_{\ket{A}} \valq(G,\ket{A},\ket{B})^2 \,\leq\, \qnorm{G\otimes \Id\ket{B}}^2 \,\leq\,\max_{\ket{A}} \valq(G,\ket{A},\ket{B}).
\end{equation}
For the second inequality, using that $G$ is a projection game we note that for any $d$-dimensional strategy $\ket{B}$ for the second player, $(G\otimes \Id)\ket{B}$ is a valid strategy for the first player, hence
 $$\qnorm{(G\otimes \Id)\ket{B}}^2\,=\, \|\langle G\otimes \Id B,G\otimes \Id B\rangle_\mu\|_\infty \,\leq \,\max_{\ket{A}} \|\langle A,(G\otimes \Id)B\rangle_\mu\|_\infty = \max_{\ket{A}} \valq(G,\ket{A},\ket{B}).$$
To show the first, we write the following:
\begin{align*}
\valq(G,\ket{A},\ket{B}) &= \|\langle A,(G\otimes \Id)B\rangle_\mu\|_\infty \\
&= \Big\| \sum_{u} \,\mu(u)\sum_a \,\overline{A_u^a} \otimes B_u^a \Big\|_\infty\\
&\leq \Big\| \sum_{u}\, \mu(u) \sum_{ a} \,\overline{A_u^a} \otimes A_u^a \Big\|_\infty^{1/2}\Big\| \sum_{u} \,\mu(u) \sum_{ a} \,\overline{B_u^a} \otimes B_u^a \Big\|_\infty^{1/2}\\
&\leq \qnorm{\ket{(G\otimes \Id)B}},
\end{align*}
where for the first inequality we used the matrix Cauchy-Schwarz inequality stated in Claim~\ref{claim:haagerup}, and the last inequality uses $\sum_a A_u^a\leq \Id$ for every $u$.
\end{proof}

\subsection{The relaxation $\rval(G)$}\label{sec:rval}

In order to motivate our definition of $\rval$, let us consider two projection games $G,H$ and any quantum strategy $\ket B$ for $G\otimes H$ that achieves the optimal value $\qnorm{G\ot H}^2$ in the square game.
Letting $\kappa := \qnorm{G\ot H} / \qnorm{H}$, we want to bound $\kappa$ by a quantity that depends on $G$ and not on $H$. Consider the factorization $G\otimes H = (G\otimes I)(I \otimes H)$ where $I$ is the identity operator on the question and answer spaces associated with the first (resp. second) player in $H$ (resp. $G$); note that $I$ can also be understood as a game in which the two players are asked the same question and win if and only if they return the same answer. The application of $G\otimes H$ thus gives rise to a two step process
\begin{displaymath}
\ket {A'} \stackrel{G\otimes I}\longleftarrow \ket {A} \stackrel{I \otimes
    H}\longleftarrow \ket B ,
\end{displaymath}
mapping $\ket B $ to $\ket {A} := ((I \otimes H) \otimes \Id)\ket B$ and then mapping $\ket A$ to $\ket {A'}:= ((G\ot I)\ot \Id)\ket A$. Let us view $\ket B$ as a table with rows indexed by $\mV_G\times \mB_G$  and columns indexed by $\mV_H\times \mB_H$, where $\mV_G,\mV_H$ and $\mB_G,\mB_H$ are the question and answer sets associated with the second player in $G$ and $H$ respectively, and whose entries are measurement operators, i.e. elements in $\L(\C^d)$. Then $\ket A$ is the result of applying $H\otimes \Id$ on each row of $\ket B$ separately, and we apply $G\ot \Id$ on each column of $\ket A$ separately to get $\ket{ A'} = (G\ot I\ot \Id)\ket A$.

It is instructive to view the strategy $\ket B$ as an assignment to each $v\in \mV_G$ and $b\in \mB_G$ of a row vector $(\bra v \bra b \ot I \ot \Id) \ket B$ of dimensions $\card {\mV_H}\card{\mB_H}$ (whose entries are again in $\L(\C^d)$). Observe that for any $v$, $\ket{B_v} = \sum_b (\bra v \bra b \ot I\ot \Id) \ket B$ is a quantum strategy for $H$, since for each question $v'$ for $H$, the sum over answers $b'$ of
$$(\bra{v'}\bra{b'}\ot \Id)\ket{B_v} \,=\, B_{v,v'}^{b'}\,=\, \sum_b \,B_{v,v'}^{b,b'}$$
 is $\sum_{b'} B_{v,v'}^{b'} = \sum_{b'} \sum_b B_{v,v'}^{b,b'} =\Id$. In particular, $\qnorm{ H\ot \Id\ket {B_v} }^2 \le \qnorm{H}^2$. We write \beq\label{eq:def-av}
\ket{A_v} \,:=\, \sum_b (\bra v \bra b \ot I \ot \Id) \ket A
\eeq
and observe that it is equal to $H\ot \Id\ket {B_v}$, hence it satisfies $\qnorm{\ket{A_v}} \le \qnorm{H}$ for every $v$.
Thus the ratio between $\qnorm{G\ot I\ot \Id\ket A}$ and $\max_v \qnorm {A_v}$ is at least $\kappa= \qnorm{G\ot H} / \qnorm H$. As a result of our observations the ratio $\kappa$ can be upper bounded in a manner that depends only on $G$ and is {\em independent of $H$}. Abstracting the set $\mathcal{U}_H\times \mathcal{A}_H$ associated with pairs of questions and answers for the first player in $H$ as $\Omega$ for some discrete set $\Omega$,\footnote{In order for the extended inner product $\langle \cdot,\cdot \rangle_\mu$ to remain well-defined, we also need to equip $\Omega$ with a measure -- here, it would be the cartesian product of the probability measure $\mu_L$ on $\mathcal{U}_H$ and the counting measure on $\mathcal{A}_H$.}  we are led to the definition of $\rval(G)$ as the supremum of $\qnorm{G\otimes I_{\Omega} \otimes \Id_{\C^d}\ket{A}}^2$ ranging over vector quantum strategies $\ket A$ with norm $\plusnorm{A}\le 1$ defined as follows.

\begin{definition}[Fractional Strategy and Vector Strategy]\label{def:fractional-strategy}
Let $G$ be a projection game and ${\Omega}$ a discrete measured space. An element
$$\ket{A} \,=\, \sum_{v,b} \ket{v}\ket{b} \otimes A_{v}^b \in \C^{|\mathcal{V}|} \otimes  \C^{|\mathcal{B}|}\otimes \lin(\C^d)$$
 is a \emph{fractional quantum strategy} for $G$ if for every $v,b$ the matrix $A_{v}^b$ is positive semidefinite and $A_{v}:=\sum_b A_{ v}^b \leq \Id$ for every $v$.
%
A \emph{vector quantum strategy} is an element
\[ \ket A = \sum_{\omega\in\Omega} \ket \omega \ket {A_\omega}\in \C^{|\Omega|} \ot \C^{|\mathcal{V}|} \otimes  \C^{|\mathcal{B}|}\otimes \lin(\C^d)
\]
such that each $\ket {A_\omega}$ is a fractional quantum strategy.
The \emph{norm} of a vector quantum strategy is defined as
\begin{equation}\label{eq:defnorm}
 \plusnorm{\ket{A}} \,:=\, (\max_v \big\|\Es{\omega} \, \overline{A_{\omega v}} \otimes A_{\omega v} \big\|_\infty )^{1/2}.
\end{equation}
\end{definition}

The definition of $\rval$ is given by,
\begin{definition}[The relaxation $\rval$]\label{def:valplus}
Let $G$ be a projection game. Then
\begin{align*}
 \rval(G)&:= \sup_{\Omega} \sup_{\substack{\ket{A}\in\C^{|\mathcal{V}|} \otimes  \C^{|\mathcal{B}|}\otimes \C^{|\Omega|}\otimes \lin(\C^d) \\\plusnorm{A}\leq 1}} \bigqnorm{G\otimes I_{\Omega} \otimes \Id_{\C^d}\ket{A}}^2,
\end{align*}
where the supremum is taken over all discrete measured spaces $\Omega$.
\end{definition}
With these definitions in place we prove Lemma~\ref{lem:mult} relating the square norm of a product of games to $\rval$.

\begin{proof}[Proof of Lemma~\ref{lem:mult}]
Let $\ket{B}$ be an optimal strategy in the square game associated to $G\otimes H$.
It follows immediately from our observations above that $\ket A = I\ot H\ot \Id \ket B$ is a vector quantum strategy for $G$ (where the space $\Omega = \mathcal{U}_H \times \mathcal{A}_H$, and the measure is the cartesian product of the probability measure $\mu_L$ on $\mathcal{U}_H$ and the counting measure on $\mathcal{A}_H$) whose norm is $\plusnorm{\ket{A}} \le \qnorm{H}$.
 This means that
\[ \qnorm{G\ot H }^2 = \qnorm{G\ot H\ot\Id \ket B}^2 =  
 \qnorm{G\ot I \ot \Id \ket A}^2 \le \rval(G) \cdot \qnorm{H}^2, \]
where the last inequality comes by observing that $\frac 1 {\qnorm{H}} \ket A$ is a vector strategy with norm $\plusnorm\cdot$ at most $1$, so its value is at most $\rval(G)$.

Multiplicativity of $\rval$ follows along the same lines. First we note that $\rval(G\otimes H)\geq \rval(G)\rval(H)$ is clear. To show the converse, proceed as above by first fixing an optimal vector quantum strategy $\ket{B}$ for the square game associated to $G\otimes H$, such that $\plusnorm{\ket{B}}=1$. As in the above, it is easy to see that $\ket A = I\ot H\ot \Id \ket B$ is a vector quantum strategy for $G$ whose norm satisfies $\plusnorm{\ket{A}} \le \rval(H)$. Thus
\[ \rval(G\ot H ) = \qnorm{G\ot H\ot\Id \ket B}^2 =
 \qnorm{G\ot I \ot \Id \ket A}^2 \le \rval(G) \cdot \rval(H), \]
proving the claim.
\end{proof}

\section{Relating $\rval(G)$ to the square norm}\label{sec:approx}

In this section we prove Lemma~\ref{lem:approx}, which states that $\rval(G)$ is a good relaxation of the square norm $\qnorm{G}$ of a projection game and establishes the last step in our proof of the parallel repetition theorem, Theorem~\ref{thm:main}. We will also show that if $G$ is an expanding projection game then one can take $c=1$ in the bound $\rval(G) \le 1-C(1-\qnorm{G}^2)^c$.


To prove the lemma, we need to show that the existence of a good vector strategy for the players Bob and Bob' in the square game $G^\dagger G$ implies that $\qnorm{G}^2$ is large, i.e. there also exists a good (standard) quantum strategy for the players Alice and Bob in $G$. We will establish this by describing an explicit rounding procedure mapping the former to the latter. The rounding argument is simpler in case $G$ has the additional property of being expanding (see Section~\ref{sec:prelim-games} for the definition), and we give the proof in that case in Section~\ref{sec:expand}. In Section~\ref{sec:non-expand} we treat the case of general projection games. In that case the rounding argument is more involved and relies on a ``quantum correlated sampling'' lemma which is stated and proved in Section~\ref{sec:correlated}.

In both cases, the starting point for the rounding procedure is the existence of a vector strategy $\ket{\hat{A}}$ and entangled state $\ket{\hat{\Psi}}$ satisfying inequality~\eqref{eq:aomega-val} in the following claim, which is essentially a restatement of the inequality ``$\rval(G) \geq 1-\eta$''.

\begin{claim}\label{claim:aomega}
Let $G$ be a projection game and $\eta>0$ such that $\rval(G) \geq 1-\eta$. Then there exists a discrete measured space $\Omega$, an integer $d$, a bipartite state $\ket{\hat{\Psi}}\in\C^d\otimes \C^d$ and a vector strategy $\ket{\hat{A}}\in \C^{|\Omega|}\otimes\C^{|\mathcal{V}|} \otimes  \C^{|\mathcal{B}|}\otimes \lin(\C^d)$ such that for every $\omega$ and $v,b$, $\hat{A}_{\omega v}^b \geq 0$ and $\hat{A}_{\omega v} = \sum_b \hat{A}_{\omega v}^b \leq \Id$, and
\beq\label{eq:aomega-val}
\Es{\omega}\, \Es{v\sim v'} \,\sum_{b\leftrightarrow b'} \,\bra{\hat{\Psi}} \overline{\hat{A}_{\omega v}^b} \otimes \hat{A}_{\omega v'}^{b'} \ket{\hat{\Psi}} \,\geq\, (1-\eta)  \,\max_v\big\{\,\Es{\omega}\,\bra{\hat{\Psi}} \overline{\hat{A}_{\omega v}} \otimes \hat{A}_{\omega v} \ket{\hat{\Psi}}\,\big\},
\eeq
where formally $\textsc{E}_{v\sim v'}\sum_{b\leftrightarrow b'}$ is shorthand for $ \sum_u \mu(u) \sum_a \sum_{v,v'} \mu(v|u)\mu(v'|u) \sum_{b\to a,b'\to a} $. Furthermore, without loss of generality $\ket{\hat{\Psi}}$ can be chosen so as to have the following symmetry: its reduced densities on either subsystem are identical, and denoting either by $\hat{\rho}$, for any $X,Y$ it holds that 
\begin{equation}\label{eq:ando-good}
\bra{\hat{\Psi}} X \otimes Y \ket{\hat{\Psi}} \,=\, \Tr\big(\overline{X} \hat{\rho}^{1/2} Y \hat{\rho}^{1/2}\big).
\end{equation}
\end{claim}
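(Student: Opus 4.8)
The plan is to unpack the definition of $\rval(G)$ to produce the required objects and then symmetrize the state. The claim is essentially a restatement of ``$\rval(G)\geq 1-\eta$'', so the first step is routine: by Definition~\ref{def:valplus}, $\rval(G)\geq 1-\eta$ means there is a discrete measured space $\Omega$, an integer $d$, and a vector quantum strategy $\ket{A}$ with $\plusnorm{A}\leq 1$ such that $\qnorm{G\otimes I_\Omega\otimes\Id\ket{A}}^2 \geq 1-\eta$. Recalling the definition of $\qnorm{\cdot}$ from Definition~\ref{def:sqnorm}, this square norm is the operator norm of an extended inner product, and by the variational characterization of the operator norm it is attained (up to arbitrarily small error, using the finite-dimensional convention from the footnote) by some unit state $\ket{\hat\Psi}\in\C^d\otimes\C^d$. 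Writing out $\qnorm{G\otimes I_\Omega\otimes\Id\ket{A}}^2 = \normi{\iprod{(G\otimes I_\Omega\otimes\Id) A,(G\otimes I_\Omega\otimes\Id) A}_\mu}$ and expanding the game operator $G$ and the extended inner product according to their definitions in Section~\ref{sec:prelim-games}, the left-hand quantity becomes exactly $\Es{\omega}\Es{v\sim v'}\sum_{b\leftrightarrow b'}\bra{\hat\Psi}\overline{\hat A_{\omega v}^b}\otimes \hat A_{\omega v'}^{b'}\ket{\hat\Psi}$, where I identify the abstracted index $\omega$ with the pair-of-question-answer index and use the shorthand spelled out in the claim. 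The normalization $\plusnorm{A}\leq 1$ is precisely $\max_v\norm{\Es{\omega}\overline{A_{\omega v}}\otimes A_{\omega v}}_\infty\leq 1$, and since $\bra{\hat\Psi}\cdot\ket{\hat\Psi}$ of this operator is at most its operator norm, it is at most $1$; this supplies the right-hand side of~\eqref{eq:aomega-val} after inserting the factor $(1-\eta)$. The positivity and subnormalization $\hat A_{\omega v}^b\geq 0$, $\hat A_{\omega v}\leq\Id$ are inherited directly from the definition of a fractional/vector quantum strategy.

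The substantive part, and the step I expect to be the main (though mild) obstacle, is arranging the symmetry of $\ket{\hat\Psi}$ together with the clean form~\eqref{eq:ando-good}. The natural device is to replace $\ket{\hat\Psi}$ by a symmetrized state obtained from its Schmidt decomposition: writing $\ket{\hat\Psi}=\sum_i\lambda_i\ket{u_i}\ket{v_i}$ with $\lambda_i\geq 0$, I would rotate the two local bases so that the Schmidt vectors coincide, i.e. pass to the state $\sum_i\lambda_i\ket{w_i}\ket{w_i}$ in a common orthonormal basis $\set{\ket{w_i}}$, while absorbing the unitaries relating $\ket{u_i},\ket{v_i}$ to $\ket{w_i}$ into the measurement operators $\hat A_{\omega v}^b$. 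Because both the objective and the normalization constraint in~\eqref{eq:aomega-val} are expressed through $\bra{\hat\Psi}\overline{X}\otimes Y\ket{\hat\Psi}$-type quantities, I must check that conjugating the operators by the appropriate local unitaries (and a transpose/conjugation to match the $\overline{(\cdot)}$ convention) preserves both the value and the property that the operators form a valid vector strategy — positive semidefiniteness and the sum-to-at-most-identity condition are invariant under conjugation by a unitary, so this goes through. After symmetrization the reduced density on each side is $\hat\rho=\sum_i\lambda_i^2\ket{w_i}\bra{w_i}$, identical on both subsystems, and $K=\sum_i\lambda_i\ket{w_i}\bra{w_i}=\hat\rho^{1/2}$ is positive semidefinite.

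Granting the symmetric form, identity~\eqref{eq:ando-good} is then a direct application of Claim~\ref{claim:ando} (Ando's identity). With $\ket{u_i}=\ket{v_i}=\ket{w_i}$, Claim~\ref{claim:ando} gives $\bra{\hat\Psi}X\otimes Y\ket{\hat\Psi}=\Tr(XKY^TK)$ with $K=\hat\rho^{1/2}$. To land on the stated form $\Tr(\overline{X}\,\hat\rho^{1/2}\,Y\,\hat\rho^{1/2})$ I would apply the claim with $\overline{X}$ in place of $X$, using that $\overline{X}^T = X^\dagger$ is not needed — rather, since the claim's transpose matches the chosen basis, $(\overline{X})^T$ on the right slot combines with the conjugation convention so that the expression reduces to $\Tr(\overline{X}\,\hat\rho^{1/2}\,Y\,\hat\rho^{1/2})$ after using $K=K^\dagger=\hat\rho^{1/2}$ and cyclicity of the trace. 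The only care required is bookkeeping of the complex-conjugate convention built into the extended inner product (the bar on the first operator), which is exactly why the statement carries $\overline{X}$ rather than $X$; matching this convention to Ando's identity is the one place to be careful, but it is purely mechanical. No single step presents a genuine difficulty, so I would present the argument compactly, emphasizing the base rotation in the symmetrization step and the invocation of Claim~\ref{claim:ando}.
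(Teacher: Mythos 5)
Your first half matches the paper's proof essentially verbatim: unpack Definition~\ref{def:valplus} and Definition~\ref{def:sqnorm} to get $\Omega$ and a vector strategy with the operator-norm inequality
$\normi{\Es{\omega}\Es{v\sim v'}\sum_{b\leftrightarrow b'}\overline{\hat A_{\omega v}^b}\otimes\hat A_{\omega v'}^{b'}}\geq(1-\eta)\max_v\normi{\Es{\omega}\overline{\hat A_{\omega v}}\otimes\hat A_{\omega v}}$, take $\ket{\hat\Psi}$ to be a state achieving the norm on the left (legitimate since that operator is positive semidefinite, being $\Es{\omega}\sum_u\mu(u)\sum_a\overline{C_{\omega u}^a}\otimes C_{\omega u}^a$ for the positive operators $C_{\omega u}^a=\sum_v\mu(v|u)\sum_{b\to a}\hat A_{\omega v}^b$), and bound the right-hand side of \eqref{eq:aomega-val} by the corresponding operator norm. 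That part is fine.

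The gap is in your symmetrization step. You propose to rotate the two Schmidt bases of $\ket{\hat\Psi}$ to a common basis via local unitaries $U$ (first factor) and $V$ (second factor) and to absorb these into the measurement operators. But the \emph{same} family $\hat A_{\omega v}^b$ appears in both tensor factors of the objective --- as $\overline{\hat A_{\omega v}^b}$ on the left and as $\hat A_{\omega v'}^{b'}$ on the right --- and since $U\neq V$ in general, absorbing them produces two \emph{different} conjugated families. The resulting quantity is no longer of the form $\Es{\omega}\Es{v\sim v'}\sum_{b\leftrightarrow b'}\bra{\hat\Psi}\overline{\hat B_{\omega v}^b}\otimes\hat B_{\omega v'}^{b'}\ket{\hat\Psi}$ for a single vector strategy $\ket{\hat B}$, which is exactly what \eqref{eq:aomega-val} asserts and what the rounding in Section~\ref{sec:approx} consumes. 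Verifying that positivity and subnormalization survive unitary conjugation, as you do, does not address this. The paper sidesteps the issue: writing $M=\Es{\omega}\Es{v\sim v'}\sum_{b\leftrightarrow b'}\overline{\hat A_{\omega v}^b}\otimes\hat A_{\omega v'}^{b'}$ and $S$ for the swap of the two factors, the symmetry of the distribution on $((v,b),(v',b'))$ gives $SMS=\overline{M}$, so the top eigenspace of $M$ is invariant under the anti-unitary $\ket{\xi}\mapsto S\overline{\ket{\xi}}$ and one may choose a top eigenvector fixed by it; such a vector has a Hermitian coefficient matrix and hence, after diagonalization, the claimed symmetric form --- \emph{without modifying the operators at all}. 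Once $\ket{\hat\Psi}$ has that form, your derivation of \eqref{eq:ando-good} from Claim~\ref{claim:ando} is correct (modulo the transpose/conjugate bookkeeping you flag, about which the paper is equally terse). To repair your write-up, replace the base-rotation argument by this symmetry-of-the-operator argument.
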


\begin{proof}
By definition of $\rval$, there exists a discrete measured space $\Omega$ and a vector strategy $\ket{\hat{A}}$ such that $\plusnorm{\ket{\hat{A}}}=1$ and $\qnorm{\Id_\Omega \otimes G \otimes \Id \ket{\hat{A}}}^2\geq 1-\eta$. Recalling the definition of $\plusnorm{\cdot}$ (see Definition~\ref{def:fractional-strategy}) and of $\qnorm{\cdot}$ (see Definition~\ref{def:sqnorm}), we may reformulate this statement as the inequality
\begin{equation}\label{eq:aomega-1}
 \Big\|\Es{\omega} \Es{v\sim v'} \sum_{b\leftrightarrow b'} \, \overline{\hat{A}_{\omega v}^b} \otimes \hat{A}_{\omega v'}^{b'}\Big\|_\infty \geq (1-\eta) \max_v \Big\| \Es{\omega} \, \overline{\hat{A}_{\omega v}} \otimes \hat{A}_{\omega v} \Big\|_\infty.
\end{equation}
Letting $\ket{\hat{\Psi}}$ be a state which achieves the operator norm on the left-hand side gives~\eqref{eq:aomega-val}. The fact that $\ket{\hat{\Psi}}$ can be assumed to take the claimed form follows from the symmetry of the left-hand side of~\eqref{eq:aomega-1}.
\end{proof}

Let $\ket{\hat{A}}$ be a vector strategy and $\ket{\hat{\Psi}}$ a state such that~\eqref{eq:aomega-val} holds. Our goal is to identify a quantum strategy $\ket{\tilde{A}}$ such that $\qnorm{G\otimes \Id \ket{\tilde{A}}}^2 \geq 1-O(\eta^{1/c})$, which by Claim~\ref{claim:aomega} will suffice to prove Lemma~\ref{lem:approx}. 
The ``rounding procedure'' constructing $\ket{\tilde{A}}$ will differ in the expanding and non-expanding cases. Both cases however build on the same measurement operators which we now define. 

Fix an arbitrary $\omega\in\Omega$. The only ``defect'' of~$\ket{\hat{A}_\omega}$ that prevents it from directly giving us a quantum strategy is that it is only a fractional strategy, meaning that for any question $v$ the sum $\hat{A}_{\omega v} = \sum_b \hat{A}_{\omega v}^b$ may not equal the identity. It is natural to define a re-normalized strategy as follows. Let $U_{\omega v}$ be a unitary such that
\begin{equation}\label{eq:defu}
U_{\omega v}A_{\omega v}^{1/2} \rho^{1/4} =  \rho^{1/4}A_{\omega v}^{1/2} U_{\omega v}^\dagger = \big(\rho^{1/4}A_{\omega v}\rho^{1/4} \big)^{1/2}
\end{equation}
 is Hermitian positive semidefinite; such a unitary can be obtained from the singular value decomposition of $A_{\omega v}^{1/2} \rho^{1/4}$. For every pair of questions
$v,v'\in\mathcal{V}$ we introduce the post-measurement state
\beq\label{eq:def-tildepsi}
\ket{\Psi_{\omega vv'}}\,:=\, \overline{U_{\omega v}} \overline{\hat{A}_{\omega v}}^{1/2} \otimes U_{\omega v'}\hat{A}_{\omega v'}^{1/2} \ket{\hat{\Psi}}.\
\eeq
The state $\ket{\Psi_{\omega vv'}}$ is the post-measurement state that corresponds to applying the binary measurements $\{\overline{\hat{A}_{\omega v}}, \Id-\overline{\hat{A}_{\omega v}}\}$ for the first player, $\{\hat{A}_{\omega v'},\Id-\hat{A}_{\omega v'}\}$ for the second, to $\ket{\hat{\Psi}}$ and conditioning on both of them obtaining the first outcome. In general the post-measurement state is only defined up to a local unitary, and this freedom is represented in the unitaries $U_{\omega v}$ and $U_{\omega v'}$; our particular choice of unitaries satisfying~\eqref{eq:defu} will prove convenient in the analysis. 
Next for every question $v\in\mathcal{V}$ and answer $b\in\mathcal{B}$ we define the measurement operator
\beq\label{eq:def-tildea}
\tilde{A}_{\omega v}^b:= U_{\omega v}\hat{A}_{ \omega v}^{-1/2} \hat{A}_{\omega v}^b \hat{A}_{ \omega v}^{-1/2} U_{\omega v}^\dagger,
\eeq
where here $\hat{A}_{\omega v}^{-1/2}$ denotes the square root of the pseudo-inverse of $\hat{A}_{\omega v}=\sum_b \hat{A}_{\omega v}^b$. Again, there is always a unitary degree of freedom in the choice of the square root, and the unitaries $U_{\omega v}$, the same as in~\eqref{eq:def-tildepsi}, represent that degree of freedom. With this definition it is easy to verify that each $\tilde{A}_{\omega v}^b$ is positive semidefinite and that $\sum_b \tilde{A}_{\omega v}^b \leq \Id$; since we may always add a ``dummy'' outcome in order for the measurement operators to sum to identity, $\{\tilde{A}_{\omega v}^b\}_b$ is easily extended into a well-defined measurement and $\ket{\tilde{A}} := \sum_{v,b} \ket{v,b}\otimes \tilde{A}_{\omega v}^b$ is a valid quantum strategy in $G^\dagger G$.

Now suppose that, upon receiving their respective questions $v$ and $v'$, players Bob and Bob' in $G^\dagger G$ were to measure their respective share of the (re-normalized) state $\ket{\Psi_{\omega vv'}}$ using the measurements given by the $\{\overline{\tilde{A}_{\omega v}^b}\}_b$, $\{\tilde{A}_{\omega v'}^{b'}\}_{b'}$ respectively. The probability that they obtain the pair of outcomes $(b,b')$ is given, up to normalization by $\|\ket{\Psi_{\omega vv'}}\|^{-2}$, by
\begin{align}
\bra{\Psi_{\omega vv'}} \overline{\tilde{A}_{\omega v}^b}\otimes \tilde{A}_{\omega v'}^{b'} \ket{\Psi_{\omega vv'}}&=\bra{\hat{\Psi}}  \big(\overline{\hat{A}_{\omega v}}^{1/2}\overline{U_{\omega v}}^\dagger \otimes \hat{A}_{\omega v'}^{1/2}U_{\omega v'}^\dagger\big) \big(\overline{U_{\omega v}\hat{A}_{ \omega v}^{-1/2} \hat{A}_{\omega v}^b \hat{A}_{ \omega v}^{-1/2} U_{\omega v}^\dagger}\notag\\
&\hskip2cm\otimes U_{\omega v'}\hat{A}_{ \omega v'}^{-1/2} \hat{A}_{\omega v'}^b \hat{A}_{ \omega v'}^{-1/2} U_{\omega v'}^\dagger\big) \big(\overline{U_{\omega v}} \overline{\hat{A}_{\omega v}}^{1/2} \otimes U_{\omega v'}\hat{A}_{\omega v'}^{1/2} \big)\ket{\hat{\Psi}}\notag\\
&= \bra{\hat{\Psi}} \overline{\hat{A}_{\omega v}^b}\otimes \hat{A}_{\omega v'}^{b'} \ket{\hat{\Psi}},\label{eq:tilde-good}
\end{align}
perfectly reproducing the correlations induced by the fractional strategy $\ket{\hat{A}_\omega}$ together with $\ket{\hat{\Psi}}$.
Thus if it were the case that for all $(v,v')$, $\ket{\Psi_{\omega vv'}} = \ket{\Psi_\omega}$, a vector independent of $(v,v')$, then the players could use $\ket{\Psi_\omega}$ (for an appropriate, ``good'' choice of $\omega$) as their initial shared entangled state and perfectly emulate $\ket{\hat{A}_\omega}$ using the quantum strategy $\ket{\tilde{A}}$. 


While it may unfortunately not be the case that the $\ket{\Psi_{\omega vv'}}$ are independent of $(v,v')$, the main claim in the proof of Lemma~\ref{lem:approx} will establish that they are close, on average, when $v$ and $v'$ are neighboring vertices in the constraint graph. In the case where the game is expanding this will be sufficient, as we will be able to conclude that all states $\ket{\Psi_{\omega vv'}}$ are close to a single $\ket{\Psi_{\omega}}$ independent of $v$ and $v'$. In the non-expanding case we will rely on a more complicated strategy that involves a step of correlated sampling in which the players, \emph{after} having received their respective $v$ and $v'$, jointly sample an $\omega$ and create the corresponding bipartite state $\ket{\Psi_{\omega vv'}}$ locally. 

We first turn to the case of expanding games, for which we can give a simpler (and tighter) analysis. 



\subsection{The expanding case}\label{sec:expand}

Suppose that $G$ is expanding. 
Our first step consists in fixing a ``good'' value $\omega\in \Omega$ and restricting our attention to the fractional strategy $\ket{\hat{A}_\omega} := (\bra{\omega}\otimes I \otimes \Id)\ket{\hat{A}}$ specified by the operators $\hat{A}_{\omega v}^b$ obtained from that $\omega$. Using that the max is larger than the average, Eq.~\eqref{eq:aomega-val} implies
\beq\label{eq:aomega-val-expand}
\Es{\omega}\,\Big( \Es{v\sim v'} \,\sum_{b\leftrightarrow b'} \,\bra{\hat{\Psi}} \overline{\hat{A}_{\omega v}^b} \otimes \hat{A}_{\omega v'}^{b'} \ket{\hat{\Psi}} \Big)\,\geq\, (1-\eta)  \,\Es{\omega}\,\Big(\Es{v}\,\bra{\hat{\Psi}} \overline{\hat{A}_{\omega v}} \otimes \hat{A}_{\omega v} \ket{\hat{\Psi}}\,\Big).
\eeq
For the remainder of this section fix an $\omega$ such that~\eqref{eq:aomega-val-expand} holds for that $\omega$. The only property we will need of the $\{\hat{A}_{\omega v}^b\}$ in order to construct a good strategy in $G^\dagger G$ is that they are positive semidefinite operators which satisfy that inequality. (In contrast, for the non-expanding case, Eq.~\eqref{eq:aomega-val-expand} by itself turns out to be too weak an inequality, and we must work with~\eqref{eq:aomega-val}.)

Having fixed a value for $\omega$, for clarity of notation for every $v,v'$ we let $U_v$ be the unitary defined by~\eqref{eq:defu}, $\ket{\Psi_{vv'}}$ the state defined in~\eqref{eq:def-tildepsi}, and $\tilde{A}_v^b$ the measurement operators introduced in~\eqref{eq:def-tildea}. Let 
\begin{equation}\label{eq:def-sigma}
\sigma :=  \Big(\Es{w}\,\big\|\ket{\Psi_{ww}}\big\|^{2}\Big)^{-1} \,\Es{w}\,\ket{\Psi_{ww}}\bra{\Psi_{ww}}.
\end{equation}
The operators $\tilde{A}_v^b$, together with the density matrix $\sigma$, form a well-defined strategy for the players in the square game. In order to prove Lemma~\ref{lem:approx} (for the case of expanding games) it remains to bound the error
\begin{equation}
\label{eq:def-eps}
\eps\,:=\,\Es{v\sim v'} \sum_{b\nleftrightarrow b'} \Tr\big( \overline{\tilde{A}_{v}^{b}}\otimes \tilde{A}_{v'}^{b'}\,\sigma\big),
\end{equation}
incurred by that strategy, under the assumption that~\eqref{eq:aomega-val-expand} holds. We show the following. 

\begin{claim}\label{claim:expand-bound}
Suppose~\eqref{eq:aomega-val-expand} holds, and the constraint graph $G$ is such that the smallest nonzero eigenvalue of the Laplacian $L := \sum_v \ket{v}\bra{v} - \sum_{v,v'\sim v} \mu(v,v')\mu(v)^{-1/2}\mu(v')^{-1/2} \ket{v'}\bra{v}$ is at least $\lambda>0$, where here $\mu(v,v')$ is the distribution on questions in the square game, as defined in Section~\ref{sec:prelim-games}. Let $(\tilde{A}_v^b,\sigma)$ be the strategy defined above. Then 
\begin{equation}\label{eq:sigma-good}
 \eps\,=\, \Es{v\sim v'} \sum_{b\nleftrightarrow b'} \Tr\big( \overline{\tilde{A}_{v}^{b}}\otimes \tilde{A}_{v'}^{b'}\,\sigma\big) \,=\,O\big(\eta/\lambda\big).
\end{equation}
\end{claim}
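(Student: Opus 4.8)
The plan is to push everything into the Hilbert--Schmidt geometry of a single family of positive semidefinite operators indexed by $v$, and then run a spectral-gap argument. First I would fix one value of $\omega$ for which the per-$\omega$ form of~\eqref{eq:aomega-val-expand} holds; such an $\omega$ exists because the $\omega$-average of the difference (its left side minus $(1-\eta)$ times the per-$\omega$ right side) is nonnegative, so some term is. Dropping $\omega$ from the notation, the whole argument is organized around $M_v := \hat\rho^{1/4}\hat{A}_{v}\hat\rho^{1/4}$, which is positive semidefinite with $\normi{M_v}\le 1$ (as $\hat A_v\le\Id$ and $\hat\rho\le\Id$). By the symmetric Ando identity~\eqref{eq:ando-good}, every bracket of the form $\bra{\hat\Psi}\overline{\hat A_v}\otimes\hat A_{v'}\ket{\hat\Psi}$ becomes $\Tr(\hat A_v\hat\rho^{1/2}\hat A_{v'}\hat\rho^{1/2})=\iprod{M_v,M_{v'}}$, and likewise at the level of individual answers. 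Writing $N:=\Es{v}\snormt{M_v}$ for the quantity maximized on the right of~\eqref{eq:aomega-val-expand}, the hypothesis together with $\Es{v\sim v'}\iprod{M_v,M_{v'}}\le N$ (Cauchy--Schwarz, the $v$-marginal of the square distribution being $\mu_R$) yields two facts used repeatedly: the total inconsistency is small, $\Es{v\sim v'}\sum_{b\nleftrightarrow b'}\Tr(\hat A_v^b\hat\rho^{1/2}\hat A_{v'}^{b'}\hat\rho^{1/2})\le\eta N$, and neighbors are $\ell_2$-close, $\Es{v\sim v'}\snormt{M_v-M_{v'}}\le 2\eta N$.

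Next I would bring in expansion. Applying the Laplacian spectral-gap inequality coordinatewise to the matrix-valued map $v\mapsto M_v$ gives $\Es{v}\snormt{M_v-\bar M}\le\tfrac1{2\lambda}\Es{v\sim v'}\snormt{M_v-M_{v'}}\le\eta N/\lambda$, where $\bar M:=\Es{v}M_v$, so all $M_v$ concentrate around their mean. In parallel, the vectorization $(X\otimes Y)\ket{\hat\Psi}\leftrightarrow X\hat\rho^{1/2}Y^T$ together with the defining relation~\eqref{eq:defu} of $U_v$ identifies the post-measurement states as $\ket{\Psi_{vv'}}\leftrightarrow\overline{M_v^{1/2}M_{v'}^{1/2}}$ and the diagonal states as $\ket{\Psi_{ww}}\leftrightarrow\overline{M_w}$, so that $\sigma=N^{-1}\Es{w}\ket{\Psi_{ww}}\bra{\Psi_{ww}}$ and $\snormt{\ket{\Psi_{vv'}}}=\iprod{M_v,M_{v'}}$.

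To bound $\eps$ I would run a state-replacement argument. The error operator $\mathcal E_{vv'}:=\sum_{b\nleftrightarrow b'}\overline{\tilde A_v^b}\otimes\tilde A_{v'}^{b'}$ satisfies $0\le\mathcal E_{vv'}\le\Id$, and by~\eqref{eq:tilde-good} its value on the matched state equals the fractional inconsistency, $\bra{\Psi_{vv'}}\mathcal E_{vv'}\ket{\Psi_{vv'}}=\sum_{b\nleftrightarrow b'}\Tr(\hat A_v^b\hat\rho^{1/2}\hat A_{v'}^{b'}\hat\rho^{1/2})$. Expanding $\sigma$ and applying the elementary bound $\bra{\phi}\mathcal E\ket{\phi}\le 2\bra{\psi}\mathcal E\ket{\psi}+2\snormt{\ket{\phi}-\ket{\psi}}$ (valid for $0\le\mathcal E\le\Id$) with $\ket{\phi}=\ket{\Psi_{ww}}$ and $\ket{\psi}=\ket{\Psi_{vv'}}$, and using that conjugation preserves the Hilbert--Schmidt norm, I obtain
$$\eps\,\le\, 2\eta + 2N^{-1}\,\Es{w}\,\Es{v\sim v'}\,\snormt{M_w - M_v^{1/2}M_{v'}^{1/2}}.$$
A triangle inequality splits the remaining expectation into two variance terms $\Es{w}\snormt{M_w-\bar M}$ and $\Es{v}\snormt{\bar M-M_v}$, each $\le\eta N/\lambda$ by the spectral-gap step, plus a matched edge term $\Es{v\sim v'}\snormt{M_v^{1/2}(M_v^{1/2}-M_{v'}^{1/2})}$; granting the estimate below that this is $O(\eta N)$, everything collapses to $O(\eta N/\lambda)$ and hence $\eps=O(\eta/\lambda)$.

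The main obstacle is exactly this matched edge term, which must be kept linear in $\eta$ and dimension-free (anything weaker would spoil the optimal $c=1$). One cannot pull out the prefactor as $\normi{M_v^{1/2}}\,\normt{M_v^{1/2}-M_{v'}^{1/2}}$, since $\normt{M_v^{1/2}-M_{v'}^{1/2}}$ is controlled by $\normt{M_v-M_{v'}}$ only up to a dimension factor (Powers--Størmer bounds it merely by the trace norm). The point is that the prefactor $M_v^{1/2}$ must be retained, and the resolution I would pursue is the clean operator inequality $\snormt{M_v^{1/2}(M_v^{1/2}-M_{v'}^{1/2})}\le\snormt{M_v-M_{v'}}$ (a universal constant in place of $1$ would suffice). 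Expanding both Hilbert--Schmidt norms, this reduces, with $S=M_v^{1/2}$ and $T=M_{v'}^{1/2}$, to $2\Tr(S^3T)+\Tr(T^4)\ge 3\Tr(S^2T^2)$. In the simultaneously diagonalizable case this is precisely the scalar AM--GM inequality $2s^3+t^3\ge 3s^2t$ applied to eigenvalues; proving the genuinely non-commuting version is where I expect the real work to lie, and I would attempt it through an integral representation of the square root or an operator-monotonicity argument. Granting it, $\Es{v\sim v'}\snormt{M_v^{1/2}(M_v^{1/2}-M_{v'}^{1/2})}\le\Es{v\sim v'}\snormt{M_v-M_{v'}}\le 2\eta N$, which closes the estimate.
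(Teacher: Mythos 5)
Your argument is correct, and it reaches the bound by a genuinely different route than the paper. The paper works with the same objects (its $X_v$ is your $M_v^{1/2}$, and its Claim~\ref{claim:expand} is exactly your Poincar\'e step $\Es{v}\snormt{M_v-\bar M}\le \eta N/\lambda$), but instead of a single state-replacement it runs a three-step hybrid on $\eps_{vv'}^{ww'}=\bra{\Psi_{ww'}}\mathcal E_{vv'}\ket{\Psi_{ww'}}$, swapping one index of the underlying state at a time and applying Cauchy--Schwarz to four-operator traces of the form $\Tr(\hat A_{w'}\rho^{1/4}\tilde A_v^b\rho^{1/4}\hat A_w\rho^{1/4}\tilde A_{v'}^{b'}\rho^{1/4})$ (its identity~\eqref{eq:expand-uv}); this keeps the full operators $\hat A_w$ in play and never has to compare $\ket{\Psi_{ww}}$ to $\ket{\Psi_{vv'}}$ in norm, which is exactly what forces you to confront square roots. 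Your route buys a cleaner structure (one replacement step using $0\le\mathcal E_{vv'}\le\Id$, one norm estimate) at the price of the operator inequality you flag as the ``real work''. That inequality is true and much easier than you anticipate---no integral representation or operator monotonicity is needed. With $D=S-T$ you want $\snormt{SD}\le\snormt{S^2-T^2}$; but $S^2-T^2=SD+DT$, and
\begin{equation*}
\snormt{SD+DT}\;=\;\snormt{SD}+\snormt{DT}+2\,\Tr(DS\,DT)\;=\;\snormt{SD}+\snormt{DT}+2\,\snormt{S^{1/2}DT^{1/2}}\;\ge\;\snormt{SD},
\end{equation*}
since the cross term $\Tr(DSDT)=\Tr\big((S^{1/2}DT^{1/2})(S^{1/2}DT^{1/2})^\dagger\big)$ is itself a squared Hilbert--Schmidt norm when $S,T\succeq 0$ and $D$ is Hermitian; equivalently $2\Tr(S^3T)+\Tr(T^4)-3\Tr(S^2T^2)=\snormt{DT}+2\snormt{S^{1/2}DT^{1/2}}\ge 0$, which is the trace form you derived. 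With this supplied, every step of your outline checks out (the per-$\omega$ reduction, the identification $\ket{\Psi_{vv'}}\leftrightarrow \overline{M_v^{1/2}M_{v'}^{1/2}}$ enabled by the choice of $U_v$ in~\eqref{eq:defu}, the bounds $\Es{v\sim v'}\eps_{vv'}^{vv'}\le\eta N$ and $\Es{v\sim v'}\snormt{M_v-M_{v'}}\le 2\eta N$, and the replacement bound for $0\le\mathcal E\le\Id$), giving $\eps=O(\eta/\lambda)$ with the optimal linear dependence on $\eta$. It is worth noting the contrast with the inequality the paper does use in Lemma~\ref{lem:psi-close} for the non-expanding case, $\Tr((S-T)^4)\le\Tr((S^2-T^2)^2)$, which costs a square root of $\eta$; your variant, which retains the prefactor $S$, is precisely what preserves linearity, as you correctly identified.
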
 

Before proceeding with the proof of Claim~\ref{claim:expand-bound} we show that it implies Lemma~\ref{lem:approx}.

\begin{proof}[Proof of Lemma~\ref{lem:approx}, expanding case]
Let $\eta>0$ be such that $\rval(G) \geq 1-\eta$. Then it follows directly from Claim~\ref{claim:aomega} that~\eqref{eq:aomega-val-expand} holds for this choice of $\eta$. Let $(\tilde{A}_v^b,\sigma)$ be as defined in~\eqref{eq:def-tildea} and~\eqref{eq:def-sigma}. By definition,
\begin{align*}
\qnorm{G\otimes \Id\ket{\tilde{A}}}^2&=\sup_{\ket{\Psi}}\, \Es{v\sim v'}\,\sum_{b\leftrightarrow b'} \,\bra{\Psi} \overline{\tilde{A}_v^b} \otimes \tilde{A}_{v'}^{b'}\ket{\Psi}\\
&\geq \Es{v\sim v'}\,\sum_{b\leftrightarrow b'} \,\Tr\big(\big( \overline{\tilde{A}_v^b} \otimes \tilde{A}_{v'}^{b'}\big)\, \sigma\big)\\
&= 1-O\big(\eta/\lambda),
\end{align*}
where the last line follows from~\eqref{eq:sigma-good}. Using $\qnorm{G}^2 \geq \qnorm{G\otimes \Id\ket{\tilde{A}}}^2$ concludes the proof of the lemma, with exponent $c=1$.
\end{proof}

It remains to prove Claim~\ref{claim:expand-bound}. The proof of the claim will use the expansion properties of $G$ through the following: 

\begin{claim}\label{claim:expand}
Suppose~\eqref{eq:aomega-val-expand} holds, and $G$ is such that the smallest nonzero eigenvalue of the Laplacian $L := \sum_v \ket{v}\bra{v} - \sum_{v,v'\sim v} \mu(v,v')\mu(v)^{-1/2}\mu(v')^{-1/2} \ket{v'}\bra{v}$ is at least $\lambda>0$. Then 
 \begin{equation}
 \Es{v, v'}\bra{\hat{\Psi}} \overline{\hat{A}_{\omega v}}\otimes \hat{A}_{\omega v'} \ket{\hat{\Psi}} \geq (1-2\eta/\lambda) \Es{v} \bra{\hat{\Psi}} \overline{\hat{A}_{\omega v}}\otimes \hat{A}_{\omega v} \ket{\hat{\Psi}}.\label{eq:assumption-2}
\end{equation}
\end{claim}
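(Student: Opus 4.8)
The plan is to read off the fixed-$\omega$ form of~\eqref{eq:aomega-val-expand} as a spectral inequality about a family of matrices living in a Hilbert space, and then close the gap between the ``neighboring'' average and the ``independent'' average by an expander-mixing argument driven by the spectral gap $\lambda$. First I would linearize the bilinear form using Ando's identity in the form~\eqref{eq:ando-good}. Setting $M_v := \hat\rho^{1/4}\hat{A}_{\omega v}\hat\rho^{1/4}$, which is Hermitian and positive semidefinite, a short computation using cyclicity of the trace gives
\[
\bra{\hat{\Psi}}\,\overline{\hat{A}_{\omega v}}\otimes\hat{A}_{\omega v'}\,\ket{\hat{\Psi}} \,=\, \Tr\big(\hat{A}_{\omega v}\hat\rho^{1/2}\hat{A}_{\omega v'}\hat\rho^{1/2}\big)\,=\,\Tr(M_vM_{v'})\,=\,\iprod{M_v,M_{v'}},
\]
where $\iprod{\cdot,\cdot}$ denotes the Hilbert--Schmidt inner product. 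In particular the right-hand side of~\eqref{eq:assumption-2} is exactly $\Es{v}\normt{M_v}^2$ and its left-hand side is $\Es{v,v'}\iprod{M_v,M_{v'}}=\normt{\Es{v}M_v}^2$.

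Second, I would convert the hypothesis into a statement purely about the $M_v$. Since each operator $\overline{\hat{A}_{\omega v}^b}\otimes\hat{A}_{\omega v'}^{b'}$ is positive semidefinite, summing only over the sub-collection of pairs $b\leftrightarrow b'$ is dominated in the positive-semidefinite order by the full sum, which factors as $\overline{\hat{A}_{\omega v}}\otimes\hat{A}_{\omega v'}$. Evaluating against $\ket{\hat{\Psi}}$ and applying the first step, the fixed-$\omega$ form of~\eqref{eq:aomega-val-expand} yields the clean inequality
\[
\Es{v\sim v'}\iprod{M_v,M_{v'}} \,\geq\, (1-\eta)\,\Es{v}\normt{M_v}^2.
\]

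Third comes the spectral step. Viewing $\ket{\Phi}:=\sum_v\sqrt{\mu(v)}\,\ket{v}\otimes M_v \in \C^{|\mathcal{V}|}\otimes\lin(\C^d)$, one checks directly from the definitions of the square-graph weights $\mu(v,v')$ and the normalized adjacency $\mathcal{A}=D^{-1/2}HD^{-1/2}$ (so that $L=\Id-\mathcal{A}$) that $\Es{v\sim v'}\iprod{M_v,M_{v'}}=\bra{\Phi}(\mathcal{A}\otimes\Id)\ket{\Phi}$, that $\Es{v}\normt{M_v}^2=\iprod{\Phi,\Phi}$, and that $\Es{v,v'}\iprod{M_v,M_{v'}}=\bra{\Phi}(\Pi\otimes\Id)\ket{\Phi}$, where $\Pi$ is the orthogonal projection onto the top eigenvector $\ket{\pi}=\sum_v\sqrt{\mu(v)}\ket{v}$ of $\mathcal{A}$ (which has eigenvalue exactly $1$, corresponding to the stationary distribution). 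The spectral gap hypothesis says precisely that every eigenvalue of $\mathcal{A}$ on $\ket{\pi}^\perp$ is at most $1-\lambda$, i.e.\ $\mathcal{A}\preceq(1-\lambda)\Id+\lambda\Pi$. Tensoring with $\Id$ and evaluating against $\ket{\Phi}$ gives
\[
(1-\eta)\,\iprod{\Phi,\Phi}\,\leq\,\bra{\Phi}(\mathcal{A}\otimes\Id)\ket{\Phi}\,\leq\,(1-\lambda)\iprod{\Phi,\Phi}+\lambda\,\bra{\Phi}(\Pi\otimes\Id)\ket{\Phi},
\]
and rearranging yields $\bra{\Phi}(\Pi\otimes\Id)\ket{\Phi}\geq(1-\eta/\lambda)\iprod{\Phi,\Phi}$, which is~\eqref{eq:assumption-2} with room to spare since $1-\eta/\lambda\geq 1-2\eta/\lambda$.

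The argument is essentially a chain of reformulations, so I do not expect a serious obstacle; the one point requiring care is the spectral bookkeeping in the third step, namely verifying that the top eigenvector of $\mathcal{A}$ is the square-root-of-stationary vector $\ket{\pi}$ with eigenvalue $1$ and that the rest of the spectrum is controlled by $\lambda$ via $L=\Id-\mathcal{A}$, so that the operator inequality $\mathcal{A}\preceq(1-\lambda)\Id+\lambda\Pi$ is legitimate. It is also convenient to note that $H$ is a Gram matrix, so $\mathcal{A}$ is positive semidefinite, though this is not strictly needed for the upper bound. Once the three displayed identities of the third step are in place, the expander-mixing bound is immediate.
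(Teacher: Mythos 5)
Your proposal is correct and follows essentially the same route as the paper: both reduce the bilinear forms to traces via Ando's identity~\eqref{eq:ando-good}, encode the family $\{\hat{A}_{\omega v}\}$ as a single vector $\sum_v \sqrt{\mu(v)}\ket{v}\otimes(\cdot)$, and invoke the spectral gap of $L$. The only (cosmetic) difference is that you absorb the $\rho^{1/4}$ factors into $M_v=\hat\rho^{1/4}\hat{A}_{\omega v}\hat\rho^{1/4}$ so the standard operator inequality $\mathcal{A}\preceq(1-\lambda)\Id+\lambda\Pi$ applies to a genuine Hilbert-space norm, whereas the paper keeps the $\rho^{1/2}$'s explicit and expands in the eigenbasis of $L$; your bookkeeping even yields the slightly stronger constant $1-\eta/\lambda$.
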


\begin{proof}
Using~\eqref{eq:ando-good} we can write  
$$\bra{\hat{\Psi}} \overline{\hat{A}_{\omega v}}\otimes \hat{A}_{\omega v'} \ket{\hat{\Psi}} \,=\, \Tr\big(\hat{A}_{\omega v} \rho^{1/2} \hat{A}_{\omega v'}\rho^{1/2} \big),$$
where $\rho$ is the reduced density of $\ket{\hat{\Psi}}$ on either subsystem. Let
$\tilde{L} := L\otimes \Id$ and $A := \sum_{v} \mu(v)^{1/2}\ket{v}\otimes \hat{A}_{\omega v} $. Using that~\eqref{eq:aomega-val-expand} holds for our choice of $\omega$,
\begin{align}
 \Tr\big(A^\dagger \tilde{L}(\Id\otimes \rho^{1/2}) A(\Id\otimes \rho^{1/2})\big) &= \Es{v}\, \Tr\big(\hat{A}_{\omega v} \rho^{1/2} \hat{A}_{\omega v} \rho^{1/2}\big) - \Es{v\sim v'} \Tr\big(\hat{A}_{\omega v} \rho^{1/2} \hat{A}_{\omega v'} \rho^{1/2}\big) \notag\\
&\leq \eta \Es{v} \Tr\big(\hat{A}_{\omega v} \rho^{1/2} \hat{A}_{\omega v} \rho^{1/2}\big).\label{eq:lapl1}
 \end{align}
The normalized Laplacian $L$ has smallest eigenvalue $0$, and second smallest $\lambda>0$. Let the smallest eigenvector of $L$ be $\ket{u_0} = \sum_v \mu(v)^{1/2} \ket{v}$, and write $A = \ket{u_0}\otimes A_0 + \sum_{i>0} \ket{u_i} \otimes A_i$, where the $\ket{u_i}$ are the remaining eigenvectors, with associated eigenvalue $\lambda_i$, of $\tilde{L}$, and $A_0 = \sum_v \mu(v)^{1/2} \hat{A}_{\omega v}$. Then 
\begin{align}
 A^\dagger \tilde{L} (\Id\otimes \rho^{1/2})A(\Id \otimes \rho^{1/2}) &= \sum_{i>0} \lambda_i A_i^\dagger \rho^{1/2} A_i \rho^{1/2}.\label{eq:lapl1b}
\end{align}
Taking the trace we get
\begin{align*}
\Es{v} \Tr\Big( \big(\hat{A}_{\omega v}- \Es{v'} \hat{A}_{\omega v'}\big) \rho^{1/2} \big(\hat{A}_{\omega v}- \Es{v'} \hat{A}_{\omega v'}\big)\rho^{1/2}\Big) &= \Tr\big((A - \ket{v_0}\otimes A_0)^\dagger \rho^{1/2} (A - \ket{v_0}\otimes A_0) \rho^{1/2}\big)\\
&= \sum_{i>0} \Tr\big(A_i^\dagger \rho^{1/2} A_i \rho^{1/2}\big)\\
&\leq \frac{\eta}{\lambda} \Es{v} \Tr\big(\hat{A}_{\omega v}\rho^{1/2} \hat{A}_{\omega v}\rho^{1/2}\big),
\end{align*}
where the last inequality follows from~\eqref{eq:lapl1b} and~\eqref{eq:lapl1}. 
\end{proof}

We conclude this section by giving the proof of Claim~\ref{claim:expand-bound}.

\begin{proof}[Proof of Claim~\ref{claim:expand-bound}]
For any four vertices $v,v',w$ and $w'$ define
$$ \eps_{vv'}^{ww'} \,:=\,  \sum_{b \nleftrightarrow b'} \bra{\Psi_{ww'}} \overline{\tilde{A}_v^b} \otimes \tilde{A}_{v'}^{b'} \ket{\Psi_{ww'}}.$$
Note also that, given our choice of the unitaries $U_v$ satisfying~\eqref{eq:defu} and using~\eqref{eq:ando-good},
\begin{align}
\bra{\Psi_{ww'}} \overline{\tilde{A}_v^b} \otimes \tilde{A}_{v'}^{b'} \ket{\Psi_{ww'}} &= \Tr\big(\rho^{1/2}(\hat{A}_{w'})^{1/2} \tilde{A}_v^b (\hat{A}_{w})^{1/2}\rho^{1/2} (\hat{A}_{w})^{1/2}\tilde{A}_{v'}^{b'} (\hat{A}_{w'})^{1/2}\big)\notag\\
&= \Tr\big( \hat{A}_{w'}\rho^{1/4} {\tilde{A}_v^b} \rho^{1/4} \hat{A}_{w}\rho^{1/4}\tilde{A}_{v'}^{b'} \rho^{1/4}\big),\label{eq:expand-uv}
\end{align}
an identity that will prove useful. 

By~\eqref{eq:def-eps} and the definition of $\sigma$ we have $\eps = (\Es{w}\|\ket{\Psi_{ww}}\|^{2})^{-1} \Es{v\sim v'} \Es{w} \eps_{vv'}^{ww}$. To prove the claim it will suffice to show that $\eps=O(\eta/\lambda \Es{w}\|\ket{\Psi_{ww}}\|^{2})$. 
Eq.~\eqref{eq:aomega-val-expand} implies that 
$$\Es{v\sim v'} \|\ket{\Psi_{vv'}}\|^2 \geq (1-\eta)\Es{v}\|\ket{\Psi_{vv}}\|^2,$$
hence (using~\eqref{eq:aomega-val-expand} once more) $\Es{v\sim v'} \eps_{vv'}^{vv'} =O(\eta\Es{w}\|\ket{\Psi_{ww}}\|^{2})$. We relate these quantities by establishing the following three bounds.
\begin{align}
 \Es{v\sim v'} \big| \eps_{vv'}^{vv'} - \eps_{vv'}^{vv}\big| &= O(\eta)\Es{v}\|\ket{\Psi_{vv}}\|^{2},\label{eq:epsvv-1}\\
 \Es{v\sim v'}\Es{w} \big| \eps_{vv'}^{vv'} - \eps_{vv'}^{wv'}\big| &= O\big(\eta/\lambda\big)\Es{v}\|\ket{\Psi_{vv}}\|^{2}\label{eq:epsvv-2},\\
 \Es{v\sim v'}\Es{w} \big| \eps_{vv'}^{wv'} - \eps_{vv'}^{ww}\big| &= O\big(\eta/\lambda\big)\Es{v}\|\ket{\Psi_{vv}}\|^{2}.\label{eq:epsvv-3}
\end{align}
It is clear that~\eqref{eq:epsvv-2} and~\eqref{eq:epsvv-3} together will conclude the proof. We first show~\eqref{eq:epsvv-1}. Using~\eqref{eq:expand-uv},
\begin{align*}
\Big|\eps_{vv'}^{vv'} - \eps_{vv}^{vv}\Big|
 &= \Big|\sum_{b \nleftrightarrow b'}  \Tr\big( (\hat{A}_{v'}-\hat{A}_{v})\rho^{1/4} {\tilde{A}_v^b} \rho^{1/4} \hat{A}_{v}\rho^{1/4}\tilde{A}_{v'}^{b'} \rho^{1/4}\big)\Big|\\
&\leq \Big(\Tr\big((\hat{A}_{v'}-\hat{A}_{v})\rho^{1/2}(\hat{A}_{v'}-\hat{A}_{v})\rho^{1/2}\big)\Big)^{1/2}\Big(\sum_{b \nleftrightarrow b'} \Tr\big(\hat{A}_{v}\rho^{1/4} {\tilde{A}_v^b} \rho^{1/4} \hat{A}_{v}\rho^{1/4}\tilde{A}_{v'}^{b'} \rho^{1/4}\big)\Big)^{1/2},
\end{align*}
where the inequality follows from applying the Cauchy-Schwarz inequality to 
$$ \big(\tilde{A}_{v'}^{b'} \big)^{1/2}\rho^{1/4}(\hat{A}_{v'}-\hat{A}_{v})\rho^{1/4} \big({\tilde{A}_v^b}\big)^{1/2}\qquad\text{and}\qquad \big({\tilde{A}_v^b}\big)^{1/2} \rho^{1/4}\hat{A}_{v}\rho^{1/4} \big(\tilde{A}_{v'}^{b'} \big)^{1/2}$$
and using $\sum_b \tilde{A}_{v}^{b}\leq \Id$. Taking the expectation over $v\sim v'$ and using~\eqref{eq:aomega-val-expand} together with $|x-y|\leq \sqrt{ax} \implies x\leq a+4y$ gives~\eqref{eq:epsvv-1}. To prove~\eqref{eq:epsvv-2}, write  
\begin{align*}
\Big|\eps_{vv'}^{vv'} - \eps_{vv'}^{wv'}\Big|
 &= \Big|\sum_{b \nleftrightarrow b'}  \Tr\big( \hat{A}_{v'}\rho^{1/4} {\tilde{A}_v^b} \rho^{1/4}(\hat{A}_{v}-\hat{A}_{w})\rho^{1/4}\tilde{A}_{v'}^{b'} \rho^{1/4}\big)\Big|\\
&\leq \Big(\Tr\big((\hat{A}_{v}-\hat{A}_{w})\rho^{1/2}(\hat{A}_{v}-\hat{A}_{w})\rho^{1/2}\big)\Big)^{1/2}\Big(\sum_{b \nleftrightarrow b'} \Tr\big(\hat{A}_{v'}\rho^{1/4} {\tilde{A}_v^b} \rho^{1/4} \hat{A}_{v'}\rho^{1/4}\tilde{A}_{v'}^{b'} \rho^{1/4}\big)\Big)^{1/2},
\end{align*}
where the inequality follows from a similar application of the Cahuchy-Schwarz inequality as performed above. The second term above is $\eps_{vv'}^{v'v'}$, so using~\eqref{eq:epsvv-1}, and~\eqref{eq:assumption-2} to bound the first term, we have proved~\eqref{eq:epsvv-2}. Finally, to prove~\eqref{eq:epsvv-3} write
\begin{align*}
\Big|\eps_{vv'}^{wv'} - \eps_{vv'}^{ww'}\Big|
 &= \Big|\sum_{b \nleftrightarrow b'}  \Tr\big( (\hat{A}_{v'}-\hat{A}_{w'})\rho^{1/4} {\tilde{A}_v^b} \rho^{1/4}\hat{A}_{v}\rho^{1/4}\tilde{A}_{v'}^{b'} \rho^{1/4}\big)\Big|\\
&\leq \Big(\Tr\big((\hat{A}_{v'}-\hat{A}_{w'})\rho^{1/2}(\hat{A}_{v'}-\hat{A}_{w'})\rho^{1/2}\big)\Big)^{1/2}\Big(\sum_{b \nleftrightarrow b'} \Tr\big(\hat{A}_{v}\rho^{1/4} {\tilde{A}_v^b} \rho^{1/4} \hat{A}_{v}\rho^{1/4}\tilde{A}_{v'}^{b'} \rho^{1/4}\big)\Big)^{1/2},
\end{align*}
which is again bounded using~\eqref{eq:epsvv-1} and~\eqref{eq:assumption-2}. Combining~\eqref{eq:epsvv-2} and~\eqref{eq:epsvv-3} proves the claim. 

\end{proof}


\subsection{Non-expanding games}\label{sec:non-expand}

Suppose $G$ is an arbitrary (not necessarily expanding) projection game. In the game $G^\dagger G$ the players Bob and Bob' are always sent neighboring $v\sim v'$. Using notation from the previous section, we would like to enable the players to take advantage of the possibility of using an arbitrary entangled state in order to initialize themselves in a  state that is close to $\ket{\Psi_{\omega vv'}}$. The difficulty is that this must be done ``on the fly'', as $\ket{\Psi_{\omega vv'}}$ depends on the questions $v,v'$; indeed since $G$ is not expanding there may not be a single state close to all $\ket{\Psi_{\omega vv'}}$ that they could have agreed upon before the start of the game; for instance $G$ could be a direct sum of two independent games for which the optimal entangled state and measurements need not bear any relation to each other.

To get around this we resort to the use of a so-called family of ``universal embezzling states'' $\ket{\Gamma_d}\in\C^d\otimes \C^d$. These states, introduced in~\cite{DamH03embezzlement}, have the property that for any given state $\ket{\psi}$ there exists a $d$ and unitaries $U,V$ such that $U\otimes V \ket{\Gamma_d} \approx \ket{\psi} \ket{\Gamma_{d'}}$ for some $d'$. Hence if \emph{both} players have a description of the target state $\ket{\Psi_{\omega vv'}}$ they can easily generate it locally from the universal state  $\ket{\Gamma_d}$. Our setting presents and additional difficulty: only the first player, Bob, knows $v$, and the second, Bob', knows $v'$; how to make them agree on which state to embezzle? We will use the following lemma. 

\begin{lemma}\label{lem:psi-close}
Let $\ket{\Phi}$ be a bipartite state invariant under permutation of the two subsystems, $\rho$ its reduced density on either subsystem, $0\leq A_v \leq \Id$, and $\nu$ a distribution on $\mathcal{V}\times\mathcal{V}$ that is symmetric under permutation of the two coordinates (we also denote by $\nu$ the marginal distribution on either coordinate), such that
\beq\label{eq:close-assumption}
 \Es{(v, v')\sim\nu} \bra{\Phi} \overline{A_v} \otimes A_{v'} \ket{\Phi} \geq (1-\eta) \Es{v\sim\nu}\bra{\Phi} \overline{A_v} \otimes A_{v} \ket{\Phi}.
\eeq
Let $U_v$ be unitaries such that 
\begin{equation}\label{eq:defu2}
U_{v}A_v^{1/2} \rho^{1/4} =  \rho^{1/4}A_{v}^{1/2} U_{v}^\dagger = \big(\rho^{1/4}A_{v}\rho^{1/4} \big)^{1/2},
\end{equation}
and let
$$ \ket{\Phi_{v v'}} := \overline{U_v}\ \overline{A_v}^{1/2} \otimes {U_{v'}} A_{v'}^{1/2} \ket{\Phi}.$$
Then for any $v''\in \mathcal{V}$,
$$ \Es{v\sim v'}\, \big\|\ket{{\Phi}_{v v''}} -\ket{{\Phi}_{ v'v''}} \big\|^2 \,=\, O(\eta^{1/2})\Big(\Es{v}\big\|\ket{{\Phi}_{ vv}} \big\|^2\Big)^{1/2}\big\|\ket{{\Phi}_{ v''v''}} \big\|,$$
and
$$ \Es{v\sim v'}\, \big\|\ket{{\Phi}_{v v}} -\ket{{\Phi}_{ v'v}} \big\|^2 \,=\, O(\eta^{1/2})\,\Es{v}\,\big\|\ket{{\Phi}_{ vv}} \big\|^2.$$
\end{lemma}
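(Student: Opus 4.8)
The plan is to reduce both estimates to a single clean trace formula for the overlaps $\langle\Phi_{ab}|\Phi_{cd}\rangle$ and then to feed that formula into one non-trivial matrix inequality. Throughout I would write $R_v:=(\rho^{1/4}A_v\rho^{1/4})^{1/2}\succeq0$, so that by~\eqref{eq:defu2} one has $U_vA_v^{1/2}\rho^{1/4}=\rho^{1/4}A_v^{1/2}U_v^\dagger=R_v$ and $R_v^2=\rho^{1/4}A_v\rho^{1/4}$. First I would establish, using the Ando-type identity~\eqref{eq:ando-good} (Claim~\ref{claim:ando}) together with the defining relation for $U_v$, the master formula
$$\langle\Phi_{ab}\,|\,\Phi_{cd}\rangle \,=\, \Tr\big(R_aR_cR_bR_d\big)\qquad\text{for all }a,b,c,d\in\mathcal V.$$
Indeed, expanding $\langle\Phi_{ab}|\Phi_{cd}\rangle=\bra\Phi\,\overline{M_a^\dagger M_c}\otimes(N_b^\dagger N_d)\,\ket\Phi$ with $M_x=N_x=U_xA_x^{1/2}$ and applying~\eqref{eq:ando-good} turns the overlap into $\Tr\big(M_a^\dagger M_c\,\rho^{1/2}\,N_b^\dagger N_d\,\rho^{1/2}\big)$; substituting $A_x^{1/2}U_x^\dagger=\rho^{-1/4}R_x$ and $U_xA_x^{1/2}=R_x\rho^{-1/4}$ and cancelling the factors $\rho^{-1/4}\rho^{1/2}\rho^{-1/4}=\Id$ collapses this to $\Tr(R_aR_cR_bR_d)$. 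Expanding the two squared norms then gives the clean expressions
$$\big\|\ket{\Phi_{vv''}}-\ket{\Phi_{v'v''}}\big\|^2=\Tr\!\big(R_{v''}^2(R_v-R_{v'})^2\big),\qquad \big\|\ket{\Phi_{vv}}-\ket{\Phi_{v'v}}\big\|^2=\Tr\!\big(R_{v}^2(R_v-R_{v'})^2\big).$$

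The heart of the argument, and the step I expect to be the main obstacle, is the matrix inequality
$$\big\|(R_v-R_{v'})^2\big\|_2\,\le\,\big\|R_v^2-R_{v'}^2\big\|_2\qquad(R_v,R_{v'}\succeq0),$$
which I would prove as follows. Set $X=R_v-R_{v'}$ (Hermitian) and $S=R_v+R_{v'}\succeq0$, so that $R_v^2-R_{v'}^2=\tfrac12(XS+SX)$ while $S+X=2R_v\succeq0$ and $S-X=2R_{v'}\succeq0$. Working in an eigenbasis $\{\ket i\}$ of $X$ with $X\ket i=x_i\ket i$, one has $\|R_v^2-R_{v'}^2\|_2^2=\sum_{i,j}\tfrac14(x_i+x_j)^2|S_{ij}|^2\ge\sum_i x_i^2S_{ii}^2$, where the inequality just drops the non-negative off-diagonal terms; and the constraints $\bra i(S\pm X)\ket i\ge0$ force $S_{ii}\ge|x_i|$, whence the right-hand side is at least $\sum_i x_i^4=\|X^2\|_2^2$. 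I would stress that this Hilbert--Schmidt inequality is exactly the place one must be careful: it \emph{cannot} be upgraded to the operator inequality $(R_v-R_{v'})^2\preceq|R_v^2-R_{v'}^2|$, which is false in general (e.g.\ for $R_v=\mathrm{diag}(1,0)$ and $R_{v'}=\tfrac12\ket\phi\bra\phi$ at a suitable angle), so the naive matrix analogue of the scalar estimate $(r-r')^2\le|r^2-r'^2|$ is unavailable and one is forced to work at the level of the $\|\cdot\|_2$ norm.

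It then remains to translate the hypothesis~\eqref{eq:close-assumption} and assemble. Using~\eqref{eq:ando-good} one rewrites $\bra\Phi\,\overline{A_v}\otimes A_{v'}\,\ket\Phi=\Tr(R_v^2R_{v'}^2)=\langle R_v^2,R_{v'}^2\rangle$, so~\eqref{eq:close-assumption}, together with the symmetry of $\nu$, expands (via $\|R_v^2-R_{v'}^2\|_2^2=\|R_v^2\|_2^2+\|R_{v'}^2\|_2^2-2\langle R_v^2,R_{v'}^2\rangle$) to
$$\Es{v\sim v'}\big\|R_v^2-R_{v'}^2\big\|_2^2\,\le\,2\eta\,\Es{v}\big\|R_v^2\big\|_2^2\,=\,2\eta\,\Es{v}\big\|\ket{\Phi_{vv}}\big\|^2 .$$
For the first estimate I would apply the Cauchy--Schwarz inequality for the Hilbert--Schmidt inner product and the key inequality to get $\Tr(R_{v''}^2(R_v-R_{v'})^2)\le\|R_{v''}^2\|_2\,\|R_v^2-R_{v'}^2\|_2=\|\ket{\Phi_{v''v''}}\|\,\|R_v^2-R_{v'}^2\|_2$, then average over $v\sim v'$ and apply Cauchy--Schwarz over the edge distribution, $\Es{v\sim v'}\|R_v^2-R_{v'}^2\|_2\le(\Es{v\sim v'}\|R_v^2-R_{v'}^2\|_2^2)^{1/2}$, followed by the displayed consequence of the hypothesis; this yields exactly $O(\eta^{1/2})\,(\Es{v}\|\Phi_{vv}\|^2)^{1/2}\,\|\Phi_{v''v''}\|$. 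The second estimate is identical except that the weight $R_{v''}^2$ is replaced by $R_v^2$, so the edge Cauchy--Schwarz now pairs $\|R_v^2\|_2$ with $\|R_v^2-R_{v'}^2\|_2$ and gives $(\Es{v}\|\Phi_{vv}\|^2)^{1/2}(2\eta\,\Es{v}\|\Phi_{vv}\|^2)^{1/2}=O(\eta^{1/2})\,\Es{v}\|\Phi_{vv}\|^2$. Once the matrix inequality is in hand, everything else is bookkeeping with the master formula and two applications of Cauchy--Schwarz.
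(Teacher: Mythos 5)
Your proof is correct and follows essentially the same route as the paper's: you define $R_v=(\rho^{1/4}A_v\rho^{1/4})^{1/2}$ (the paper's $X_v$), use Ando's identity to turn all overlaps into traces of products of the $R_v$, split off $\Tr\big(R_{v''}^2(R_v-R_{v'})^2\big)$ by Cauchy--Schwarz, and reduce to the inequality $\Tr\big((R_v-R_{v'})^4\big)\le\Tr\big((R_v^2-R_{v'}^2)^2\big)$. The only difference is that the paper cites this last inequality (Corollary~2 of Kittaneh) while you give a correct self-contained proof of it via the decomposition $R_v^2-R_{v'}^2=\tfrac12(XS+SX)$ with $S\pm X\succeq 0$, which is a nice addition but does not change the argument.
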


The lemma is stated in a stand-alone form, but we may apply it to the present setting by letting $\ket{\Phi}$ be the state $\ket{\hat{\Psi}}$ and $A_v$ the measurement operators $\hat{A}_{\omega v}$ (for some $\omega$) whose existence is guaranteed by Claim~\ref{claim:aomega}. Recalling the definition of $\ket{\Psi_{\omega vv'}}$ in~\eqref{eq:def-tildepsi}, Lemma~\ref{lem:psi-close} (together with~\eqref{eq:aomega-val} to obtain~\eqref{eq:close-assumption}) implies that (for most $\omega$)
$$ \Es{v\sim v'} \,\big\|\ket{\Psi_{\omega vv'}} - \ket{\Psi_{\omega vv}} \big\|^2 \,=\, O(\eta^{1/2}) \Es{v} \,\big\| \ket{\Psi_{\omega vv}}\big\|^2,$$
that is, all three states $\ket{\Psi_{\omega vv'}}$, $\ket{\Psi_{\omega vv}}$ and $\ket{\Psi_{\omega v'v'}}$ are close for neighboring $v\sim v'$. Hence the first player, knowing his  question $v$, can compute a classical description of the state $\ket{{\Psi}_{\omega v v}}$; the second player can compute a classical description of $\ket{{\Psi}_{\omega v' v'}}$. These two states are close to each other as well as to the target state: are these conditions sufficient for the two players to successfully embezzle a joint state close to either of the three?

It turns out that, if one na\"ively applies the embezzling procedure described in~\cite{DamH03embezzlement}, it can fail completely even when the states are arbitrarily close (see Section~\ref{sec:correlated} for an example). Nevertheless, in the next section we state and prove a ``quantum correlated sampling lemma'', which extends the results in~\cite{DamH03embezzlement} to this ``approximate'' scenario.

We first prove Lemma~\ref{lem:psi-close}, and then show how the lemma, together with the correlated sampling lemma, Lemma~\ref{lem:correlated}, imply Lemma~\ref{lem:approx} for the case of general games.

\begin{proof}[Proof of Lemma~\ref{lem:psi-close}]
Let $X_v$ be defined as
\begin{equation}\label{eq:xu-def}
X_v \,:=\, U_{v}A_{v}^{1/2} \rho^{1/4}\,=\,\rho^{1/4}A_{v}^{1/2}U_{v}^\dagger.
\end{equation}
Using~\eqref{eq:defu2}, $X_v$ is positive semidefinite. With this notation we have the following useful identities.

\begin{claim}\label{claim:xu}
For every $v,v'\in\mathcal{V}$ we have
\begin{equation}\label{eq:xu-1}
\Tr(X_v^4) = \Tr((X_vX_v^\dagger)^2) \,=\, \bra{\Phi} \overline{A_v} \otimes A_{v} \ket{\Phi} = \|\ket{\Phi_{ vv}}\|^2
\end{equation}
and
\begin{equation}\label{eq:xu-2}
\Tr\big( X_v^2  X_{v'}^2 \big) \,=\, \bra{\Phi}\overline{A_{v}} \otimes A_{v'}\ket{\Phi}.
\end{equation}
\end{claim}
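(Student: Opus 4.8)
The plan is to reduce both identities to a single structural fact, namely that the operators $X_v$ are Hermitian and positive semidefinite, and then to read off the inner products on the right-hand sides from the Ando-type identity obeyed by the symmetric state $\ket{\Phi}$. First I would record that, by the defining relation~\eqref{eq:defu2} and the definition~\eqref{eq:xu-def}, one has $X_v = (\rho^{1/4}A_v\rho^{1/4})^{1/2}$, so $X_v$ is positive semidefinite, $X_v = X_v^\dagger$, and
$$ X_v^2 = \rho^{1/4} A_v \rho^{1/4}. $$
This one identity does essentially all the work; everything else is cyclicity of the trace together with the convention that $\ket{\Phi}$ satisfies the analogue of~\eqref{eq:ando-good}, which follows from Claim~\ref{claim:ando} applied to the symmetric Schmidt decomposition of $\ket{\Phi}$ once one uses that each $A_v$ is Hermitian.

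For~\eqref{eq:xu-1}, the equality $\Tr(X_v^4) = \Tr((X_vX_v^\dagger)^2)$ is immediate from $X_v = X_v^\dagger$. Substituting $X_v^2 = \rho^{1/4}A_v\rho^{1/4}$ and using cyclicity gives $\Tr(X_v^4) = \Tr(\rho^{1/4}A_v\rho^{1/2}A_v\rho^{1/4}) = \Tr(A_v\rho^{1/2}A_v\rho^{1/2})$. Applying the Ando-type identity for $\ket{\Phi}$ with $X = \overline{A_v}$ and $Y = A_v$ (so that $\overline{X} = A_v$) identifies this last trace with $\bra{\Phi}\overline{A_v}\otimes A_v\ket{\Phi}$. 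The remaining equality with $\|\ket{\Phi_{vv}}\|^2$ is obtained by expanding the definition of $\ket{\Phi_{vv}}$ from Lemma~\ref{lem:psi-close} and cancelling the unitaries: since $U_v^\dagger U_v = \Id$ and correspondingly $\overline{U_v}^\dagger\,\overline{U_v} = \Id$, one gets $\overline{A_v}^{1/2}\overline{U_v}^\dagger\,\overline{U_v}\,\overline{A_v}^{1/2} = \overline{A_v}$ and $A_v^{1/2}U_v^\dagger U_v A_v^{1/2} = A_v$, whence $\|\ket{\Phi_{vv}}\|^2 = \bra{\Phi}\overline{A_v}\otimes A_v\ket{\Phi}$.

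The second identity~\eqref{eq:xu-2} proceeds identically. Using $X_v^2 = \rho^{1/4}A_v\rho^{1/4}$, $X_{v'}^2 = \rho^{1/4}A_{v'}\rho^{1/4}$, and cyclicity, I would write $\Tr(X_v^2 X_{v'}^2) = \Tr(\rho^{1/4}A_v\rho^{1/2}A_{v'}\rho^{1/4}) = \Tr(A_v\rho^{1/2}A_{v'}\rho^{1/2})$, and then the Ando-type identity with $X = \overline{A_v}$, $Y = A_{v'}$ turns this into $\bra{\Phi}\overline{A_v}\otimes A_{v'}\ket{\Phi}$, as required.

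I do not expect any genuine obstacle in this claim; the only points requiring care are bookkeeping. The first is confirming that the specific unitaries fixed by~\eqref{eq:defu2} make $X_v$ genuinely positive semidefinite, so that $X_v^2$ (and not merely $X_vX_v^\dagger$) equals $\rho^{1/4}A_v\rho^{1/4}$ and the two forms in~\eqref{eq:xu-1} coincide. The second is checking that the conjugation-and-transpose conventions underlying~\eqref{eq:ando-good} line up: because each $A_v$ is Hermitian and $\rho$ is diagonal in the Schmidt basis, the transpose produced by Claim~\ref{claim:ando} is absorbed and the Hermitian operators $A_v$ appear in $\Tr(A_v\rho^{1/2}A_{v'}\rho^{1/2})$ without a stray transpose. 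Once these conventions are pinned down, both displayed identities are one-line trace manipulations.
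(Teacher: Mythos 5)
Your proof is correct and follows essentially the same route as the paper: both use the defining relation~\eqref{eq:defu2} to reduce $X_v^2$ to $\rho^{1/4}A_v\rho^{1/4}$ (the paper phrases this as cancelling the unitaries $U_v^\dagger U_{v'}$ inside the trace), then apply cyclicity of the trace and Ando's identity (Claim~\ref{claim:ando}) for the permutation-invariant state $\ket{\Phi}$. Your explicit verification of the final equality $\bra{\Phi}\overline{A_v}\otimes A_v\ket{\Phi}=\|\ket{\Phi_{vv}}\|^2$ and of the transpose/conjugation conventions is a minor elaboration of steps the paper leaves implicit.
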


\begin{proof}
For~\eqref{eq:xu-1} we use the definition of $X_v$ to write
$$\Tr(X_v^4) = \Tr((X_vX_v^\dagger)^2) \,=\, \Tr(A_{v} \rho^{1/2}A_{v} \rho^{1/2}) = \bra{\Phi} \overline{A_v} \otimes A_{v} \ket{\Phi},$$
where the last equality follows from Ando's identity, Claim~\ref{claim:ando}, together with our assumption on $\ket{\Phi}$ being permutation-invariant.
To show~\eqref{eq:xu-2}, expand using the definition~\eqref{eq:xu-def}
\begin{align*}
\Tr(X_v^2  X_{v'}^2) &= \Tr(U_{v}A_v^{1/2} \rho^{1/2}A_{v}^{1/2}U_{v}^\dagger U_{v'} A_{v'}^{1/2} \rho^{1/2}A_{v'}^{1/2}U_{v'}^\dagger) \\
&=  \Tr\big( A_{v} \rho^{1/2} A_{v'}\rho^{1/2}\big)\\
&= \bra{\Phi}\overline{A_{v}} \otimes A_{v'} \ket{\Phi},
\end{align*}
where the second equality follows from~\eqref{eq:defu2} and the last from Claim~\ref{claim:ando}.
\end{proof}

Now for any three $v,v',v''$,
\begin{align}
 \| \ket{\Phi_{ vv''}} - \ket{\Phi_{ v'v''}} \|^2 &=\big(\bra{\Phi_{ vv''}} - \bra{\Phi_{ v'v''}}\big)\big( \ket{\Phi_{ vv''}} - \ket{\Phi_{ v'v''}}\big)\notag\\
&= \bra{\Phi} \big(\overline{A_{v}^{1/2}U_{v}^\dagger} - \overline{A_{v'}^{1/2}U_{v'}^\dagger}\big)\big(\overline{U_{v}A_{v}^{1/2}} -\overline{U_{v'} A_{v'}^{1/2}}\big) \otimes A_{v''}^{1/2} U_{v''}^\dagger U_{v''} A_{v''}^{1/2}\ket{\Phi}\notag\\
&= \Tr\big( (X_v - X_{v'})^\dagger (X_v-X_{v'}) X_{v''}^\dagger X_{v''} \big)\notag\\
&\leq \big(\Tr\big( (X_v - X_{v'})^4 \big) \big)^{1/2}\big(\Tr\big( X_{v''}^4\big)\big)^{1/2},\label{eq:xu-4}
\end{align}
where the last inequality follows from Cauchy-Schwarz and the fact that the $X_v$ are positive semidefinite. The first term on the right-hand side of~\eqref{eq:xu-4} can be bounded as
\begin{align*}
\Tr\big( (X_v - X_{v'})^4 \big) &\leq  \Tr\big( (X_v^2 - X_{v'}^2)^2\big)\\
&= \bra{\Phi}\overline{A_{v}} \otimes A_{v} \ket{\Phi} + \bra{\Phi}\overline{A_{v'}} \otimes A_{v'} \ket{\Phi} - 2\bra{\Phi}\overline{A_{v}} \otimes A_{v'}\ket{\Phi},
\end{align*}
where the first inequality can be found as e.g. Corollary~2 in~\cite{Kittaneh86lp} and the equality follows from~\eqref{eq:xu-1} and~\eqref{eq:xu-2}. Going back to~\eqref{eq:xu-4}, we obtain
\begin{align}
 \Es{v\sim v'}  \| \ket{\Phi_{ vv''}} - \ket{\Phi_{ v'v''}} \|^2 &\leq  \Big(2\eta \Es{v}\|\ket{\Phi_{vv}}\|^2 \Big)^{1/2}\,\Big( \|\ket{\Phi_{v''v''}}\|^2\Big)^{1/2},\notag
\end{align}
where the first inequality uses the assumption made in the lemma to bound the first term in~\eqref{eq:xu-4} and~\eqref{eq:xu-1} to rewrite the second. This proves the first inequality claimed in the lemma. The second is obtained by taking $v''=v$ in~\eqref{eq:xu-4}, and then the expectation over $v\sim v'$ as in the above.
\end{proof}

We conclude this section with the proof of Lemma~\ref{lem:approx}.

\begin{proof}[Proof of Lemma~\ref{lem:approx}, general case]
Let $\ket{\hat{A}}$ be a vector strategy, and $\ket{\hat{\Psi}}$ a state such that~\eqref{eq:aomega-val} holds. Our goal is to identify a quantum strategy $\ket{\tilde{A}}$ such that $\qnorm{G\otimes \Id \ket{\tilde{A}}}^2 \geq 1-O(\eta^{1/c})$, which by Claim~\ref{claim:aomega} will suffice to prove Lemma~\ref{lem:approx}.

We define a ``re-normalized'' vector strategy $\ket{\tilde{A}}\in \C^{|\Omega|}\otimes \C^{|\mathcal{V}|}\otimes \C^{|\mathcal{B}|}\otimes \lin(\C^d)$, from which we will later obtain a quantum strategy $\ket{\tilde{A}_\omega}$ by making a good choice of $\omega\in \Omega$. As previously, for every $\omega$ we may define states
\beq\label{eq:def-tildepsi-omega}
\ket{\Psi_{\omega vv'}}\,:=\, \overline{U_{\omega v} }\overline{\hat{A}_{\omega v}}^{1/2} \otimes U_{\omega v'}\hat{A}_{\omega v'}^{1/2} \ket{\hat{\Psi}},
\eeq
where the $U_{\omega v}$ are the unitaries given by Lemma~\ref{lem:psi-close}: as a consequence of~\eqref{eq:aomega-val} (replacing the max on the right-hand-side by an average) the assumption of the lemma is satisfied, on average over $\omega\in\Omega$, for the states $\ket{\Psi_{\omega v v'}}$. 
The lemma gives the following bound:
\begin{equation}\label{eq:psi-nonexp}
\Es{\omega}\Es{v\sim v'} \|\ket{{\Psi}_{\omega v v}} -\ket{{\Psi}_{ \omega v v'}} \|^2 \,=\, O(\eta^{1/2})\Es{\omega}\Es{v}\|\ket{\Psi_{\omega vv}}\|^2.
\end{equation}
 In addition, for every $\omega$ and question $v\in\mathcal{V}$ let $\overline{V_{\omega v}}$ and $W_{\omega v}$ be the unitaries that are defined in Lemma~\ref{lem:correlated}, for the (re-normalized) state $\ket{\Psi_{\omega vv}}$ and a choice of $\delta = \eta^2$. By convexity the lemma gives us that
\begin{equation}\label{eq:single-general-0}
\Es{\omega}\Es{v\sim v'} \|\overline{V_{\omega v}}\otimes W_{{\omega v'}} \ket{\Gamma_{dd'}} - \|\ket{\Psi_{\omega vv}}\|^{-1}\ket{\Psi_{\omega vv}}\ket{\Gamma_{d'}}\|^2 \,=\, O\Big(\Es{\substack{\omega\\v\sim v'}}\Big\|\frac{\ket{\Psi_{\omega vv}}}{\|\ket{\Psi_{\omega vv}}\|} - \frac{\ket{\Psi_{\omega v'v'}}}{\|\ket{\Psi_{\omega v'v'}}\|}\Big\|^{2/6}\Big).
\end{equation}
For any question $v\in\mathcal{V}$ and answer $b\in\mathcal{B}$, define measurement operators
$$\tilde{A}_{\omega v}^b:= V_{\omega v}^\dagger \big(U_{\omega v} \hat{A}_{\omega v}^{-1/2} \hat{A}_{\omega v}^b \hat{A}_{\omega v}^{-1/2} U_{\omega v}^\dagger\otimes \Id_{d'}\big) V_{\omega v},\, \tilde{B}_{\omega v}^b:= W_{\omega v}^\dagger \big(U_{\omega v} \hat{A}_{\omega v}^{-1/2} \hat{A}_{\omega v}^b \hat{A}_{\omega v}^{-1/2} U_{\omega v}^\dagger\otimes \Id_{d'}\big) W_{\omega v}.$$
It is easy to verify that each $\tilde{A}_{\omega v}^b$ and $\tilde{B}_{\omega v}^b$ is positive semidefinite, and that $\sum_b \tilde{A}_{\omega v}^b,\,\sum_b \tilde{B}_{\omega v}^b \leq \Id$. Since we may always add a ``dummy'' outcome in order for the measurement operators to sum to identity, both $\{\tilde{A}_{\omega v}^b\}_b$ and $\{\tilde{B}_{\omega v}^b\}_b$ are easily made into well-defined measurements, and for every $\omega$, $\ket{\tilde{A}_\omega} := \sum_{v,b} \ket{v,b}\otimes \tilde{A}_{\omega v}^b$ and $\ket{\tilde{B}_\omega} := \sum_{v,b} \ket{v,b}\otimes \tilde{B}_{\omega v}^b$ valid strategies for the  players Bob and Bob' in $G^\dagger G$ (we will soon show that at least one of these strategies must be a good strategy for the square game).

We first bound
\begin{align}
\Es{\omega}&\Es{v\sim v'} \|\overline{V_{\omega v}}\otimes W_{{ \omega v'}} \ket{\Gamma_{dd'}} - \|\ket{\Psi_{\omega vv}}\|^{-1}\ket{\Psi_{\omega vv'}}\ket{\Gamma_{d'}}\|^2\notag\\
&\leq \Es{\omega}\Es{v\sim v'} \|\overline{V_{\omega v}}\otimes W_{{\omega v'}} \ket{\Gamma_{dd'}} - \|\ket{\Psi_{\omega vv}}\|^{-1}\ket{\Psi_{\omega vv}}\ket{\Gamma_{d'}}\|^2
 + \Es{\omega}\Es{v\sim v'} \|\ket{\Psi_{\omega vv}}\|^{-2}\|\ket{\Psi_{\omega vv}} - \ket{\Psi_{\omega vv'}}\|^2 \notag\\
&= O\Big(\Es{\omega}\Es{v\sim v'}\,  \Big\|\frac{\ket{\Psi_{\omega vv}}}{\|\ket{\Psi_{\omega vv}}\|} - \frac{\ket{\Psi_{\omega v'v'}}}{\|\ket{\Psi_{\omega v'v'}}\|}\Big\|^{2/6}\Big) + O(\eta^{1/2})\notag\\
&=  O\big(\Es{\omega}\Es{v\sim v'} \|\ket{\Psi_{\omega vv}}\|^{-1/6} \|\ket{\Psi_{\omega v'v'}}\|^{-1/6}\|\ket{\Psi_{\omega vv}} - \ket{\Psi_{\omega v'v'}}\|^{2/6}\big) + O(\eta^{1/2})\notag\\
&= O(\eta^{1/{6}}),\label{eq:single-general-1}
\end{align}
where in the second line we used~\eqref{eq:psi-nonexp} and the Cauchy-Schwarz inequality to bound the last term, and~\eqref{eq:single-general-0} for the first; in the third line we used that $\|\ket{\Psi_{vv}}\|\leq 1$,
and in the last we again applied~\eqref{eq:psi-nonexp} and the Cauchy-Schwarz inequality. Note that
\begin{align}
\Es{\omega}\,\qnorm{ G \otimes \Id \ket{\tilde{A}_\omega} }\qnorm{ G \otimes \Id \ket{\tilde{B}_\omega}} &= \Big\| \Es{\omega}\Es{v\sim v'} \sum_{b\leftrightarrow b'}  \overline{\tilde{A}_{\omega v}^b} \otimes \tilde{A}_{\omega v'}^{b'} \Big\|_\infty^{1/2}\Big\| \Es{\omega}\Es{v\sim v'} \sum_{b\leftrightarrow b'}  \overline{\tilde{B}_{\omega v}^b} \otimes \tilde{B}_{\omega v'}^{b'} \Big\|_\infty^{1/2} \notag\\
&\geq \Big\| \Es{\omega}\Es{v\sim v'} \sum_{b\leftrightarrow b'}  \overline{\tilde{A}_{\omega v}^b} \otimes \tilde{B}_{\omega v'}^{b'} \Big\|_\infty,\notag
\end{align}
where the last inequality follows from Claim~\ref{claim:haagerup}. Hence
\begin{align}
\qnorm{G}^2 &\geq \Es{\omega}\,\qnorm{ G \otimes \Id \ket{\tilde{A}_\omega} }\qnorm{ G \otimes \Id \ket{\tilde{B}_\omega} }\notag\\
&\geq \Es{\omega}\Es{v\sim v'} \sum_{b\leftrightarrow b'} \bra{\Gamma_{dd'}} \overline{\tilde{A}_{\omega v}^b} \otimes \tilde{B}_{\omega v'}^{b'} \ket{\Gamma_{dd'}}\notag\\
&\geq \Es{\omega} \Es{v\sim v'} \sum_{b\leftrightarrow b'}\|\ket{\Psi_{\omega vv}}\|^{-2}\bra{\Psi_{\omega vv'}}\overline{U_{\omega v}^\dagger\hat{A}_{\omega v}^{-1/2} \hat{A}_{ \omega v}^b \hat{A}_{\omega v}^{-1/2}U_{\omega v}} \notag\\
&\hskip5cm\otimes U_{\omega v'}\hat{A}_{\omega v'}^{-1/2} \hat{A}_{ \omega v'}^{b'} \hat{A}_{\omega v'}^{-1/2}U_{\omega v'}^\dagger \ket{\Psi_{\omega vv'}}-O(\eta^{1/12})\notag\\
&= \Es{\omega} \Es{v\sim v'} \sum_{b\leftrightarrow b'}\|\ket{\Psi_{\omega vv}}\|^{-2}\bra{\hat{\Psi}}\overline{\hat{A}_{ \omega v}^b} \otimes \hat{A}_{\omega  v'}^{b'} \ket{\hat{\Psi}}-O(\eta^{1/12}),\label{eq:single-general-2}
\end{align}
where the second line uses the definition of $\tilde{A}_{\omega v}^b$ and~\eqref{eq:single-general-1} and the third is by definition of $\ket{\Psi_{\omega vv'}}$. To conclude, note that applying Markov's inequality to~\eqref{eq:aomega-val} we get that a fraction at least $1-\eta^{1/3}$ of $v\sim v'$ are such that
$$ \Es{\omega} \sum_{b\leftrightarrow b'}\,\bra{\hat{\Psi}}\overline{\hat{A}_{ \omega v}^b} \otimes \hat{A}_{\omega  v'}^{b'} \ket{\hat{\Psi}} \geq (1-\eta^{2/3})\Es{\omega}\|\ket{\Psi_{\omega vv}}\|^{2},$$
where here we crucially used the $\max$ on the right-hand side of~\eqref{eq:aomega-val} to allow ourselves use the same $v$ on the right-hand side as on the left-hand side.
For any such $v\sim v'$, a fraction $1-\eta^{1/3}$ of $\omega\in\Omega$ will be such that
$$\sum_{b\leftrightarrow b'}\,\bra{\hat{\Psi}}\overline{\hat{A}_{ \omega v}^b} \otimes \hat{A}_{\omega  v'}^{b'} \ket{\hat{\Psi}} \geq (1-\eta^{1/3})\|\ket{\Psi_{\omega vv}}\|^{2}.$$
For these $v\sim v'$ and $\omega$ the right-hand side of~\eqref{eq:single-general-2} is at least $1-\eta^{1/3}-O(\eta^{1/12})$, and their total weight constitutes at least an $(1-2\eta^{1/3})$ fraction of the total.
\end{proof}

\section{The correlated sampling lemma}\label{sec:correlated}

In this section we prove our quantum correlated sampling lemma.

\begin{lemma}\label{lem:correlated}
Let $d$ be an integer and $\delta>0$. There exists an integer $d'$, and for every state $\ket{\psi}\in \C^d\otimes \C^d$ unitaries $V_\psi$, $W_\psi$ acting on $\C^{dd'}$, such that the following holds for any two states $\ket{\psi},\ket{\varphi}\in \C^d\otimes \C^d$:
$$\| \overline{V_\psi} \otimes W_\varphi \ket{\Gamma_{dd'}} - \ket{\psi}\ket{\Gamma_{d'}}\| \, = \, O\big(\max\big\{\delta^{1/12},\|\ket{\psi}-\ket{\varphi}\|^{1/6}\big\}\big),\footnote{Note that here we implicitly re-ordered the registers, and $\ket{\psi}\ket{\Gamma_{d'}}$ should be understood as a bipartite state in $\C^{dd'}\otimes \C^{dd'}$, with the first (resp. second) space $\C^{dd'}$ being associated with the tensor product of the first (resp. second) spaces, $\C^d$ and $\C^{d'}$, respectively associated with $\ket{\psi}$ and $\ket{\Gamma_{d'}}$.}$$
where here $\ket{\Gamma_d} \propto \sum_{1\leq i \leq d} i^{-1/2} \ket{i}\ket{i}$ is the (properly normalized) $d$-dimensional embezzlement state.
\end{lemma}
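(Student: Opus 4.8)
The plan is to establish this as a robust version of the van Dam--Hayden embezzlement construction, in which the two parties apply permutation-type local unitaries derived from a suitably coarse-grained Schmidt spectrum of their respective inputs. Write the target in Schmidt form $\ket{\psi}=\sum_{k=1}^{d}\sqrt{p_k}\,\ket{u_k}\ket{w_k}$, and observe that the composite target $\ket{\psi}\ket{\Gamma_{d'}}$ is a bipartite state (across the $A/B$ cut) whose $dd'$ Schmidt coefficients are $\theta_{k,j}:=\sqrt{p_k}\,C_{d'}^{-1/2}j^{-1/2}$, with Schmidt vectors $\ket{u_k}\ket{j}$ on $A$ and $\ket{w_k}\ket{j}$ on $B$, whereas $\ket{\Gamma_{dd'}}$ has Schmidt coefficients $\gamma_m=C_{dd'}^{-1/2}m^{-1/2}$. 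For a single target I would define a bijection $m\mapsto(k(m),j(m))$ listing the $\theta_{k,j}$ in non-increasing order, and set $V_\psi,W_\psi$ to be the local unitaries $\ket{m}\mapsto\ket{u_{k(m)}}\ket{j(m)}$ and $\ket{m}\mapsto\ket{w_{k(m)}}\ket{j(m)}$ (absorbing the conjugate into $V_\psi$ to match the $\overline{V_\psi}$ convention). With this choice $\overline{V_\psi}\otimes W_\psi\ket{\Gamma_{dd'}}-\ket{\psi}\ket{\Gamma_{d'}}$ has squared norm exactly $\sum_m(\gamma_m-\theta_{k(m),j(m)})^2$, the $\ell_2$ distance between the two sorted coefficient sequences. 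The self-similarity of the profile $m^{-1/2}$ (multiplying it by a fixed probability vector and re-sorting returns the same profile up to rescaling) shows this is $O(\delta)$ once $d'$ is chosen large enough in terms of $d$ and $\delta$; this is the van Dam--Hayden estimate and disposes of the case $\psi=\varphi$.

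For the general case I would compare $\overline{V_\psi}\otimes W_\varphi\ket{\Gamma_{dd'}}$ with $\overline{V_\psi}\otimes W_\psi\ket{\Gamma_{dd'}}$. Since $\overline{V_\psi}\otimes\Id$ is unitary, the difference has norm $\norm{(\Id\otimes(W_\varphi-W_\psi))\ket{\Gamma_{dd'}}}=\big(\Tr[(W_\varphi-W_\psi)^\dagger(W_\varphi-W_\psi)\rho_\Gamma]\big)^{1/2}$, where $\rho_\Gamma=C_{dd'}^{-1}\sum_m m^{-1}\ket{m}\bra{m}$ is the $1/m$-weighted reduced density on $B$. Thus it suffices to show that $W_\psi$ and $W_\varphi$ act almost identically on the support of $\rho_\Gamma$, i.e. that they send most indices $m$ (weighted by $1/m$) to the same output vector. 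This splits into two requirements: that the two sorting bijections coincide, and that the attached Schmidt data agree, $\ket{w^\psi_{k}}\approx\ket{w^\varphi_{k}}$.

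The main obstacle is exactly this last point: neither the Schmidt coefficients nor, especially, the Schmidt vectors are stable functions of $\ket{\psi}$ near spectral degeneracies, so two close inputs can produce entirely different orderings and bases, which is why a naive application of embezzlement fails. To control this I would pass to a coarse-grained target: round each $p_k$ to a fixed geometric grid $\{(1+\gamma)^{-t}\}$, so that the distinct surviving eigenvalues are separated by a multiplicative gap $1+\gamma$, and replace the individual Schmidt vectors within a bin by the basis-independent maximally-entangled ``flat'' state on that bin, which can be written as a normalized $(\Pi_t\otimes\Id)\ket{\psi}$ for the spectral projector $\Pi_t$ of $\rho_\psi$ onto the $t$-th eigenvalue range. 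By Weyl's inequality the rounded spectra of $\psi$ and $\varphi$ coincide except on bins whose boundary lies within $O(\norm{\ket{\psi}-\ket{\varphi}})$ of an eigenvalue, and by the Davis--Kahan $\sin\theta$ theorem each surviving projector $\Pi_t$ is stable up to $O(\norm{\ket{\psi}-\ket{\varphi}}/(\gamma p_t))$; hence the per-bin flat states, and with them the relevant parts of $W_\psi$ and $W_\varphi$, agree up to an error governed by a power of $\norm{\ket{\psi}-\ket{\varphi}}$ together with the boundary-crossing weight.

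To finish I would choose $\gamma$ to balance the rounding loss (absorbed into $\delta$ through the self-similarity of the embezzlement profile) against the stability loss, convert the resulting fidelity bounds back to vector distance through a chain of Cauchy--Schwarz steps, and apply a triangle inequality combining $\overline{V_\psi}\otimes W_\varphi\ket{\Gamma_{dd'}}\approx\overline{V_\psi}\otimes W_\psi\ket{\Gamma_{dd'}}\approx\ket{\psi}\ket{\Gamma_{d'}}$; the fractional exponents in $O(\max\{\delta^{1/12},\norm{\ket{\psi}-\ket{\varphi}}^{1/6}\})$ are the cumulative cost of these square-root conversions. I expect the genuinely delicate step to be the treatment of eigenvalues straddling a bin boundary, where neither the sorting nor the projector is stable; handling this cleanly is where an averaged (or randomly offset) choice of grid, or an explicit bound on the $1/m$-weight crossing boundaries, appears to be needed.
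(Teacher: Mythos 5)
Your high-level plan (coarse-grain the Schmidt spectrum on a geometric grid, control boundary-crossing, then chain triangle inequalities) is the right one, and you correctly identify the two danger points; but both remain genuinely unresolved in the proposal, and one of them breaks the specific construction you give. The fatal issue is basis alignment inside a (near-)degenerate bin. Your unitaries $V_\psi$ and $W_\varphi$ are sorting bijections $\ket{m}\mapsto\ket{u_{k(m)}}\ket{j(m)}$ built unilaterally from the Schmidt \emph{bases} of $\psi$ and of $\varphi$. Davis--Kahan stabilizes the spectral \emph{projectors} $\Pi_t$, not the individual eigenvectors nor the order in which each party lists them: within a bin of $s$ nearly equal eigenvalues, Alice's list $\{u_i\}$ and Bob's list $\{v_j\}$ can be related by an arbitrary $s\times s$ rotation/permutation, and then $\sum_i \overline{\ket{u_{\sigma(i)}}}\otimes\ket{v_{\pi(i)}}$ is not close to the target correlation $\sum_i\overline{\ket{u_i}}\otimes\ket{u_i'}$ --- the error is $\Omega(1)$ per such bin even when $\ket{\psi}=\ket{\varphi}$ has an exactly degenerate spectrum and the two parties merely break ties differently. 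Observing that the \emph{target} restricted to a bin is the basis-independent state $(\Pi_t\otimes\Id)\ket{\psi}$ does not help, because a permutation-type unitary cannot realize it without a coordinated basis. The paper's construction is built precisely to sidestep this: the players embezzle copies of the pre-correlated state $\ket{\xi_0}\propto\sum_k\tau_k\ket{k,k}\otimes\ket{\Phi_d}$ (with $\ket{\Phi_d}=\sum_i\ket{i}\ket{i}$ maximally entangled on all of $\C^d\otimes\C^d$) and each locally applies the projective measurement $\sum_k\ket{k}\bra{k}\otimes P_k$ (resp.\ $Q_k$) onto their own bins; the post-selected state $\propto\sum_k\tau_k\sum_{i\in S_k,j\in T_k}\langle u_i|v_j\rangle\ket{u_i}\ket{v_j}$ is manifestly basis-independent, and its quality is governed by $\sum_k\tau_k^2\Tr(P_kQ_k)$. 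This forces a rejection-sampling loop over $N$ copies of $\ket{\xi_0}$ whose simultaneous-success probability must also be analyzed; none of this machinery is present in, or derivable from, the permutation construction.

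The second gap is the one you flag yourself: with a \emph{fixed} geometric grid $\{(1+\gamma)^{-t}\}$ the boundary-crossing weight is uncontrollable (all of the spectral mass of $\psi$ and $\varphi$ can sit within $O(\|\ket{\psi}-\ket{\varphi}\|)$ of grid points, which is exactly the paper's counterexample to naive embezzlement), so the randomly offset grid is not an optional refinement but the crux. The paper samples each threshold $\tau_k$ uniformly in $[(1+\eta)^{-k},(1+\eta)^{-k+1})$ using \emph{shared} randomness (implemented with shared EPR pairs, and folded into the ``unitaries'' of the lemma statement by defining them through the whole protocol), and shows the expected weight of pairs $(i,j)$ separated by a threshold is $O(\sqrt{\theta}/\eta)$ after splitting pairs according to whether $|\sqrt{\lambda_i/\mu_j}-\sqrt{\mu_j/\lambda_i}|^2\geq\theta$. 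That split also replaces your Weyl/Davis--Kahan estimates, which as stated carry a $1/(\gamma p_t)$ loss per bin and threaten to introduce a dimension factor $d$ into the final bound, whereas the lemma requires an error depending only on $\delta$ and $\|\ket{\psi}-\ket{\varphi}\|$. In short: the skeleton is right, but the two steps you defer --- shared randomization of the grid and, above all, a basis-independent mechanism for correlating the two parties' bins --- are exactly where the content of the paper's proof lies.
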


A variant of the lemma holding for the special case of $\ket{\psi}=\ket{\varphi}$ was shown in~\cite{DamH03embezzlement}, where the ``embezzlement state'' $\ket{\Gamma_d}$ was first introduced. It is not hard to see however that the construction of the unitaries $V_\psi$, $W_\varphi$ given in that paper does not satisfy the conclusion of Lemma~\ref{lem:correlated}. For instance, if $\ket{\psi}=\sqrt{(1+\eps)/2}\ket{00}+\sqrt{(1-\eps)/2}\ket{11}$ and  $\ket{\varphi}=\sqrt{(1-\eps)/2}\ket{00}+\sqrt{(1+\eps)/2}\ket{11}$) then one can check that for any $\eps>0$ the unitaries from~\cite{DamH03embezzlement} will be such that $\| \overline{V_\psi} \otimes W_\varphi  \ket{\Gamma_{2d'}} - \ket{\psi}\ket{\Gamma_{d'}}\|\geq 1/4$. This is due to our taking advantage of the degenerate spectrum of the reduced density of the EPR pair $(\ket{00}+\ket{11})/\sqrt{2}$ to split the spectrum of the reduced density matrices of the nearby states $\ket{\psi},\ket{\varphi}$ in two different ways; our proof of Lemma~\ref{lem:correlated} shows that this is essentially the only obstacle that needs to be overcome in order to obtain a robust correlated sampling procedure.

Lemma~\ref{lem:correlated} can be seen as a quantum analogue of Holenstein's correlated sampling lemma~\cite{Hol09}, which played an important role in his proof of the classical parallel repetition theorem. There the players, Alice and Bob, receive as inputs a description of a distribution $p$, $q$ respectively such that $\|p-q\|_1 = \delta$. Their goal is to sample an element $u\sim p$ for Alice, $v\sim q$ for Bob, such that $u=v$ with probability $1-O(\delta)$. This task can be reproduced in our setting by giving the states $\ket{\psi}= \sum_u \sqrt{p(u)}\ket{u}\ket{u}$ to Alice and $\ket{\phi}= \sum_v \sqrt{q(v)}\ket{v}\ket{v}$ to Bob. If the players run our procedure and then measure their joint state in the computational basis they will obtain samples with a distribution close to $p$ and $q$, and moreover these samples will be identical with high probability (though our proof would require them to use entanglement in order to do so!).

After the completion of this work Anshu et al.~\cite{Jain14correlated} proposed a different quantum generalization of the classical correlated sampling lemma. In the task they consider the players are given reduced density matrices $\sigma$, $\tau$ respectively such that $\|\sigma-\tau\|_1 = \delta$. Their task is to generate a shared state $\ket{\Psi}_{AA'BB'}$, where Alice holds registers $AA'$ and Bob registers $BB'$, such that the reduced density of $\ket{\Psi}$ on $A$ (resp. $B$) is $\sigma$ (resp. $\tau$), and furthermore $\ket{\Psi}$ is close to being maximally entangled between $AA'$ and $BB'$. This task does not seem directly related to the one we consider; in particular, Anshu et al. show how their task can be accomplished starting from a sufficiently large number of shared EPR pairs while our task provably requires a universal embezzlement state to be successfully accomplished.\footnote{Indeed, local operations alone cannot change the Schmidt coefficients, and local operations on a maximally entangled state will only yield maximally entangled states (possibly of varying dimension). See~\cite{Leung13emb} for further discussion of the criteria for universal embezzlement.}

We note that we have not tried to optimize the parameters appearing in the lemma. In particular, from our proof one can verify that taking $d'=2^{O((d/\delta)^2)}$ in the lemma is sufficient, but this is probably far from optimal. Indeed, the method in~\cite{DamH03embezzlement} gives $d'=d^{O(1/\delta)}$; it may be possible to achieve such a polynomial dependence on $d$ here as well. (We refer the interested reader to recent work by Leung and Wang~\cite{Leung13emb} for an investigation of optimal families of embezzlement states, in the sense of van Dam and Hayden.)

\begin{proof}[Proof of Lemma~\ref{lem:correlated}]
We define the unitaries  $\overline{V_\psi}$, $W_\varphi$ implicitly through the following procedure, in which two players Alice, Bob receive classical descriptions of two bipartite states $\ket{\psi}$, $\ket{\varphi}$ respectively, each of local dimension $d$, as well as a precision parameter $\delta>0$. The unitaries $\overline{V_\psi}$ and $W_\varphi$ correspond to their respective local quantum operations as described in the procedure. The players' initial state consists of a classical description of the states $\ket{\psi}$, $\ket{\varphi}$ respectively (where each coefficient is specified with $\poly\log(\delta,d^{-1})$ bits of precision),  a large supply of private qubits initialized in the $\ket{0}$ state, a large supply of shared EPR pairs that they will use as classical shared randomness, and an embezzlement state $\ket{\Gamma_{dd'}}$ for some large enough $d'$.
\begin{enumerate}
\item Let $d$ be the local dimension of $\ket{\psi}$ and $\ket{\varphi}$, $\delta$ the precision parameter given as part of the input, and $\eta>0$ a small parameter to be specified later.
\item Using shared randomness, the players jointly compute a sequence $\tau_0,\ldots, \tau_{K+1}$, where $K=\big\lceil\frac{\log(d/\delta)}{\log(1+\eta)}\big\rceil$, as follows. They set $\tau_0 = 1$, $\tau_{K+1}=0$, and for $k=1,\ldots,K$ they jointly sample $\tau_k$ uniformly at random in the interval $[(1+\eta)^{-k},(1+\eta)^{-k+1})$.
\item Both players individually compute a classical description of the same (normalized) state
$$\ket{\xi_0}\,\propto\,\sum_{k=0}^K \tau_k \ket{k,k}_{AB} \ket{\Phi_d}_{AB},$$
 where $\ket{\Phi_d}=\sum_{i=1}^d \ket{i}\ket{i}$ is the un-normalized maximally entangled state on $\C^d\otimes \C^d$. Let $N = \big\lceil(2\delta d\sum_k \tau_k^2)^{-2}\big\rceil$. Alice and Bob jointly generate  $N$ copies of $\ket{\xi_0}$, which they can achieve using the universal embezzling procedure from~\cite{DamH03embezzlement} provided $d'$ is large enough.
\item Alice (resp. Bob) computes the Schmidt decomposition $\ket{\psi} = \sum_i \lambda_i \ket{u_i}\ket{u'_i}$ (resp. $\ket{\varphi} = \sum_i \mu_i \ket{v_i}\ket{v'_i}$). She sets $S_k$ (resp. $T_k$) as the set of those indices $i$ such that $\lambda_i\in [\tau_{k+1},\tau_{k})$ (resp. $\mu_i\in [\tau_{k+1},\tau_{k})$), $s_k=|S_k|$ (resp. $t_k=|T_k|$), and $P_k$ (resp. $Q_k$) the projector on the the span of the $\ket{u_i}$ for $i\in S_k$ (resp. $\ket{v_i}$ for $i\in T_k$).
\item Alice measures her share of the first copy of $\ket{\xi_0}$ using the two-outcome measurement $\{P_A,\Id-P_A\}$ where $P_A := \sum_k \ket{k}\bra{k}\otimes P_k$. Bob proceeds similarly with $P_B:=\sum_k \ket{k}\bra{k}\otimes Q_k$. If either of them obtains the first outcome they proceed to the next step. Otherwise, they repeat this step with the next copy of $\ket{\xi_0}$. If either player has used up all his or her copies he or she aborts the protocol.
\item Alice (resp. Bob) controls on the second register of $\ket{\xi_0}$ to erase $\ket{k}$ in the first register. (This is possible since the $P_k$ (resp. $Q_k$) are orthogonal projections.) The players discard all qubits but the remaining register of $\ket{\xi_0}$. Bob applies the unitary map $\ket{v_i}\to\ket{v'_i}$ to his share.
\end{enumerate}

Throughout the analysis we assume without loss of generality that $\delta \geq \|\ket{\psi}-\ket{\varphi}\|^2$. We will show that with probability at least $1-O(\delta^{1/12})$ the procedure described above results in a shared state between Alice and Bob that is within trace distance $O(\delta^{1/12})$ of both $\ket{\psi}$ and $\ket{\varphi}$. Our first claim shows that, based on the $\tau_k$, the players can each compute a discretized version of their inputs that both have (a slightly re-scaled version of) the $\tau_k$ as Schmidt coefficients.

\begin{claim}\label{claim:rounded-dist}
Define
$$ \ket{\Psi} := C\sum_k \tau_k \sum_{i\in S_k} \ket{u_i}\ket{u'_i} \qquad\text{and}\qquad \ket{\Phi} := C'\sum_k \tau_k \sum_{i\in T_k} \ket{v_i}\ket{v'_i},$$
where the $\tau_k$, $S_k$ and $T_k$ are as defined in the protocol and $C,C'$ are appropriate normalization constants. Then
\begin{equation}\label{eq:corr-3}
(1+\eta)^{-1}\,\leq\,C,C' \,\leq\, 1,
\end{equation}
and
\beq\label{eq:corr-00}
 \max\big\{\|\ket{\psi}-\ket{\Psi}\|^2,\,\|\ket{\varphi}-\ket{\Phi}\|^2\big\} \,=\,O(\eta).
\eeq
\end{claim}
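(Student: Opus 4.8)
The plan is to observe that both assertions reduce to a single elementary fact: rounding each Schmidt coefficient up to the nearest grid value $\tau_k$ changes it by at most a multiplicative factor $1+O(\eta)$. First I would record the geometric spacing of the grid. Since $\tau_j$ is sampled from $[(1+\eta)^{-j},(1+\eta)^{-j+1})$ and the sequence is decreasing, consecutive grid points satisfy $1<\tau_k/\tau_{k+1}\le (1+\eta)^2$, because the numerator $\tau_k$ is at most $(1+\eta)^{-k+1}$ while the denominator $\tau_{k+1}$ is at least $(1+\eta)^{-k-1}$. Consequently, for any index $i\in S_k$, that is $\lambda_i\in[\tau_{k+1},\tau_k)$, we obtain the two-sided estimate $\lambda_i<\tau_k\le (1+\eta)^2\lambda_i$, using $\lambda_i\ge\tau_{k+1}\ge (1+\eta)^{-2}\tau_k$. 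The identical statement holds for $\ket{\varphi}$ with $\mu_i,T_k$ in place of $\lambda_i,S_k$, so I treat only $\ket{\psi}$ below, the other case being symmetric.

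For the normalization constant, since the $\ket{u_i}\ket{u'_i}$ are orthonormal we have $C^{-2}=\sum_k \tau_k^2 s_k=\sum_i \tau_{k(i)}^2$, where $k(i)$ denotes the index with $i\in S_{k(i)}$. The lower bound $\lambda_i<\tau_{k(i)}$ gives $\sum_i\tau_{k(i)}^2>\sum_i\lambda_i^2=1$, hence $C\le 1$; the upper bound $\tau_{k(i)}\le(1+\eta)^2\lambda_i$ gives $\sum_i\tau_{k(i)}^2\le(1+\eta)^4\sum_i\lambda_i^2=(1+\eta)^4$, hence $C\ge(1+\eta)^{-2}$. This establishes $(1+\eta)^{-1}\le C\le 1$ up to the constant in the exponent (the same estimate applied to $\mu_i,T_k$ bounds $C'$), and in any case yields $C=1-O(\eta)$, which is what is needed in the sequel.

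For the distance bound I would exploit that $\ket{\psi}$ and $\ket{\Psi}$ are expanded in the \emph{same} orthonormal family $\{\ket{u_i}\ket{u'_i}\}$ with non-negative real coefficients $\lambda_i$ and $C\tau_{k(i)}$. Therefore
\[
\|\ket{\psi}-\ket{\Psi}\|^2 \,=\, \sum_i \big(\lambda_i-C\tau_{k(i)}\big)^2 \,=\, 2-2C\sum_i \lambda_i\tau_{k(i)},
\]
where I used $\sum_i\lambda_i^2=1$ and $C^2\sum_i\tau_{k(i)}^2=1$. Because the rounding is strictly upward, $\tau_{k(i)}>\lambda_i$, the cross term satisfies $\sum_i\lambda_i\tau_{k(i)}\ge\sum_i\lambda_i^2=1$, so the right-hand side is at most $2(1-C)=O(\eta)$ by the normalization bound. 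The same computation with $\mu_i,T_k,C'$ bounds $\|\ket{\varphi}-\ket{\Phi}\|^2$, proving~\eqref{eq:corr-00}.

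There is no serious obstacle here: the whole argument is deterministic given the sampled $\tau_k$ (the randomness of the grid is needed only elsewhere, to control the success probability of the projection step in Step~5), and it collapses to the single multiplicative rounding estimate of the first paragraph. The only points requiring care are (i) getting the correct power of $(1+\eta)$ in the spacing bound, which is $(1+\eta)^2$ rather than $(1+\eta)$ since two randomly placed consecutive grid points straddle two adjacent intervals, and (ii) recording that the rounding is strictly upward, which is precisely what keeps the overlap $\langle\psi|\Psi\rangle=C\sum_i\lambda_i\tau_{k(i)}$ at least $C$ and hence close to $1$.
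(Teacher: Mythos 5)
Your proof is correct and follows essentially the same route as the paper's: both rest on the single multiplicative rounding estimate $\lambda_i\le\tau_{k(i)}\le(1+\eta)^{O(1)}\lambda_i$ to control $C$, and then deduce~\eqref{eq:corr-00} from it (the paper bounds $\sum_i(\lambda_i-\tau_{k(i)})^2$ termwise, while you use the identity $\|\ket{\psi}-\ket{\Psi}\|^2=2-2C\sum_i\lambda_i\tau_{k(i)}\le 2(1-C)$, a marginally cleaner reduction). Your observation about the exponent is also right --- the spacing bound is $(1+\eta)^2$, so the honest constant is $(1+\eta)^{-2}\le C$ rather than $(1+\eta)^{-1}\le C$ as stated --- but, as you say, only $C=1-O(\eta)$ is used downstream, so nothing is affected (both you and the paper also silently gloss over the boundary cell $S_K$, where $\tau_{K+1}=0$ breaks the multiplicative lower bound on $\lambda_i$; its total contribution to $C^{-2}$ is $O(\delta^2)$ and hence harmless).
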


\begin{proof}
We have $C^{-2} = \sum_k \tau_k^2 s_k$ which by definition of $S_k$ satisfies
$$ 1=\sum_i \lambda_i^2 \,\leq\,\sum_k \tau_k^2 s_k \, \leq \,\sum_i (1+\eta)^2\lambda_i^2 \,\leq \,(1+\eta)^2. $$
A similar calculation holds for $C'$, proving~\eqref{eq:corr-3}.
Next we bound the first term in~\eqref{eq:corr-00}, the second being similar. Using the definition of $\ket{\Psi}$ and~\eqref{eq:corr-3} we have
\begin{align*}
\|\ket{\psi}-\ket{\Psi}\|^2 &\leq \sum_k \sum_{i\in S_k} (\lambda_i - \tau_k)^2 + O(\eta)\\
&\leq \sum_k \sum_{i\in S_k} \tau_k^2\, \Big(1-\frac{1}{1+\eta}\Big)^2 + O(\eta)\\
&= O(\eta).
\end{align*}
\end{proof}

Our next claim shows that the subspaces $P_k,Q_k$ computed by the players are close, in the following sense.

\begin{claim}\label{claim:pq-close}
The following holds with probability at least $1-O(\delta^{1/6}\eta^{-1/3})$ over the choice of the $\tau_k$:
\beq\label{eq:pq-close-00}
 \sum_k \,\tau_k^2\, \Tr(P_kQ_k) \,=\, 1-O\big(\delta^{1/6}\eta^{-1/3}\big).
\eeq
\end{claim}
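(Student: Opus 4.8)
The plan is to convert the claim into a statement about how well the random geometric grid $\tau_0,\dots,\tau_{K+1}$ simultaneously ``cuts'' the Schmidt spectra of $\ket\psi$ and $\ket\varphi$, to control the expected miscut by the closeness of the two states, and to finish with Markov's inequality. Throughout write $\rho=\Tr_2\ket\psi\bra\psi=\sum_i\lambda_i^2\ket{u_i}\bra{u_i}$ and $\sigma=\Tr_2\ket\varphi\bra\varphi=\sum_j\mu_j^2\ket{v_j}\bra{v_j}$ for the reduced densities on the first system, so that $P_k$ and $Q_k$ are the spectral projections of $\rho^{1/2}$ and $\sigma^{1/2}$ onto the window $[\tau_{k+1},\tau_k)$. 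Since the trace distance of two pure states is bounded by their Euclidean distance, the standing hypothesis $\|\ket\psi-\ket\varphi\|^2\le\delta$ gives $\|\rho-\sigma\|_1=O(\delta^{1/2})$, and the Powers--St\o rmer inequality then yields $\|\rho^{1/2}-\sigma^{1/2}\|_2^2\le\|\rho-\sigma\|_1=O(\delta^{1/2})$. Expanding this Hilbert--Schmidt norm in the two Schmidt bases (using $\sum_{i,j}|\langle u_i|v_j\rangle|^2\lambda_i^2=\sum_{i,j}|\langle u_i|v_j\rangle|^2\mu_j^2=1$, which follows from Claim~\ref{claim:ando}) produces the identity I will rely on,
\[
\sum_{i,j}|\langle u_i|v_j\rangle|^2\,(\lambda_i-\mu_j)^2\;=\;\|\rho^{1/2}-\sigma^{1/2}\|_2^2\;=\;O(\delta^{1/2}).
\]
This is the quantitative form of ``aligned eigenbases'': two Schmidt vectors can have large overlap only if their coefficients are close, and it is precisely the statement that survives when the spectra are (nearly) degenerate.

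Next I would reduce the deficit to a single nonnegative random variable. Because $\Tr(P_kQ_k)\le\Tr(P_k)=s_k$ and, by Claim~\ref{claim:rounded-dist}, $\sum_k\tau_k^2 s_k=C^{-2}\in[1,(1+\eta)^2]$,
\[
1-\sum_k\tau_k^2\,\Tr(P_kQ_k)\;\le\;\sum_k\tau_k^2\,\Tr\!\big(P_k(\Id-Q_k)\big)\;=\;\sum_{i,j:\,\mathrm{sep}(i,j)}\tau_{k(i)}^2\,|\langle u_i|v_j\rangle|^2\;=:\;W,
\]
where $k(i)$ is the window containing $\lambda_i$ and $\mathrm{sep}(i,j)$ is the event that $\lambda_i$ and $\mu_j$ lie in different windows. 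Thus it suffices to show $W=O(\delta^{1/6}\eta^{-1/3})$ with probability $1-O(\delta^{1/6}\eta^{-1/3})$; the opposite side $\sum_k\tau_k^2\Tr(P_kQ_k)\le C^{-2}=1+O(\eta)$ is harmless.

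The heart of the argument is an expectation bound on $W$ over the random $\tau_k$. The window around a value $\lambda$ has length $\Theta(\eta\lambda)$ and carries one uniformly placed threshold, so $\mathrm{sep}(i,j)$ occurs exactly when a threshold falls between $\lambda_i$ and $\mu_j$, an event of probability at most $\min\{1,\,O(|\lambda_i-\mu_j|/(\eta\min(\lambda_i,\mu_j)))\}$. Since $\tau_{k(i)}^2\le(1+\eta)^4\lambda_i^2$ deterministically, taking expectations gives
\[
\mathbb{E}_\tau[W]\;\lesssim\;\sum_{i,j}\lambda_i^2\,|\langle u_i|v_j\rangle|^2\,\min\Big\{1,\tfrac{|\lambda_i-\mu_j|}{\eta\,\lambda_i}\Big\}.
\]
I would then split pairs into ``near'' ($|\lambda_i-\mu_j|\le\eta\lambda_i$) and ``far'': on near pairs the minimum equals $|\lambda_i-\mu_j|/(\eta\lambda_i)$ and a Cauchy--Schwarz step against the displayed identity (using $\sum_{i,j}\lambda_i^2|\langle u_i|v_j\rangle|^2=1$) bounds the contribution by $O(\delta^{1/4}/\eta)$, while on far pairs $\lambda_i^2\le(\lambda_i-\mu_j)^2/\eta^2$ reduces the contribution to $O(\delta^{1/2}/\eta^2)$. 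Optimizing the near/far cutoff yields an expectation bound polynomial in $\delta$ and $\eta^{-1}$, and Markov's inequality applied to the nonnegative $W$ converts it into the high-probability statement $W=O(\delta^{1/6}\eta^{-1/3})$ of the claim (the tiny-coefficient tail, collected in the window $[\tau_{K+1},\tau_K)=[0,\tau_K)$, carries total weight $O(\delta)$ and is absorbed).

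The step I expect to be the main obstacle is the separation-probability estimate and the subsequent balancing: this is exactly where the randomness of the grid is indispensable, since a fixed geometric grid can split the (near-)degenerate spectrum of two close states in incompatible ways --- the very obstruction illustrated by the degenerate EPR spectrum in the discussion following Lemma~\ref{lem:correlated}. One must handle windows simultaneously at all scales (the intervals shrink like $\eta\lambda$ near $0$) and trade off near pairs, which are separated only with small probability, against far pairs, which are separated with certainty but pay a large coefficient gap; it is this trade-off, fed by the $O(\delta^{1/2})$ bound from the trace identity, that produces the fractional exponents in the stated error.
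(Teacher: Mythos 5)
Your overall architecture matches the paper's proof closely: reduce the deficit $1-\sum_k\tau_k^2\Tr(P_kQ_k)$ to the weight of pairs $(i,j)$ whose Schmidt coefficients are separated by the random multiplicative grid, bound the separation probability by $\min\{1,O(|\lambda_i-\mu_j|/(\eta\lambda_i))\}$, split into near and far pairs, control both by the weighted spectral-distance sum $D:=\sum_{i,j}|\langle u_i|v_j\rangle|^2(\lambda_i-\mu_j)^2$, and finish with Markov. The genuine gap is quantitative and sits in your very first estimate. You bound $D=\|\rho^{1/2}-\sigma^{1/2}\|_2^2$ by passing through $\|\rho-\sigma\|_1=O(\delta^{1/2})$ and Powers--St\o rmer, obtaining only $D=O(\delta^{1/2})$. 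The paper instead bounds this sum directly by $\|\ket{\psi}-\ket{\varphi}\|^2\le\delta$ (equivalently, $\Tr(\rho^{1/2}\sigma^{1/2})\ge\mathrm{Re}\langle\psi|\varphi\rangle$), which is a full square root stronger. This is not a cosmetic loss: balancing the far-pair bound $O(D/\theta)$ against the near-pair bound $O(\sqrt{\theta}/\eta)$ gives an expected deficit of $O(D^{1/3}\eta^{-2/3})$, which with the paper's input $D\le\delta$ is $O(\delta^{1/3}\eta^{-2/3})$ --- exactly what Markov needs to produce the stated $1-O(\delta^{1/6}\eta^{-1/3})$ --- but with your input is only $O(\delta^{1/6}\eta^{-2/3})$, yielding $1-O(\delta^{1/12}\eta^{-1/3})$ after Markov. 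At the point where the claim is actually invoked the paper sets $\eta=\delta^{1/4}$, and then $\delta^{1/12}\eta^{-1/3}=\delta^{0}=1$: your bound becomes vacuous precisely in the relevant regime, whereas the paper's gives $\delta^{1/12}$. Your alternative near-pair estimate via Cauchy--Schwarz against $D$, which gives $O(\sqrt{D}/\eta)$, suffers the same fate: it is $O(\delta^{1/4}\eta^{-1})=O(1)$ at $\eta=\delta^{1/4}$. The sentence ``optimizing the near/far cutoff yields an expectation bound polynomial in $\delta$ and $\eta^{-1}$'' is where this is hidden; if you carry out the optimization you will see the claimed exponents do not come out of your chain of inequalities.

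Everything else in your sketch is sound and parallels the paper: the reduction of the deficit to the separated-pair weight $W$ (using $\sum_k\tau_k^2 s_k\ge 1$ from Claim~\ref{claim:rounded-dist}), the observation that only the randomness of the grid saves the degenerate-spectrum case, and the near/far trade-off are all present in the paper's argument in essentially the form you describe. To close the gap you need the pointwise inequality $\sum_{i,j}|\langle u_i|v_j\rangle|^2(\lambda_i-\mu_j)^2\le\|\ket{\psi}-\ket{\varphi}\|^2$ used by the paper, rather than any route through the trace distance of the reduced density matrices, which provably costs the square root you are losing. (As a minor point, the identity $\sum_{j}|\langle u_i|v_j\rangle|^2=1$ is just completeness of the Schmidt basis and has nothing to do with Claim~\ref{claim:ando}.)
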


\begin{proof}
Using Claim~\ref{claim:rounded-dist} and $\|\ket{\psi}-\ket{\varphi}\|^2\leq \delta$ we deduce that $|\bra{\Phi}\Psi\rangle|^2 = CC'\sum_{k,k'} \tau_k\tau_k' \Tr(P_kQ_{k'}) = 1-O(\eta)$. To prove the claim we bound the contribution of those terms for which $k\neq k'$:
\begin{align}
\sum_{k\neq k'} \tau_k\tau_{k'} \Tr(P_kQ_{k'}) &= \sum_{k\neq k'} \tau_k\tau_{k'} \sum_{i\in S_k}\sum_{j\in T_{k'}} |\bra{u_i}v_j\rangle|^2\notag\\
&\leq (1+\eta)^2 \Big(\sum_{\substack{k\neq k',\,i\in S_k,j\in T_{k'}\\ |\sqrt{\lambda_i/\mu_j}-\sqrt{\mu_j/\lambda_i}|^2\geq \theta}} \lambda_i\mu_j |\bra{u_i}v_j\rangle|^2 +\sum_{\substack{k\neq k',z,i\in S_k,j\in T_{k'}\\ |\sqrt{\lambda_i/\mu_j}-\sqrt{\mu_j/\lambda_i}|^2< \theta}} \lambda_i\mu_j |\bra{u_i}v_j\rangle|^2\Big),\label{eq:pq-close-1}
\end{align}
where $\theta>0$ is a parameter to be fixed later. We bound each of the two terms inside the brackets in~\eqref{eq:pq-close-1} separately. The first term is at most
\begin{align*}
\sum_{\substack{i,j\\ |\sqrt{\lambda_i/\mu_j}-\sqrt{\mu_j/\lambda_i}|^2\geq \theta}} \lambda_i\mu_j |\bra{u_i}v_j\rangle|^2 &\leq  \sum_{\substack{i,j\\ |\lambda_i-\mu_j|^2\geq \theta\lambda_i\mu_j}} \frac{|\lambda_i-\mu_j|^2 }{\theta}|\bra{u_i}v_j\rangle|^2\\
&\leq \theta^{-1} \sum_{i,j} |\lambda_i-\mu_j|^2 |\bra{u_i}v_j\rangle|^2\\
&\leq \theta^{-1}\|\psi-\varphi\|^2\\
&\leq \delta\theta^{-1}.
\end{align*}
To bound the second term in~\eqref{eq:pq-close-1}, note first that provided $\theta$ is at most a small constant times $\eta$ necessarily $k'=k+1$ or $k'=k-1$; our choice of $\theta$ will satisfy this condition. Suppose $k'=k-1$, the other case being similar. Fix $i,j$ such that $|\sqrt{\lambda_i/\mu_j}-\sqrt{\mu_j/\lambda_i}|^2< \theta$. This condition implies $|\lambda_i - \mu_j|^2 \leq \theta\mu_j\lambda_i \leq \theta(1+\eta)^{-3} \tau_k^2$. Since $\tau_k$ is chosen uniformly in an interval of length $\tau_k \eta (1+\eta)^{-1}$, the expected fraction of pairs $(i,j)$ such that such that $|\sqrt{\lambda_i/\mu_j}-\sqrt{\mu_j/\lambda_i}|^2< \theta$ and $\lambda_i \leq \tau_k \leq \mu_j$ is at most $O(\sqrt{\theta}/\eta)$. Hence, on expectation over the choice of the $\tau_k$ we have
$$
\sum_{\substack{k\neq k',\,i\in S_k,j\in T_{k'}\\ |\sqrt{\lambda_i/\mu_j}-\sqrt{\mu_j/\lambda_i}|^2< \theta}} \lambda_i\mu_j |\bra{u_i}v_j\rangle|^2\,\leq\,  O(\sqrt{\theta}\eta^{-1})\sum_{i,j} \lambda_i\mu_j |\bra{u_i}v_j\rangle|^2\,=\, O(\sqrt{\theta}\eta^{-1}).
$$
Choosing $\theta = (\delta\eta)^{2/3}$, we obtain that~\eqref{eq:pq-close-00} holds, on expectation over the choice of the $\tau_k$, with a right-hand side of $1-O(\delta^{1/3}\eta^{-2/3})$. (The condition that $\theta \ll \eta$ is equivalent to $\delta \ll \eta^{1/3}$, which we may assume holds without loss of generality, as otherwise the bound in the claim is trivial.) The left-hand side is at most $1$, and applying Markov's inequality proves the claim.
\end{proof}

Our last claim analyzes the outcome of the sampling procedure, proving the lemma.

\begin{claim}\label{claim:correlated}
Let $\ket{\psi}$, $\ket{\varphi}$ be such that $\|\ket{\psi}-\ket{\varphi}\|^2 \leq \delta$, and set $\eta=\delta^{1/4}$. With probability at least $1-O(\delta^{1/12})$, the sampling procedure described above terminates with Alice and Bob in a shared state $\ket{\xi}$ such that $\|\ket{\xi}-\ket{\psi}\|^2=O(\delta^{1/12})$.
\end{claim}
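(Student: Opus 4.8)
The plan is to condition on a good draw of the shared thresholds $\tau_0,\dots,\tau_{K+1}$, to analyze a single pass of the projection step (Step 5), to argue that the two non-communicating players stop on a common copy of $\ket{\xi_0}$, and finally to compare the surviving state to the discretized target $\ket{\Psi}$ of Claim~\ref{claim:rounded-dist}. Throughout I assume $\delta\geq\|\ket{\psi}-\ket{\varphi}\|^2$, fix $\eta=\delta^{1/4}$, and take the embezzlement dimension $d'$ large enough that producing the $N$ copies of $\ket{\xi_0}$ from $\ket{\Gamma_{dd'}}$ by the procedure of~\cite{DamH03embezzlement} costs total error negligible next to $\delta^{1/12}$; this lets me treat the copies as exact. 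By Claim~\ref{claim:pq-close}, with probability $1-O(\delta^{1/6}\eta^{-1/3})$ over the $\tau_k$ we have $\sum_k \tau_k^2\,\Tr(P_kQ_k)=1-O(\delta^{1/6}\eta^{-1/3})$, while Claim~\ref{claim:rounded-dist} gives $\sum_k\tau_k^2 s_k=C^{-2}$ and $\sum_k\tau_k^2 t_k=C'^{-2}$ with $C^{-2},C'^{-2}\in[1,(1+\eta)^2]$; I condition on this event.

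The crux is the synchronization of the players. From $\ket{\xi_0}\propto\sum_k\tau_k\ket{k,k}_{AB}\ket{\Phi_d}_{AB}$ one computes that on a single copy both players obtain the first outcome with probability $p=(d\sum_k\tau_k^2)^{-1}\sum_k\tau_k^2\,\Tr(P_kQ_k)$, whereas Alice succeeds marginally with probability $(d\sum_k\tau_k^2)^{-1}\sum_k\tau_k^2\,\Tr(P_k)=(d\sum_k\tau_k^2)^{-1}C^{-2}$, and symmetrically for Bob (the conjugate $\overline{V_\psi}$ in the lemma statement absorbs the transpose relating $\bra{\Phi_d}P_k\otimes Q_k\ket{\Phi_d}$ to $\Tr(P_kQ_k)$, routine bookkeeping I suppress). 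Since $\sum_k\tau_k^2\Tr(P_k)$, $\sum_k\tau_k^2\Tr(Q_k)$ and $\sum_k\tau_k^2\Tr(P_kQ_k)$ all equal $1-O(\delta^{1/6}\eta^{-1/3})$ (here I use $\eta=\delta^{1/4}\le\delta^{1/12}$ to absorb the $O(\eta)$ slack in $C^{-2},C'^{-2}$), the marginal and joint single-copy probabilities differ by only $(d\sum_k\tau_k^2)^{-1}O(\delta^{1/6}\eta^{-1/3})$. Modeling the repetition as independent rounds, each round ends in a joint stop with probability of order $(d\sum_k\tau_k^2)^{-1}$ and in a desynchronization (exactly one player stops) with probability $(d\sum_k\tau_k^2)^{-1}O(\delta^{1/6}\eta^{-1/3})$; hence a desynchronization precedes the first joint stop with probability only $O(\delta^{1/6}\eta^{-1/3})$. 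As $N=\lceil(2\delta d\sum_k\tau_k^2)^{-2}\rceil$ while the per-round stop probability is of order $(d\sum_k\tau_k^2)^{-1}$, the chance of exhausting the supply is $\exp(-\Omega(\delta^{-2}(d\sum_k\tau_k^2)^{-3}))$, negligible. Thus with probability $1-O(\delta^{1/6}\eta^{-1/3})$ the players stop jointly on a common copy without aborting.

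Conditioned on a joint stop, the surviving (unnormalized) state is $\sum_k\tau_k\ket{k,k}_{AB}(P_k\otimes Q_k)\ket{\Phi_d}_{AB}$; Step 6 uses the orthogonality of the $P_k$ to uncompute the register $\ket{k}$, and Bob's relabeling $\ket{v_i}\to\ket{v'_i}$ aligns his Schmidt basis. The resulting normalized state $\ket{\xi}$ has overlap with the discretized target $\ket{\Psi}$ of Claim~\ref{claim:rounded-dist} controlled by $\sum_k\tau_k^2\Tr(P_kQ_k)$: a short Cauchy--Schwarz computation using $\sum_k\tau_k^2\Tr(P_kQ_k)=1-O(\delta^{1/6}\eta^{-1/3})$ and the normalization bounds yields $|\langle\Psi|\xi\rangle|^2\ge 1-O(\delta^{1/6}\eta^{-1/3})$, so after aligning the global phase $\|\ket{\xi}-\ket{\Psi}\|^2=O(\delta^{1/6}\eta^{-1/3})$. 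Combining with $\|\ket{\psi}-\ket{\Psi}\|^2=O(\eta)$ from Claim~\ref{claim:rounded-dist} and the triangle inequality gives $\|\ket{\xi}-\ket{\psi}\|^2=O(\delta^{1/6}\eta^{-1/3}+\eta)$.

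It remains to balance the two error terms. With $\eta=\delta^{1/4}$ one has $\delta^{1/6}\eta^{-1/3}=\delta^{1/12}$ and $\eta=\delta^{1/4}\le\delta^{1/12}$, so every failure probability above is $O(\delta^{1/12})$ and $\|\ket{\xi}-\ket{\psi}\|^2=O(\delta^{1/12})$, proving the claim. The main obstacle is the synchronization argument: it is exactly the near-coincidence of the marginal and joint single-copy acceptance probabilities, itself a consequence of the subspace-closeness estimate of Claim~\ref{claim:pq-close}, that allows two players with no communication to select a common copy, which is what makes the embezzling-based correlated sampling succeed.
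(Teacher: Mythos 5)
Your proposal is correct and follows essentially the same route as the paper: condition on the subspace-closeness event of Claim~\ref{claim:pq-close}, compare the per-copy marginal and joint acceptance probabilities to show the players stop on a common copy except with probability $O(\delta^{1/6}\eta^{-1/3})$, bound the overlap of the surviving state with the discretized target via $\sum_k\tau_k^2\Tr(P_kQ_k)$, and set $\eta=\delta^{1/4}$ to balance the errors at $O(\delta^{1/12})$. The only cosmetic difference is that you measure closeness to $\ket{\Psi}$ and invoke the triangle inequality, where the paper compares directly to $\ket{\Phi}$; your treatment of the desynchronization and abort events is, if anything, slightly more explicit than the paper's.
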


\begin{proof}
Suppose first that~\eqref{eq:pq-close-00} holds and that Alice and Bob both proceed to the step 6 synchronously. In that case, at the end of the procedure their joint state is
$$ \ket{\xi} \,:=\, C'' \sum_k \tau_k \sum_{i\in S_k,j\in T_k} \bra{u_i}v_j\rangle \ket{u_i}\ket{v_j},$$
where the normalization constant $C''$ satisfies
$$(C'')^{-2} \,=\,\sum_k \tau_k^2 \sum_{i\in S_k,j\in T_k} | \bra{u_i}v_j\rangle|^2 = \sum_k \tau_k^2 \Tr(P_kQ_k) = 1- O(\eta+\delta^{1/3}\eta^{-2/3})$$
 by Claim~\ref{claim:pq-close}.
We can thus evaluate the overlap of $\ket{\xi}$ with $\ket{\Phi}$ as
\begin{align*}
\bra{\xi}\Phi\rangle &\geq \sum_k \tau_k^2 \,\sum_{i\in S_k,j\in T_k} |\bra{u_i}v_j\rangle|^2 - O(\delta^{1/3}\eta^{-2/3}) \\
&=  1-O(\delta^{1/6}\eta^{-1/3}),
\end{align*}
where for the first equality we used orthogonality of the $\ket{u_i}$, and the last again follows from Claim~\ref{claim:pq-close}.

Next we compute the probability that in step 5 Alice and Bob both obtain the first outcome of their respective POVM in the same iteration. The probability that Alice alone obtains a successful outcome is $\sum_k \tau_k^2 s_k /(d\sum_k \tau_k^2) = (1+\Theta(\eta))(d\sum_k \tau_k^2)^{-1}$ by~\eqref{eq:corr-3}. The same holds for Bob. With probability at least $1-\delta^2$, both of them obtain a successful outcome before the number $N$ of copies of $\ket{\xi_0}$ runs out. Moreover,
the probability that they simultaneously obtain the first outcome is
$$\big(d\sum_k \tau_k^2\big)^{-1}\sum_k \tau_k^2 \Tr(P_kQ_k) \,\geq\, \big( 1- O(\delta^{1/6}\eta^{-1/3})\big)\big(d\sum_k \tau_k^2\big)^{-1}$$
by Claim~\ref{claim:pq-close}. Hence the probability that they simultaneously proceed to the third step of the protocol is at least $1- O(\delta^{1/6}\eta^{-1/3})$. Choosing $\eta = \delta^{1/4}$ proves the lemma.
\end{proof}

\end{proof}

\bibliographystyle{alphaabbrvprelim}

\bibliography{parrep}

\newcommand{\etalchar}[1]{$^{#1}$}
\begin{thebibliography}{JPPG{\etalchar{+}}10}
\expandafter\ifx\csname urlstyle\endcsname\relax
  \providecommand{\doi}[1]{doi:\discretionary{}{}{}#1}\else
  \providecommand{\doi}{doi:\discretionary{}{}{}\begingroup
  \urlstyle{rm}\Url}\fi

\bibitem[AGR81]{Aspect81}
A.~Aspect, P.~Grangier, and G.~Roger.
\newblock Experimental tests of realistic local theories via \uppercase{B}ell's
  theorem.
\newblock \emph{Phys. Rev. Lett.}, 47(7):460--463, 1981.

\bibitem[AHW00]{AmosovHW00additivity}
G.~G. Amosov, A.~S. Holevo, and R.~F. Werner.
\newblock On some additivity problems in quantum information theory.
\newblock Technical report, arXiv:math-ph/0003002, 2000.

\bibitem[AJM{\etalchar{+}}14]{Jain14correlated}
A.~Anshu, R.~Jain, P.~Mukhopadhyay, A.~Shayeghi, and P.~Yao.
\newblock A new operational interpretation of relative entropy and trace
  distance between quantum states.
\newblock Technical report, arXiv:1404.1366, 2014.

\bibitem[Ara02]{Arvind:02}
P.~K. Aravind.
\newblock The magic squares and {B}ell's theorem.
\newblock Technical report, ar{X}iv:quant-ph/0206070, 2002.

\bibitem[BBLV12]{BBLV12}
J.~{Bri\"et}, H.~Buhrman, T.~Lee, and T.~Vidick.
\newblock Multipartite entanglement in {XOR} games.
\newblock \emph{Quantum Information and Computation}, 2012.

\bibitem[BCP{\etalchar{+}}14]{Brunner14survey}
N.~Brunner, D.~Cavalcanti, S.~Pironio, V.~Scarani, and S.~Wehner.
\newblock Bell nonlocality.
\newblock \emph{Rev. Mod. Phys.}, 86:419--478, Apr 2014.

\bibitem[Bel64]{Bell:64a}
J.~S. Bell.
\newblock On the {E}instein-{P}odolsky-{R}osen paradox.
\newblock \emph{Physics}, 1:195--200, 1964.

\bibitem[BK02]{BarnumK02pgm}
H.~Barnum and E.~Knill.
\newblock Reversing quantum dynamics with near-optimal quantum and classical
  fidelity.
\newblock \emph{J. Math. Physics}, 43(5):2097--2106, 2002.

\bibitem[BRR{\etalchar{+}}09]{BarakRRRS09parallel}
B.~Barak, A.~Rao, R.~Raz, R.~Rosen, and R.~Shaltiel.
\newblock Strong parallel repetition theorem for free projection games.
\newblock In \emph{Approximation, Randomization, and Combinatorial
  Optimization. Algorithms and Techniques}, volume 5687, pages 352--365.
  Springer Berlin Heidelberg, 2009.

\bibitem[CHSH69]{Clauser:69a}
J.~F. Clauser, M.~A. Horne, A.~Shimony, and R.~A. Holt.
\newblock {Proposed experiment to test local hidden-variable theories}.
\newblock \emph{Phys. Rev. Lett.}, 23:880--884, 1969.

\bibitem[CJPP11]{cooney}
T.~Cooney, M.~Junge, C.~Palazuelos, and D.~{P\'erez-Garc\'ia}.
\newblock Rank-one quantum games.
\newblock Technical report, arXiv:1112.3563, 2011.

\bibitem[CS14]{ChaillouxS13parallel}
A.~Chailloux and G.~Scarpa.
\newblock Parallel repetition of entangled games with exponential decay via the
  superposed information cost.
\newblock In J.~Esparza, P.~Fraigniaud, T.~Husfeldt, and E.~Koutsoupias,
  editors, \emph{Automata, Languages, and Programming}, volume 8572 of
  \emph{Lecture Notes in Computer Science}, pages 296--307. Springer Berlin
  Heidelberg, 2014.
\newblock ISBN 978-3-662-43947-0.
\newblock \doi{10.1007/978-3-662-43948-7_25}.

\bibitem[CSUU08]{CleveSUU08xor}
R.~Cleve, W.~Slofstra, F.~Unger, and S.~Upadhyay.
\newblock Perfect parallel repetition theorem for quantum {XOR} proof systems.
\newblock \emph{Comput. Complexity}, 17(2):282--299, 2008.

\bibitem[Din07]{Dinur07pcp}
I.~Dinur.
\newblock The {PCP} theorem by gap amplification.
\newblock \emph{J. ACM}, 54(3), June 2007.

\bibitem[DS13]{DinurS13parallel}
I.~Dinur and D.~Steurer.
\newblock Analytical approach to parallel repetition.
\newblock Technical report, arXiv:1305.1979, 2013.
\newblock To appear in STOC'14.

\bibitem[EPR35]{epr}
A.~Einstein, B.~Podolsky, and N.~Rosen.
\newblock Can quantum-mechanical description of physical reality be considered
  complete?
\newblock \emph{Physical Review}, 47:777--780, 1935.

\bibitem[Fei91]{Fei91SCT}
U.~Feige.
\newblock {On the success probability of two provers in one-round proof
  systems}.
\newblock In \emph{Proc. 6th IEEE Structure in Complexity Theory}, pages
  116--123. 1991.

\bibitem[FK00]{Feige2000}
U.~Feige and J.~Kilian.
\newblock {Two-Prover Protocols---Low Error at Affordable Rates}.
\newblock \emph{SIAM J. Comput.}, 30(1):324, 2000.

\bibitem[FL92]{FeiLov92STOC}
U.~Feige and L.~Lov{\'{a}}sz.
\newblock Two-prover one-round proof systems: Their power and their problems.
\newblock In \emph{Proc. 24th STOC}, pages 733--744. 1992.

\bibitem[FRS88]{FortnowRS88onthe}
L.~Fortnow, J.~Rompel, and M.~Sipser.
\newblock On the power of multi-prover interactive protocols.
\newblock In \emph{Theoretical Computer Science}, pages 156--161. 1988.

\bibitem[FV02]{FeigeV02parallel}
U.~Feige and O.~Verbitsky.
\newblock Error reduction by parallel repetition -- a negative result.
\newblock \emph{Combinatorica}, 22(4):461--478, 2002.

\bibitem[Has09]{Hastings09superadditivity}
M.~B. Hastings.
\newblock Superadditivity of communication capacity using entangled inputs.
\newblock \emph{Nature Physics}, 5(4):255--257, 2009.

\bibitem[HJS{\etalchar{+}}96]{hjsww:capacity}
P.~Hausladen, R.~Jozsa, B.~Schumacher, M.~Westmoreland, and W.~K. Wootters.
\newblock Classical information capacity of a quantum channel.
\newblock \emph{Phys. Rev. A}, 54:1869, 1996.

\bibitem[Hol09]{Hol09}
T.~Holenstein.
\newblock Parallel repetition: Simplification and the no-signaling case.
\newblock \emph{Theory of Computing}, 5(1):141--172, 2009.

\bibitem[HR09]{HaenggiR09}
E.~H\"anggi and R.~Renner.
\newblock Device-independent quantum key distribution with commuting
  measurements.
\newblock Technical report, arXiv:1009.1833, 2009.

\bibitem[HW94]{HausladenW94pgm}
P.~Hausladen and W.~K. Wootters.
\newblock A `pretty good' measurement for distinguishing quantum states.
\newblock \emph{J. Modern Optics}, 41(12):2385--2390, 1994.

\bibitem[HW08]{HaydenW08additivity}
P.~Hayden and A.~Winter.
\newblock Counterexamples to the maximal $p$-norm multiplicativity conjecture
  for all $p>1$.
\newblock \emph{Comm. Math. Phys.}, 284(1):263--280, 2008.

\bibitem[IV12]{IV12}
T.~Ito and T.~Vidick.
\newblock A multi-prover interactive proof for {NEXP} sound against entangled
  provers.
\newblock In \emph{Proc. 53rd FOCS}, pages 243--252. IEEE Computer Society,
  2012.

\bibitem[JPPG{\etalchar{+}}10]{JungePPVW10}
M.~Junge, C.~Palazuelos, D.~P{\'{e}}rez-Garc{\'{i}}a, I.~Villanueva, and M.~M.
  Wolf.
\newblock Operator space theory: A natural framework for {Bell} inequalities.
\newblock \emph{Physical Review Letters}, 104:170405, 2010.

\bibitem[JPY13]{JainPY13parallel}
R.~Jain, A.~Pereszl\'enyi, and P.~Yao.
\newblock A parallel repetition theorem for entangled two-player one-round
  games under product distributions.
\newblock Technical report, arXiv:1311.6309, 2013.
\newblock To appear in CCC'14.

\bibitem[Kit86]{Kittaneh86lp}
F.~Kittaneh.
\newblock Inequalities for the {S}chatten $p$-norm. iv.
\newblock \emph{Comm. Math. Phys.}, 106(4):581--585, 1986.

\bibitem[KR10]{KempeR10noparallel}
J.~Kempe and O.~Regev.
\newblock No strong parallel repetition with entangled and non-signaling
  provers.
\newblock In \emph{Proc. 25th IEEE Conf. on Computational Complexity (CCC'10)},
  pages 7--15. IEEE Computer Society, Washington, DC, USA, 2010.

\bibitem[KV11]{KV11parallel}
J.~Kempe and T.~Vidick.
\newblock Parallel repetition of entangled games.
\newblock In \emph{Proc. 43rd STOC}, pages 353--362. 2011.

\bibitem[LW13]{Leung13emb}
D.~Leung and B.~Wang.
\newblock Characteristics of universal embezzling families.
\newblock Technical report, arXiv:1311.6842, 2013.

\bibitem[MPA11]{MPA11}
L.~Masanes, S.~Pironio, and A.~Ac\'{i}n.
\newblock Secure device-independent quantum key distribution with causally
  independent measurement devices.
\newblock \emph{Nature Communications}, 2(238):7, 2011.

\bibitem[Pis03]{pisierbook}
G.~Pisier.
\newblock \emph{Introduction to Operator Space Theory}.
\newblock Cambridge University Press, 2003.

\bibitem[Rao08]{Rao08parallel}
A.~Rao.
\newblock Parallel repetition in projection games and a concentration bound.
\newblock In \emph{Proc. 40th STOC}, pages 1--10. ACM, 2008.

\bibitem[Raz98]{Raz98}
R.~Raz.
\newblock {A parallel repetition theorem}.
\newblock \emph{SIAM J. Comput.}, 27:763--803, 1998.

\bibitem[Raz08]{Raz08}
R.~Raz.
\newblock {A Counterexample to Strong Parallel Repetition}.
\newblock In \emph{Proc. 49th FOCS}, pages 369--373. 2008.

\bibitem[RR12]{RazR12projection}
R.~Raz and R.~Rosen.
\newblock A strong parallel repetition theorem for projection games on
  expanders.
\newblock In \emph{Proc. 27th IEEE Conf. on Computational Complexity (CCC'12)},
  pages 247--257. 2012.

\bibitem[TFKW13]{TomamichelFKW13monogamy}
M.~Tomamichel, S.~Fehr, J.~Kaniewski, and S.~Wehner.
\newblock A monogamy-of-entanglement game with applications to
  device-independent quantum cryptography.
\newblock \emph{New Journal of Physics}, 15(10):103002, 2013.

\bibitem[Ver94]{Verbitsky}
O.~Verbitsky.
\newblock {Towards the parallel repetition conjecture}.
\newblock \emph{Proceedings of IEEE 9th Annual Conference on Structure in
  Complexity Theory}, pages 304--307, 1994.

\bibitem[vH03]{DamH03embezzlement}
W.~{van Dam} and P.~Hayden.
\newblock Universal entanglement transformations without communication.
\newblock \emph{Phys. Rev. A}, 67:060302(R), 2003.

\bibitem[Vid13]{Vidick13xor}
T.~Vidick.
\newblock Three-player entangled {XOR} games are {NP}-hard to approximate.
\newblock In \emph{Proc. 54th FOCS}. 2013.

\end{thebibliography}

\end{document}